\newcommand{\vardbtilde}[1]{\tilde{\raisebox{0pt}[0.85\height]{$\tilde{#1}$}}}
\numberwithin{equation}{section}
\newcommand{\E}{\mathbb{E}}
\newtheorem{thm}{Theorem}
\newtheorem{lem}{Lemma}
\newtheorem*{defn*}{Definition}
\theoremstyle{definition}
\theoremstyle{definition}
\theoremstyle{definition}\newtheorem{assumption}{Assumption}
\theoremstyle{remark}
\theoremstyle{definition}
\theoremstyle{definition}
\theoremstyle{definition}
\theoremstyle{definition}
\theoremstyle{definition}
\theoremstyle{definition}
\newcommand*\patchAmsMathEnvironmentForLineno[1]{%
	\expandafter\let\csname old#1\expandafter\endcsname\csname #1\endcsname
	\expandafter\let\csname oldend#1\expandafter\endcsname\csname end#1\endcsname
	\renewenvironment{#1}%
	{\linenomath\csname old#1\endcsname}%
	{\csname oldend#1\endcsname\endlinenomath}}%
\newcommand*\patchBothAmsMathEnvironmentsForLineno[1]{%
	\patchAmsMathEnvironmentForLineno{#1}%
	\patchAmsMathEnvironmentForLineno{#1*}}%
\begin{document}

\title{\huge Doubly Robust Semiparametric Difference-in-Differences Estimators with High-Dimensional Data }
\author{Yang Ning \thanks{Department of Statistical Science, Cornell University, yn265@cornell.edu}, Sida Peng\thanks{Microsoft Research, sidapeng@microsoft.com}, Jing Tao \thanks{Department of Economics, University of Washington, jingtao@uw.edu} }
\date{}

\maketitle

	\vspace*{0.1in}

\begin{abstract}
This paper proposes a doubly robust two-stage semiparametric difference-in-difference estimator for estimating heterogeneous treatment effects with high-dimensional data. Our new estimator is robust to model miss-specifications and allows for, but does not require, many more regressors than observations. The first stage allows a general set of machine learning methods to be used to estimate the propensity score. In the second stage, we derive the rates of convergence for both the parametric parameter and the unknown function under a partially linear specification for the outcome equation. We also provide bias correction procedures to allow for valid inference for the heterogeneous treatment effects. We evaluate the finite sample performance with extensive simulation studies. Additionally, a real data analysis on the effect of Fair Minimum Wage Act on the unemployment rate is performed as an illustration of our method. An R package for implementing the proposed method is available on Github.\footnote{\url{https://github.com/psdsam/HDdiffindiff}} \\

Keywords: Difference-in-difference; High-dimensional data; Machine learning; Partially linear models; Two-stage regression.

JEL codes: C13, C14, C31
\end{abstract}

\maketitle

\newpage 
\section{Introduction}
This paper proposes a doubly robust two-stage semiparametric difference-in-difference estimator for estimating heterogeneous treatment effects conditional on high-dimensional covariates. The difference-in-difference (DiD) design has been widely adopted in policy evaluation from academia to industry when a real experiment is expensive or infeasible. When a policy/feature only affects a fraction of the population, DiD design can identify the average treatment effect on the treated (ATT) based on observational data. It is based on the simple idea of comparing the difference in pre and post-treatment outcome of those individuals who are affected and those who are not affected by the policy/feature of interest.

A key identification assumption for the classical DiD design is the parallel trend assumption. It requires that the outcome variables for treated and non-treated individuals would have followed parallel paths over time in the absence of treatment. However, this assumption ignores the potential selection problem due to individual heterogeneity. For example, a company might want to evaluate the effect of an email marketing campaign (advertisement through email with an embedded promo link). A researcher can compare the customers' conversion rate (whether a purchase was made) before and after the campaign for a group of treated customers (click into the link) and a group of non-treated customers (did not click into the link). If existing customers are more likely to click into the link and also more likely to purchase again even without the campaign intervention, the classical DiD estimator will lead to a positive bias and exaggerate the effect of the campaign. To account for such case, \cite{Abadie2005} proposed a two-stage semiparametric estimator with the so-called conditional parallel trend assumption. In this framework, a propensity score is estimated in the first stage to explicitly account for any observed confounders that may affect both the treatment take-up as well as the outcome growth trend.

While the semiparametric DiD (semi-DiD) estimator is comprehensively used by researchers in academia and industry, three major challenges arise in practice. First, the semi-DiD estimator becomes difficult to implement when there exist too many covariates. Following the previous example, researchers may also observe customers' browsing history and may suspect customers who visited certain (unknown) websites are more likely to click into the link while also more likely to make the purchase. However, the semi-DiD estimator cannot be implemented if the number of attributes exceeds the number of observations. Therefore, researchers may be forced to select covariates based on their intuition or insights and may lead to further biases (\cite{Belloni2017}). Even when the number of observations is larger than the number of covariates, the semi-DiD estimator may still contain a large bias when too many covariates are included (\cite{Matias2019}). Second, the semi-DiD estimator is sensitive to the choice of specification for propensity score estimation. This is a similar problem for the inverse propensity score weighted (IPW) estimator and it becomes a more subtle problem if machine learning methods (e.g. random forrest, neural network, etc.) are used to predict propensity score. Third, conditional or heterogeneous treatment effects on treated (ATT) is often of interest to practitioners. While semi-DiD framework provides a way to estimate conditional or heterogeneous ATT under a vector of low-dimensional covariates, it is not clear how this framework can be extended to the high-dimensional case, (e.g. how to develop estimation and inference methods and theory).

In this paper, we propose a new estimator to solve the above three problems. Our doubly robust DiD (Dr-DiD) estimator is robust to model miss-specifications under high-dimensional covariates. We show the desired rate of convergence of our estimator can be achieved as long as either the propensity score function or the outcome equation can be approximated asymptotically at a moderate rate. Thus, a general set of machine learning methods can be used in our framework. Although diff-in-diff design is an ATT estimator, we show that the semi-DiD estimator can be extended to an augmented inverse propensity score weighted (AIPW) estimator. We show that the extended AIPW form still preserves the doubly robustness property under the parallel trend assumption.

To further incorporate high-dimensional covariates and heterogeneous treatment effects, we consider a partially linear specification in the potential outcome estimation. The partially linear form is composed of a nonparametric specification from a set of low-dimensional covariates as well as a linear parametric specification from a set of high-dimensional covariates, which provides a flexible functional form to model the potential outcome. This is a very useful specification in real world application. For example, researchers may be interested in the nonlinear relationship between the outcome variable and a set of covariates while also facing a large number of indicator variables such as age, gender and region.  

We derive the rate of convergence for our estimator as well as a de-bias procedure for inference. We show that the high-dimensional linear part of the estimator can achieve the oracle rate of convergence, while the nonparametric part maintains the nonparametric rate of convergence. With bias correction, the high-dimensional linear part can achieve normality at $\sqrt{n}$-rate, while the nonparametric part can achieve the normality at the nonparametric rate. Finally, we demonstrate the finite sample performance of our estimator in a simulation study and apply our estimator to study the effect of the Fair Minimum Wage Act on the unemployment rate using the data collect by \cite{CallawayLi2020}. We show that the heterogeneity in the effect of this policy can be explained by variations in demographics. More specifically, counties with larger population and higher median income are more likely to suffer from an increase in the unemployment rate. These findings coincide with the canonical economic theory on unemployment rate. For example, a higher median income level implies a higher substitution cost for workers currently at minimum wage and thus leads to an increase in the unemployment rate when minimum wage rises. On the other hand, regions with larger population sizes have more labor supply and thus a minimum wage raise can also lead to a surplus.

In summary, the main contributions of this work are as follows: first, we propose a doubly robust approach to estimate heterogeneous ATT conditional on covariates for DiD models that allows either the propensity scores or the model for ATT to be misspecified. Second, we propose a regularized two-stage estimation procedure for DiD models that allows (i) suitable machine learing tools to estimate the first-stage propensity socres and (ii) high-dimensional covariates and nonparametric specification for the heterogeneous ATT in the second-stage. Third, we provide a novel approach to simultaneously correct the biases due to both stages and provide a novel statistical inference procedure based on the de-biased estimator. Finally, as a useful byproduct, we derive novel estimation and inference methods for a partially linear model for both the high-dimensional parametric parameter and the nonparametric function. 

\subsection{Related Literature}
This paper is related to the vast literature on robust estimation and inference for treatment effects models; see for example, \cite{robins1994estimation}, \cite{tan2006distributional}, \cite{chen2008semiparametric}, \cite{graham2012inverse}, \cite{OkuiTan2012doubly}, \cite{farrell2015robust}, \cite{vermeulen2015bias}, \cite{ogburn2015doubly}, \cite{Belloni2017}, \cite{lee2017doubly}, \cite{Chernozhukov2016}, \cite{sloczynskiWooldridge2018}, \cite{kennedy2019robust} and \cite{Tan2020} among many others. Our work is particularly closely related to a recent work independently developed by \cite{SantAnnaZhao2019}. Both are based on the seminal framework proposed in \cite{Abadie2005}. Our estimator complements theirs as we focus on estimation and inference for heterogeneous ATT conditional on covariates in a high-dimensional setting while \cite{SantAnnaZhao2019} focus on efficient estimation of ATT  when the dimension of covariates is fixed and is much smaller than the sample size.

This paper also contributes to the literature by connecting the widely used DiD estimator with the machine learning/ high-dimensional statistic literature. The DiD estimator has been an active research field in the economic literature, e.g. \cite{Card1994}, \cite{Abadie2005}, \cite{Athey2006}, \cite{Imai2019}, \cite{Athey2019}, \cite{Callaway2019} among others. Our paper proposes a specific DiD estimator so that high-dimensional/machine learning tools can be applied. This paper also contributes to a set of works that apply machine learning tools to casual inference. This includes \cite{chernozhukov2016locally}, \cite{Belloni2017}, \cite{semenova2017estimation}, \cite{Chernozhukov2016}, \cite{Syrgkanis2019}, \cite{fan2020estimation}, \cite{Tan2020}, etc. Our work distinguish this literature in the following two ways. First, we propose a doubly robust diff-in-diff estimator in the high-dimensional/machine learning setting that has not been studied. Second, to our best knowledge,  the doubly-robust estimators in these papers use various high-dimensional set of covariates and machine learning methods to deal with the selection into treatment. However, the ultimate parameter of interest in the second-stage is a low-dimensional subset of the covariates so traditional nonparametric estimator can apply.  By contrast, our parameter of interest in the second-stage contains both high-dimensional covariates in the parametric part and an unknown function, which brings substantial challenges for estimation and inference. We construct a new Neyman orthogonal moment condition (\cite{chernozhukov2016locally}) and propose de-biased estimators for both the parametric parameters and the nonparametric function in the second-stage to construct valid confidence intervals. 

Moreover, as useful by-products, we provide an inference method for a partially linear model for both the parametric parameter and the nonparametric function when the linear part contains high-dimensional covariates. Thus, this work is related to recent discussion in \cite{muller2015partial}, \cite{MaHuang2016}, \cite{YLC2019}, \cite{ZhuYuGuang2019high}, among others. Our paper departs from the existing papers in the following three aspects. First, our partially linear form is in the second-stage outcome equation so estimation and inference results have to take the first-stage estimators into consideration, while the existing papers focus on a one-stage regression problem. Second, the above papers propose estimators with penalized estimation in functional space. As is pointed out in \cite{shen1997methods}, this approach often leads to undesirable properties of the estimates, such as inconsistency and roughness. Moreover, such an optimization procedure is difficult to implement in practice. Therefore, we consider the extension to sieve estimation in our estimator by approximating the nonparametric function with sieves so that we carry out optimization within a dense subset of the infinite- dimensional space, which is finite-dimensional and therefore easy to work with. 
Finally, the existing literature is concerned with asymptotic theories and inference procedures for the parametric parameters only. The nonparametric function is profiled out as an infinite-dimensional nuisance parameter. This paper considers the joint asymptotic theory and inference methods when the parameters of interest are not only the parametric parameter, but also the nonparametric function. To the best of our knowledge, these results are new to the literature. We show that the parametric parameter converges to a normal distribution with a $\sqrt{n}$-rate. The parametric estimator achieves the semiparametric efficiency bound when the error term is homoskedastic,  while the functional of a nonparametric function converges to a normal distribution with a nonparametric rate. We observe that the marginal asymptotic variance for the nonparametric component is, in general, different from those derived without the high-dimensional parametric parameter, i.e., \cite{Newey1997}, \cite{BCCK2015} and \cite{ChenChristensen2015}. This result may be of independent interest to the readers. \\

\subsection{Organization of the Paper}
The paper is organized as follows. The estimator is proposed in Section 2. Rate of convergence of the estimator and inference theory are developed in Sections 3 and 4, respectively. Section 5 presents extensive simulation results to evaluate the finite sample performance. An empirical study on the effect of the Fair minimum Wage Act on the unemployment rate is presented in Section 6. Section 7 concludes.  We defer the proofs to the Appendices.

\subsection{Notation}
For a vector ${\bf x}=(x_{1},\dots,x_{d})^{\top}\in\mathbb{R}^{d}$ and $1\leq q\leq\infty$, let $\|{\bf x}\|_{q}=\left(\sum_{i=1}^{d}|x_{i}|^{q}\right)^{1/q}$, $\|{\bf x}\|_{\infty}=\max_{1\leq i\leq d}|x_{i}|$, $\|{\bf x}\|_{0}=\vert{\rm supp}({\bf x})\vert,$ where ${\rm supp}({\bf x})=\{j:x_{j}\neq0\}$ and $|a|$ is the cardinality of a set $a$. For a symmetric matrix $A$, let $\Lambda_{\max}(A)$ and $\Lambda_{\min}(A)$ be the maximum and minimum eigenvalues of $A$. For a matrix $B=[B_{jk}]$, let $\Vert B\|_{\max}=\max_{jk}\vert B_{jk}\vert,$ $\|B\|_{1}=\sum_{jk}\vert B_{jk}\vert$, $\|B\|_{2}=\sqrt{\Lambda_{\max}(B^{\top}B)}$ and $\|B\|_{\ell_{\infty}}=\max_{j}\sum_{k}|B_{jk}|$.  For any function $f:\mathcal{Z}\rightarrow\mathbb{R},$ let $\|f\|_{\infty}=\sup_{z\in\mathcal{Z}}\vert f(z)\vert$, $\|f\|_{P,2}=\sqrt{\mathbb{E}f^{2}(z)}$ and $\|f\|_{n}=\sqrt{n^{-1}\sum_{i=1}^{n}f^{2}(Z_{i})}$. We denote $I_{d}$ as the $d\times d$ identity matrix. For a set $S\subseteq\{1,\dots,d\}$,
let ${\bf x}_{S}=\{x_{j}:j\in S\}$ and $S^{c}$ be the complement of $S$. Let $S_0$ be the set of all non-zero components of $\beta_0$ and $s_0=|S_0|$. We use $\bigtriangledown_{S}f({\bf x})$ to denote the gradient of $f({\bf x})$ with respect to ${\bf x}_{S}$. Given $a,b\in\mathbb{R}$, let $a\vee b$ and $a\wedge b$ denote the maximum and minimum of $a$ and $b$. For two positive sequences $a_{n}$ and $b_{n}$, let $a_{n}\asymp b_{n}$ denote $C\leq a_{n}/b_{n}\leq C'$ for some $C,C'>0$; let $a_{n}\lesssim b_{n}$ denote $a_{n}\leq Cb_{n}$ for some constant $C>0$. Also, we write $a_{n}=O(b_{n})$ if $|a_{n}|\leq C|b_{n}|$. We use $X_{n}\rightarrow_{p}a$ for some constant $a$ if a sequence of random variables $X_{n}$ converges in probability to $a$. Similarly, if $X_{n}$ converges weakly to $X$ we write $X_{n}\rightsquigarrow X$ for some random variable $X$. For notational simplicity, we use $C$, $C'$ and $C''$ to denote generic constants, whose values can change from line to line. Let $\mathbb{E}_{n}f=\frac{1}{n}\sum_{i=1}^{n}f(X_i)$ and $\mathbb{G}_{n}f=\mathbb{E}_{n}f-\mathbb{E}f$.

A random variable $X$ is called sub-exponential if there exists some positive constant $K_{1}$ such that $\mathbb{P}(|X|>t)\leq\exp(1-t/K_{1})$ for all $t\geq0$. The sub-exponential norm of $X$ is defined as $\|X\|_{\psi_{1}}=\sup_{q\geq1}q^{-1}(\mathbb{E}|X|^{q})^{1/q}$. Similarly, a random variable $X$ is called sub-Gaussian if there exists some positive constant $K_{2}$ such that $\mathbb{P}(|X|>t)\leq\exp(1-t^{2}/K_{2}^{2})$ for all $t\geq0$. And the sub-Gaussian norm of $X$ is defined as $\|X\|_{\psi_{2}}=\sup_{q\geq1}q^{-1/2}\left(\mathbb{E}|X|^{q}\right)^{1/q}.$

\section{Doubly Robust DiD Estimator}
Denote $Y^0(i, t)$ as the potential outcome of individual $i$ at time $t$ being not treated and $Y^1(i, t)$ as the potential outcome of individual $i$ at time $t$ being treated. We cannot observe both $Y^0(i,t)$ and $Y^1(i,t)$ for the same individual, but we observe the realized outcome for individual $i$ at time $t$ as
\[Y(i, t) = D_iY^1(i, t) + (1-D_i)Y^0(i, t)\] 
\noindent where $D_i$ is the treatment status at time $t=1$. For some observed covariates $W_i = (X_i, Z_i)$, we want to learn the heterogeneous treatment effect on the treated conditional on the covariates $W_i$ such that
\begin{equation}
ATT(W_{i})=\tau_{0}(W_i):=\mathbb{E}[Y^1(i, 1) - Y^0(i, 1)|W_i, D_i = 1]. \label{def:tau_0}
\end{equation}

Our parameter of interest is different from ATT (e.g. \cite{SantAnnaZhao2019}), which is defined as
\begin{equation}
ATT=\mathbb{E}\left[Y^{1}(i,1)-Y^{0}(i,1)|D_{i}=1\right]. \label{def:tau_0_1}
\end{equation}
While it is useful to know (\ref{def:tau_0_1}), a doubly robust estimator for (\ref{def:tau_0}) conditional on the covariates could also be relevant and important in empirical applications when the parameter of interest is the heterogeneous treatment effects conditional on the covariates.

As pointed out in \cite{Abadie2005}, the conventional DiD estimator is based on the strong assumption that outcomes for treated and non-treated groups or individuals would have followed parallel paths over time in the absence of treatment. That assumption can be easily violated when differences in observed characteristics create non-parallel outcome dynamics between treated and non-treated populations.  \cite{Abadie2005} generalizes this assumption by allowing the parallel trend assumption to hold after conditioning on the covariates as follows: 
\begin{assumption} \label{Ass:parallel}
\[\mathbb{E}[Y^0(i, 1) - Y^0(i, 0)|W_i, D_i = 1] = \mathbb{E}[Y^0(i, 1) - Y^0(i, 0)|W_i, D_i = 0].\]
\end{assumption}
In addition, a full support assumption will guarantee the existence of the propensity score function.  
\begin{assumption} \label{Ass:nonzero}
With probability approaching 1, there exits a constant $c>0$ such that $\mathbb{E}[D_i=1|W_i]>c$ and $\mathbb{E}[D_i=1|W_i]<1-c$.
\end{assumption}
Together with Assumptions \ref{Ass:parallel} and \ref{Ass:nonzero}, the \cite{Abadie2005} estimand can be defined as 

\begin{equation}\label{abadie}
\mathbb{E}\left[\frac{(D_i - \mathbb{E}(D_i=1|W_i))}{\mathbb{E}(D_i=1|W_i)\mathbb{E}(D_i=0|W_i)}(Y(i, 1) - Y(i, 0)) \Big| W_i   \right].	
\end{equation}
Defining $\Delta Y_{i} := Y(i, 1) - Y(i, 0)$, we then have
\begin{equation}\label{abadie2}
\begin{split}
\mathbb{E}\left[Y^1(i, 1) -  Y^0(i, 1) |W_i, D_i = 1\right]& = \mathbb{E}\left[\frac{(D_i - \mathbb{E}(D_i=1|W_i))}{\mathbb{E}(D_i=1|W_i)\mathbb{E}(D_i=0|W_i)}\Delta Y(i) \Big| W_i   \right]\\
& =  \mathbb{E}\left[\frac{D_i\Delta Y(i)}{\mathbb{E}(D_i=1|W_i)} \Big| W_i \right] - \mathbb{E}\left[\frac{(1-D_i)\Delta Y(i)}{\mathbb{E}(D_i=0|W_i)} \Big| W_i \right].
\end{split}
\end{equation}
It is easy to see that Equation (\ref{abadie2}) is in the form of Horvitz-Thompson estimator (\cite{Horvitz1952}). As a natural extension to the IPW form estimator, we study whether a doubly robust form exists under the DiD setting and this leads to our parameters of interest as follows. Define 
\[ \Delta Y_{1i} :=Y^1(i, 1) - Y^0(i, 0), \qquad  \Phi_1(W_i) := \mathbb{E}[\Delta Y_{1i}|W_i, D_i=1],  \]
\[\Delta Y_{0 i}  : = Y^0(i, 1) - Y^0(i, 0), \qquad \Phi_0(W_i):= \mathbb{E}[\Delta Y_{0i}|W_i, D_i=0].\]
Let \[\rho_0 = \frac{D_i - \mathbb{ P}(D_i=1|W_i)}{\mathbb{ P}(D_i=1|W_i)\mathbb{E}(D_i=0|W_i)},\] 
our doubly robust estimand is defined as 
\begin{equation}\label{estimator}
\tau_{0}(W_i)=\mathbb{E}[\rho_0\left(\Delta Y_{i} - \mathbb{E}(D_i=0|W_i)\Phi_1(W_i) - \mathbb{E}(D_i=1|W_i)\Phi_0(W_i)\right) |W_i],
\end{equation}
where $\mathbb{E}[D_i=0|W_i]$, $\mathbb{E}[D_i=1|W_i]$, $\Phi_0(W_i)$ and $\Phi_1(W_i)$ are nuisance functions to be estimated from the first-stage.
\begin{lem}\label{doublerobust}
Under Assumptions \ref{Ass:parallel} and \ref{Ass:nonzero}, (i) the estimand defined in Equation (\ref{estimator}) is doubly robust in the sense that
\[\mathbb{E}[Y^1(i, 1) - Y^0(i, 1)|W_i, D_i = 1]=\mathbb{E}[\rho_0\left(\Delta Y_{i} - \mathbb{E}(D_i=0|W_i)\Phi_1(W_i) - \mathbb{E}(D_i=1|W_i)\Phi_0(W_i)\right) |W_i]\] holds provided that one of the two conditions (a) or (b) holds, even if both do not hold simultaneously:
(a) specifications $\Phi_1(W_i)$ and $\Phi_0(W_i)$ are correct,
(b) specification of $\mathbb{P}(D_i=1|W_i)$ is correct.

(ii) Let $\alpha_{0}=(\Phi_{1}(\cdot),\Phi_{0}(\cdot),\pi(\cdot))$, $\pi(W)=\mathbb{P}(D=1|W)$ and 
\[\varUpsilon(W;\alpha_{0})=\rho_0\left[\Delta Y- \left(1-\pi(W)\right)\Phi_1(W) - \pi(W)\Phi_0(W)\right].\] 
Then the moment condition $\mathbb{E}\left[\Upsilon(W_i;\alpha_{0})-\tau_{0}(W_i)|W_i=w\right]=0$
holds and the following Neyman orthogonality condition holds:
\[
\left.\partial_{r}\mathbb{E}\left[\Upsilon\left(W_i;\alpha_{0}+r(\alpha-\alpha_{0})\right)-\tau_0(W_i)|W_i=w\right]\right|_{r=0}=0.
\]

\end{lem}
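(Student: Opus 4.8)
The plan is to prove part (i) as a pure identification computation and then read off part (ii) from it. For (i) I would first replace the three nuisances on the right-hand side by generic working models $(g_1, g_0, p)$ and study the conditional functional
\[
\psi(w; g_1, g_0, p) := \mathbb{E}\Big[\tfrac{D - p(W)}{p(W)(1-p(W))}\big(\Delta Y - (1-p(W))g_1(W) - p(W)g_0(W)\big)\,\Big|\,W=w\Big],
\]
the claim being $\psi = \tau_0$ whenever (a) $g_1=\Phi_1,g_0=\Phi_0$ or (b) $p=\pi$. The first step is the pointwise decomposition $\Delta Y = D\,\Delta Y_{1} + (1-D)\,\Delta Y_{0}$ (valid since nobody is treated at $t=0$), combined with the idempotency identities $D(D-p)=D(1-p)$ and $(1-D)(D-p)=-p(1-D)$, which collapse the weighted outcome to $\tfrac{D-p}{p(1-p)}\Delta Y = \tfrac{D}{p}\Delta Y_1 - \tfrac{1-D}{1-p}\Delta Y_0$.

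Taking the conditional expectation given $W$ then uses only the definitions of the nuisances: $\mathbb{E}[D\,\Delta Y_1\mid W]=\pi\Phi_1$ and $\mathbb{E}[(1-D)\,\Delta Y_0\mid W]=(1-\pi)\Phi_0$ by iterated expectations, and $\mathbb{E}[\rho_0\mid W]=(\pi-p)/(p(1-p))$ because the subtracted model terms are $W$-measurable. This produces the closed form
\[
\psi(w; g_1, g_0, p) = \frac{\pi}{p}\Phi_1 - \frac{1-\pi}{1-p}\Phi_0 - \frac{\pi-p}{p}g_1 - \frac{\pi-p}{1-p}g_0,
\]
which is well defined because Assumption \ref{Ass:nonzero} bounds $p$ away from $0$ and $1$. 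Assumption \ref{Ass:parallel} (conditional parallel trends) supplies the identification $\tau_0(w)=\Phi_1(w)-\Phi_0(w)$, obtained by replacing $\mathbb{E}[Y^0(1)-Y^0(0)\mid W,D=1]$ with $\Phi_0$. Substituting (a) $g_1=\Phi_1,g_0=\Phi_0$ makes the coefficient of $\Phi_1$ equal to $\pi/p-(\pi-p)/p=1$ and that of $\Phi_0$ equal to $-1$, giving $\psi=\Phi_1-\Phi_0=\tau_0$; substituting (b) $p=\pi$ annihilates both $g$-terms and leaves $\Phi_1-\Phi_0=\tau_0$. This settles part (i).

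For part (ii) the moment condition $\mathbb{E}[\Upsilon(W;\alpha_0)-\tau_0(W)\mid W=w]=0$ is exactly $\psi(w;\Phi_1,\Phi_0,\pi)=\tau_0(w)$, the special case of (i) where the truth satisfies (a) and (b) at once. Neyman orthogonality I would obtain two ways. Directly: differentiate the closed form along $\alpha_0+r(\alpha-\alpha_0)$, write $\delta_p=p-\pi$, $\delta_1=g_1-\Phi_1$, $\delta_0=g_0-\Phi_0$, and evaluate at $r=0$; the $\delta_1,\delta_0$ contributions drop out because the coefficients $(\pi-p)/p$ and $(\pi-p)/(1-p)$ vanish at $p=\pi$, while the four $\delta_p$-terms have coefficients $-\Phi_1/\pi,\,-\Phi_0/(1-\pi),\,+\Phi_1/\pi,\,+\Phi_0/(1-\pi)$ that cancel in pairs. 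More conceptually, orthogonality is a corollary of double robustness itself: perturbing only $(g_1,g_0)$ keeps $p=\pi$, so the residual stays identically zero by (b) and its derivative vanishes, and perturbing only $p$ keeps the outcome models correct, so the residual stays zero by (a); since $\psi$ is smooth in the scalar values $(g_1(w),g_0(w),p(w))$ on the region isolated by Assumption \ref{Ass:nonzero}, the directional derivative is the sum of these two vanishing partials.

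The only genuine difficulty is bookkeeping: the true nuisances $(\pi,\Phi_1,\Phi_0)$ that emerge from taking conditional expectations under the data-generating law must be kept strictly separate from the plugged-in working models $(p,g_1,g_0)$ that are perturbed and differentiated. Once that distinction is maintained, both the idempotency simplification in (i) and the term-by-term cancellation in (ii) are routine.
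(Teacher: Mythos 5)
Your proposal is correct, and it takes a genuinely different route from the paper's. The paper proves part (i) by case analysis: it splits the estimand into two pieces (its Parts L1.1 and L1.2) and verifies each of the two misspecification scenarios separately through iterated-expectation computations, invoking Assumption \ref{Ass:parallel} only at the end of the second piece; part (ii) is then dispatched with ``follows from Part (i) and direct calculation,'' with no calculation shown. You instead derive a single exact closed form valid for arbitrary working models, $\psi(w;g_1,g_0,p)=\frac{\pi}{p}\Phi_1-\frac{1-\pi}{1-p}\Phi_0-\frac{\pi-p}{p}g_1-\frac{\pi-p}{1-p}g_0$, which after using parallel trends to identify $\tau_0=\Phi_1-\Phi_0$ is equivalent to the bias formula $\psi-\tau_0=\frac{\pi-p}{p}\left(\Phi_1-g_1\right)+\frac{\pi-p}{1-p}\left(\Phi_0-g_0\right)$. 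From this one identity, both robustness cases in (i) are substitutions, the moment condition in (ii) is the joint substitution, and Neyman orthogonality is immediate since the bias is a product of nuisance errors and hence of second order along $\alpha_0+r(\alpha-\alpha_0)$ (your explicit $r$-differentiation and your double-robustness-implies-orthogonality argument are both valid, the latter because $\psi$ is a smooth rational function of the scalar nuisance values on the region where Assumption \ref{Ass:nonzero} bounds $p$ away from $0$ and $1$). What your route buys: it handles all four subcases at once, supplies the orthogonality computation the paper omits, is less error-prone (the paper's case-by-case computation contains slips, e.g.\ $Y^0(i,1)$ written where $Y^0(i,0)$ is meant in Part L1.1), and the product-of-errors form even foreshadows the product-rate conditions the paper later imposes in Assumptions \ref{assum:splitting1} and \ref{assum:splitting}. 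What the paper's route buys is a step-by-step display of exactly which conditional-expectation identity and which invocation of Assumption \ref{Ass:parallel} carries each misspecification scenario. One step to keep explicit, as you did: the collapse $\rho\,\Delta Y=\frac{D}{p}\Delta Y_{1}-\frac{1-D}{1-p}\Delta Y_{0}$ requires $Y(i,0)=Y^0(i,0)$ for treated units (no one treated at baseline); the paper's definitions of $\Delta Y_{1i}$ and $\Delta Y_{0i}$ presuppose exactly the same, so your proof uses no assumption beyond the paper's.
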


Lemma \ref{doublerobust} shows that, with Assumptions \ref{Ass:parallel} and \ref{Ass:nonzero}, we can have a doubly robust estimator for $\tau_0(W_i)$ when either the regression models $\Phi_0(\cdot)$ and $\Phi_1(\cdot)$ are misspecified or the propensity score $\mathbb{P}(D_i=1|X_i)$ is misspecified. 

To model $\tau_0(\cdot)$, we consider a class of flexible high-dimensional partially linear model such that
\begin{equation}
\mathbb{E}[Y^1(i, 1) - Y^0(i, 1)|W_i, D_i = 1]=X_i^\top\beta_0+f_0(Z_i),\label{eq:pop_model}
\end{equation} 
where the linear part contains the parametric Euclidean vector $\beta_0\in\mathcal{B}\subseteq \mathbb{R}^{p}$ with $p>n$, and the nonparametric part contains an unknown function $f(\cdot): \mathcal{Z}\rightarrow \mathbb{R}$, where $\mathcal{Z}$ is a compact subset of $\mathbb{R}^{d_{z}}$. We will assume that the unknown function belongs to a smoothed function class defined in Section \ref{sec:Estimation}. 

Compare with the definition of equation (11) in \cite{Abadie2005}, we define our estimand in equation \eqref{eq:pop_model} in a partial linear form rather than approximate it with a best linear predictor. The semi-parametric structure is slightly stronger as equation (11) in \cite{Abadie2005} is satisfied if we plug in equation \eqref{eq:pop_model} and allow $g(X_k, \theta)$ to admit a partial linear specification.     

On the other hand, the partially linear specification in (\ref{eq:pop_model}) provides a flexible functional form while still allowing us to maintain the Neyman orthogonality condition when designing the estimator under the high dimensionally covariates. Theoretical properties of the semiparametric partially linear model when the dimension of $X$ is fixed and smaller than $n$ have been thoroughly discussed in the econometrics literature (\cite{engle1986semiparametric}, \cite{Robinson1988root}, \cite{ahn1993semiparametric}, \cite{donald1994series}, \cite{linton1995second}, \cite{fan1999root}, to mention only a few; see \cite{LiRacine} for a review). We complement the literature by providing new estimation and inference methods and theory when $X$ is high-dimensional. 

As a result of Lemma \ref{doublerobust} and equation \eqref{eq:pop_model}, if Assumptions \ref{Ass:parallel} and \ref{Ass:nonzero} hold, we have 
\begin{equation}
(\beta_0,f_0)=\arg\min_{(\beta\in\mathcal{B},f\in\mathcal{F})}\mathbb{E}\left[\left\{ X_{i}^{\top}\beta+f(Z_{i})-\rho_{0}\left(\Delta Y_{i}-(1-\pi(W_{i}))\Phi_{1}(W_{i})-\pi(W_{i})\Phi_{0}(W_{i})\right)\right\} ^{2}\right]. \label{eq:pop_crit}
\end{equation}
We are going to construct a two-step estimator of $(\beta_0,f_0)$ based on the sample analogue of (\ref{eq:pop_crit}), where the first-step estimator estimates $\rho_0$ and the second-step estimator estimates $(\beta_0,f_0)$. We allow the propensity score, hence $\rho_0$, to be estimated by any suitable machine learning methods as long as certain conditions in Section \ref{sec:Estimation} are satisfied.

\section{Estimation}\label{sec:Estimation}
Let $\hat{\pi}(\cdot)$, $\hat{\Phi}_1(\cdot)$ and $\hat{\Phi}_0(\cdot)$ be  nonparametric or machine learning estimators of $\pi(W_i)$, $\Phi_1(W_i)$ and $\Phi_0(W_i)$, respectively. We propose the following two-stage estimator such that
\begin{equation}\label{eq:Estimator}
(\hat{\beta}, \hat f) = \arg \min_{\beta\in\mathcal{B}, f_n\in\mathcal{F}_n} \mathbb{E}_n\left[\left(X_i^{\top}\beta + f_n(Z_i) - \hat{\rho}_i \Big(\Delta Y_i -(1-\hat{\pi}_i)\hat{\Phi}_1(W_i) - \hat{\pi}_i\hat{\Phi}_0(W_i) \Big)\right)^2\right]+\lambda \|\beta\|_1,
\end{equation}
where $f_n(\cdot)=\psi^{k_n}(\cdot)^\top\gamma_{n}$ is a sieve approximation of the unknown function $f(\cdot)\in\mathcal{F}$ with 
\[f_{0}(Z_i) = \sum_{j = 1}^{k_n} \psi_j(Z_i)\gamma_{j, n0} + r_{n}(Z_i):=f_{n0}(Z_i) +r_{ni},
\]
where $r_{ni}:=r_{n}(Z_i),i=1,\dots,n$ is a sieve approximation error that depends on the smoothness of $f_{0}$ and the sample size $n$. 

For $\alpha>0$ and  any vector $\tau=(\tau_1,\dots,\tau_{d_z})$ of $d_z$ integers, define the differential operator $D^\tau=\partial^{\tau.}/\partial z^{\tau_1}_1...z^{\tau_{d_z}}_{d_z}$, where $\tau.=\sum_{l=1}^{d_{z}}\tau_l$. For a function $g:\mathcal{Z}\rightarrow \mathbb{R}$, let 
\begin{equation}
\|g\|_{\infty,\alpha}=\max_{\tau.\leq \underline{\alpha}}\sup_z|D^{\tau}g(z)|+\max_{\tau.=\underline{\alpha}}\sup_{z,z'}\frac{|D^{\tau}g(z)-D^{\tau}g(z')|}{\|z-z'\|^{\alpha-\underline{\alpha}}}.\label{def:sup_norm}
\end{equation}
Let $C^{\alpha}_M(\mathcal{Z})$ be the set of all continuous functions $g:\mathcal{Z}\rightarrow \mathbb{R}$ with $\|g\|_{\infty,\alpha}\leq M$. We assume that $\mathcal{F}\subseteq C^{\alpha}_M(\mathcal{Z})$. Let $\psi^{k_{n}}(Z_{i})=\left(\psi_{1}(Z_{i}),\dots,\psi_{k_{n}}(Z_{i})\right)^\top$ be a $k_n\times1$ vector of basis functions, and we use the notation $\mathcal{F}_{n}$ to represent the space of sieve functions. Define the projection of $X_{ij}, i=1,\dots,n$ onto $\mathcal{F}_{n}$ as
\[\Pi_n(X_{ij}|Z)= \arg\min_{h^* \in \mathcal{F}_{n}}\|X_ij-h^*\|^2_{n} =\psi^{k_n}(Z)(\psi^{k_n}(Z)^\top\psi^{k_n}(Z))^{-1}\psi^{k_n}(Z)^\top X_ij, j=1,\dots,p\]
and $\Pi_{n,X_i|Z_i}:=(\Pi_n(X_{i1}|Z_i),\dots, \Pi_n(X_{ip}|Z_i))^{\top}, i=1,\dots,n.$

Next, define 
\[\hat S_i := \hat{\rho}(W_i) \Big(\Delta Y_i -(1-\hat{\pi}(W_i))\hat{\Phi}_1(W_i) - \hat{\pi}(W_i)\hat{\Phi}_0(W_i) \Big),\]
\[S_i := \rho_0(W_i) \Big(\Delta Y_i -(1-\pi(W_i))\Phi_1(W_i) - \pi(W_i)\Phi_0(W_i) \Big),\]
$\Pi_{n, X|Z} :=P_{Z}\bm{X}$, $P_{Z}:=\Psi_{n}\left(\Psi_{n}^{\top}\Psi_{n}\right)^{-1}\Psi_{n}^{\top}$ and $\tilde{\bm{X}} := \bm{X} - \Pi_{n, X|Z} $, where 
 $\Psi_{n}:=\Psi_{n}(\bm{Z})=\left(\psi^{k_{n}}(Z_{1})^{\top},\dots,\psi^{k_{n}}(Z_{n})^{\top}\right)^{\top}$ is a $n\times k_n$ matrix. Let $\bar{\bm{X}}=\bm{X}-\Pi_{X|Z}=(X_1^{\top}-\Pi^{\top}_{X_1|Z_1},...,X_n^{\top}-\Pi^{\top}_{X_n|Z_n})^{\top},$ where $\Pi_{X|Z}=\mathbb{E}[X_i|Z_i]$. 

Define $\bm{\eta}_{n}=\bm{S}-\bm{X}\beta_{0}-\Psi_{n}\gamma_{n0}=\bm{\epsilon}+\bm{r_{n}}$, where $\bm{\epsilon}=(\epsilon_{1},\dots,\epsilon_{i},\dots,\epsilon_{n})^{\top}$, $\epsilon_{i}=\frac{D_{i}}{\pi_{i}}\epsilon_{1i}+\frac{1-D_{i}}{1-\pi_{i}}\epsilon_{0i}$, $\epsilon_{0i}=\Delta Y_{0i}-\Phi_{0}(W_i)$, $\epsilon_{1i}=\Delta Y_{1i}-\Phi_{1}(W_i)$, $\bm{r_{n}} = (r_{n1},\dots, r_{ni}, \dots, r_{nn})^{\top}$. We have the following decomposition:
\[\|\bm{X}\beta_0+f_{n0}\|_{n}^2 = \|\tilde{\bm{X}}\beta_0\|_{n}^2 + \|\Pi_{n,X|Z}\beta_0 + f_{n0}\|_{n}^2.\]

\begin{assumption}\label{assum:data} (i) The data are i.i.d. from the distribution of $(Y^1(i, 1), Y^1(i, 0),D_i,W_i)$ conditional on $t=1$, while conditional on $t=0$, the data are i.i.d. from the distribution of $(Y^0(i, 1), Y^0(i, 0),D_i,W_i)$; (ii) $\mathcal{W}$ is compact with nonempty interior; (iii) $(\beta_0,f_0)\in\mathcal{B}\times\mathcal{F}\subseteq \mathbb{R}^{p}\times \mathcal{C}^{\alpha}_{M}(\mathcal{Z})$ is the only $(\beta,f)$ that satisfies (\ref{eq:pop_model}), where $\alpha\geq d_{z}/2$; (iv) $\mathbb{E}[f_0(Z)|X]$ does not belong to the linear span of $X$.
\end{assumption}

\begin{assumption}\label{assum:error}(i) The error terms $\epsilon_{1i}\in \mathbb{R}$ and $\epsilon_{0i}\in \mathbb{R}$ are independently distributed with $\mathbb{E}[\epsilon_{1i}|W_i]=0$ and $\mathbb{E}[\epsilon_{0i}|W_i]=0$; (ii) $\max_{1\leq i\leq n}\sup_{w\in \mathcal{W}}\mathbb{E}[|\epsilon_{i}|^{r_\epsilon}|W_i=w]\leq \mathcal{C}_\epsilon$ for some $r_\epsilon> 4$ and a positive universal constant $\mathcal{C}_\epsilon$.
\end{assumption}
In Assumption \ref{assum:data}, (i) is Assumption 3.3 in \citep{Abadie2005}. We follow the same sampling scheme to consider repeated cross sections; (ii) can be relaxed if we add a continuous nonnegative weight function in the definition of $\|\cdot\|_{\infty,\alpha}$ in (\ref{def:sup_norm}) \citep{freyberger2019practical}; and (iii) and (iv) are standard identification conditions for a partially linear model. 
Assumption \ref{assum:error} allows the error terms to be non-identically distributed and be conditionally heteroskedastic. One can replace it by the stronger sub-Gaussian assumption often used in the literature. 

As is standard in the literature for high-dimensional data, we introduce the restricted eigenvalue condition for as
\[
\Lambda^{2}_{\bar{X}}(s_0):=\min_{\delta\in\mathbb{R}^{p}\backslash\{0\},\|\delta_{S_{0}^{C}}\|_{1}\leq3\sqrt{s_{0}}\|\delta_{S_{0}}\|_{2}}\frac{\delta^{\top}\mathbb{E}\left[\bar{X}_{i}\bar{X}_{i}^{\top}\right]\delta}{\|\delta_{S_{0}}\|_{2}^{2}}>0
\]
Let $\Sigma_{\bar{X}}=\mathbb{E}[\bar{X}_{i}\bar{X}_{i}^{\top}]$, $\Sigma_{\tilde{X}}=\mathbb{E}[\tilde{X}_{i}\tilde{X}_{i}^{\top}]$ and $\Sigma_{\Pi}=\mathbb{E}[\Pi_{X_{i}|Z_{i}}\Pi_{X_{i}|Z_{i}}^{\top}]$.

\begin{assumption}\label{asseigen}(i) For each $i=1,....,n$, the covariates $X_i=\mathbf{x}$ is a sub-Gaussian vector such that for any vector $v\in\mathbb{R}^p$, $v^{\top}\mathbf{x}$ is sub-Gaussian with $\sup_{v\in\mathbb{R}^{p}:\|v\|=1}\|v^{\top}\mathbf{x}\|_{\psi_{2}}\leq K_{X}$;  
(ii) there exists constant $\mathcal{C}_{\bar{x}}>0$, 
such that $\Lambda^{2}_{\bar{X}}(s_0)>\mathcal{C}_{\bar{x}}$;
(iii) there exist constants $\mathcal{C}_{\Sigma_{\bar{X}}}>0$, and $\mathcal{C}_{\Sigma_{\Pi}}>0$ such that $\mathcal{C}_{\Sigma_{\bar{X}}}<\Lambda_{\min}(\Sigma_{\bar{X}})\leq \Lambda_{\max}(\Sigma_{\bar{X}})<1/\mathcal{C}_{\Sigma_{\bar{X}}}$ and $\mathcal{C}_{\Sigma_{\Pi}}<\Lambda_{\min}(\Sigma_{\Pi})\leq\Lambda_{\max}(\Sigma_{\Pi})<1/\mathcal{C}_{\Sigma_{\Pi}}$.
\end{assumption}

Assumption \ref{asseigen} is standard in the literature: (i) can be relaxed if we replace it by some uniform moment conditions discussed in \cite{caner2018asymptotically}; (ii) and (iii) restrict the eigenvalues. In particular, (ii) is a restricted eigenvalue condition. 

\begin{assumption}\label{approx error} There are finite constants $c_{k_n}$ and $\ell_{k_n}$ such that for each $f\in \mathcal{F}$ and for each $n$ and $k_n$, we have 
\[
\|r_n\|_{P,2}=\sqrt{\int_{z\in \mathcal{Z}}r^{2}_{n}(z)dF(z)}\leq c_{k_n}, \|r_n\|_{\infty}=\sup_{z\in \mathcal{Z}}|r_{n}(z)|\leq \ell_{k_n}c_{k_n}.
\]
\end{assumption}

\begin{assumption}\label{assfeigen}(i) The density of $Z_i$ is bounded and bounded away from zero. 
For every $k_{n}$, there exist a constant $\mathcal{C}_z>0$, which does not depend on $k_n$, such that  $\lambda_{\min}\left(\mathbb{E}[\psi^{k_{n}}(Z_{i})\psi^{k_n}(Z_{i})^{\top}]\right)>\mathcal{C}_z$; (ii) there is a sequence of constant $\xi_0(k_n)$ satisfying that $\sup_{z}\|\psi^{k_n}(z)\| \leq \xi_0(k_n)$, $\xi_{0}(k_{n})^{2}\log k_{n}/n=o_p(1)$, and $k_{n}\xi_{0}(k_{n})^{2}\log p/n=O_{p}(1)$; (iii)$\|\mathbb{E}[\tilde{X}_{i}\tilde{X}_{i}^{\top}-\bar{X}_{i}\bar{X}_{i}^{\top}]\|_{\infty}=O(\sqrt{\log p/n})$.
\end{assumption}

Assumption \ref{approx error} is Assumption A.3 in \cite{BCCK2015}. Note that $\mathcal{F}$ is a set of functions $f$ in $C^{\alpha}_M(\mathcal{Z})$, thus, $\|f\|_{\infty,\alpha}$ is bounded from above uniformly over all $f\in \mathcal{F}$. Then for instance, $c_{k_n}=O(k^{-\alpha/d_{z}})$ for the polynomial series and $c_{k_n}=O(k^{-(\alpha\wedge \alpha_0)/d_{z}})$ for splines with order $\alpha_0$. Assumption \ref{assfeigen} (i) and the first two conditions in (ii) are also standard in the literature. Given this assumption, it is without loss of generality to normalize $\mathbb{E}[\psi^{k_n}(Z_{i})\psi^{k_n}(Z_{i})^{\top}]=I_{k_n}$. The condition $k_{n}\xi_{0}(k_{n})^{2}\log p/n=O_{p}(1)$ in Assumption \ref{assfeigen}(ii) is new. It is a mild condition on the relationship between $p$ and $k_n$. Assumption \ref{assfeigen}(iii) is a smoothness condition on the approximation error of the projection $\Pi_{n,X|Z}$ to $\Pi_{X|Z}$.

\begin{assumption}\label{assum:splitting1} 
\begin{itemize}
	\item[(i)] $\sup_{w}|1/\hat{\pi}(w)-1/\pi(w)|=O_p(1)$;
	\item[(ii)] $\sup_{w}|\hat{\Phi}_1(w)-{\Phi}_1(w)|=O_p(1)$;
	\item[(iii)] $\sup_{w}|\hat{\Phi}_0(w)-{\Phi}_0(w)|=O_p(1)$;
	\item[(iv)]$\mathbb{E}_n((1/\hat{\pi}(W_i)-1/\pi(W_i))^2 \cdot (\hat{\Phi}_1(W_i)-{\Phi}_1(W_i))^2) = O_p(\log p/n \vee k_n\log k_n/n)$;
	\item[(v)] $\mathbb{E}_n((1/\hat{\pi}(W_i)-1/\pi(W_i))^2 \cdot (\hat{\Phi}_0(W_i)-{\Phi}_0(W_i))^2) = O_p(\log p/n \vee k_n\log k_n/n)$.
\end{itemize}
 \end{assumption}

Assumption \ref{assum:splitting1} imposes moderate conditions on the first stage approximations of the nuisance functions ${\pi}(W_i)$, ${\Phi}_1(W_i)$ and ${\Phi}_0(W_i)$. Only the interaction terms between ${\pi}(W_i)$ and ${\Phi}_1(W_i)$ or ${\pi}(W_i)$ and ${\Phi}_0(W_i)$ are required to converge at a mild rate $O_p(\log p/n \vee k_n\log k_n/n)$. This demonstrates the double robustness properties of our estimator such that when either ${\pi}(W_i)$ or $({\Phi}_1(W_i), {\pi}(W_i))$ are correctly specified, the desired rate of convergence in Theorem \ref{thm1} can be achieved. As pointed out in \cite{Chernozhukov2016}, the benefit of using sample-splitting is that it makes the entropy condition become very weak, allowing machine learning methods (e.g. random forest, boosted trees, deep neural nets, and their aggregated and hybrid versions) to be applied to estimate the functions $\hat{\pi}(W_i)$, $\hat{\Phi}_1(W_i)$ and $\hat{\Phi}_0(W_i)$. One can provide more primitive conditions to verify these rates for each given machine learning method of chosen.

\begin{assumption}\label{assum:chooselambda}
We choose $\lambda$, $k_n$, and $R$ satisfying the following: (i) $\lambda\gtrsim \sqrt{\log p/n}$; 
(ii) $2\lambda^{2}s_{0}/\Lambda^{2}_{\bar{X}}(s_0)\lesssim R^{2}\lesssim \lambda$ ; 
and (iii) $R^2 = \min\left(\ell_{k_{n}}^{2}c_{k_{n}}^{2}k_{n}/n, \xi_{0}^{2}(k_{n})c^{2}_{k_n}/n\right) +k_{n}/n$. 
\end{assumption}

\begin{thm}\label{thm1} Suppose that Assumptions \ref{Ass:parallel}-\ref{assum:chooselambda} hold. Then with probability approaching 1,
\[\|\hat{\beta}-\beta_0\|_1=O_p(\lambda s_0) \qquad \text{and} \qquad \|\hat{f}-f_0\|_{P,2}=O_p(R).\]
\end{thm}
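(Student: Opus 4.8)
The plan is to run a penalized least-squares (Lasso) analysis adapted to the two-stage, partially-linear, sieve structure. First I would write $\hat f=\psi^{k_n}(\cdot)^\top\hat\gamma_n$ and set $\hat\delta=\hat\beta-\beta_0$, $\hat\theta=\hat\gamma_n-\gamma_{n0}$, and use the decomposition $\bm\eta_n=\bm S-\bm X\beta_0-\Psi_n\gamma_{n0}=\bm\epsilon+\bm r_n$ together with the first-stage error $\Delta_i:=\hat S_i-S_i$. Comparing the optimizer $(\hat\beta,\hat f)$ against the infeasible sieve point $(\beta_0,f_{n0})$ in (\ref{eq:Estimator}) and expanding the square yields the basic inequality $\|\bm v\|_n^2+\lambda\|\hat\beta\|_1\le 2\mathbb{E}_n[v_i(\epsilon_i+r_{ni}+\Delta_i)]+\lambda\|\beta_0\|_1$, where $v_i=X_i^\top\hat\delta+\psi^{k_n}(Z_i)^\top\hat\theta$. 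The projection geometry of the sieve gives the Pythagorean split $\|\bm v\|_n^2=\|\tilde{\bm X}\hat\delta\|_n^2+\|\Pi_{n,X|Z}\hat\delta+\Psi_n\hat\theta\|_n^2$, separating the linear and nonparametric contributions, which I treat separately in the steps below.

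Second, I would bound the three cross-term contributions on the right-hand side. For the noise, sub-Gaussianity of $X$ (Assumption \ref{asseigen}(i)) with the moment condition on $\epsilon$ (Assumption \ref{assum:error}) gives, by a union bound over the $p$ coordinates, $\|\mathbb{E}_n[X_i\epsilon_i]\|_\infty=O_p(\sqrt{\log p/n})$ and an analogous bound for the $k_n$ sieve scores $\mathbb{E}_n[\psi^{k_n}(Z_i)\epsilon_i]$; the sieve approximation error $r_n$ is handled by Assumption \ref{approx error}. With the choice $\lambda\gtrsim\sqrt{\log p/n}$ (Assumption \ref{assum:chooselambda}(i)), the $X$-part of the noise is absorbed into $\frac{\lambda}{2}\|\hat\delta\|_1$. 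The genuinely delicate piece is the first-stage contribution $\mathbb{E}_n[v_i\Delta_i]$, which I expect to be the main obstacle: here I would invoke the Neyman orthogonality of Lemma \ref{doublerobust}(ii), so that the first-order effect of the nuisance errors cancels and only the product (second-order) terms survive, which the mild rate conditions in Assumption \ref{assum:splitting1}(iv)--(v) control at order $O_p(\log p/n\vee k_n\log k_n/n)$. Bounding $\mathbb{E}_n[\tilde X_i\Delta_i]$, $\mathbb{E}_n[\psi^{k_n}(Z_i)\Delta_i]$ and $\mathbb{E}_n[\Delta_i^2]$ uniformly, together with Cauchy--Schwarz, reduces this term to something dominated by $\lambda\|\hat\delta\|_1$ plus an $O_p(R^2)$ remainder.

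Third, combining these bounds in the basic inequality and using $\|\bm v\|_n^2\ge 0$ yields the relaxed cone condition $\|\hat\delta_{S_0^c}\|_1\le 3\|\hat\delta_{S_0}\|_1+O_p(\ldots)$, which by $\|\hat\delta_{S_0}\|_1\le\sqrt{s_0}\|\hat\delta_{S_0}\|_2$ places $\hat\delta$ (approximately) in the cone of the restricted-eigenvalue condition $\Lambda^2_{\bar X}(s_0)$. To exploit this I must transfer the population restricted eigenvalue from $\Sigma_{\bar X}=\mathbb{E}[\bar X_i\bar X_i^\top]$ to the empirical partialled-out Gram matrix $\tilde{\bm X}^\top\tilde{\bm X}/n$: I would first use Assumption \ref{assfeigen}(iii) to pass from $\mathbb{E}[\tilde X_i\tilde X_i^\top]$ to $\mathbb{E}[\bar X_i\bar X_i^\top]$, and then a sub-Gaussian matrix-deviation argument under the scaling $k_n\xi_0(k_n)^2\log p/n=O_p(1)$ (Assumption \ref{assfeigen}(ii)) to pass from the empirical to the population Gram. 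This gives $\|\tilde{\bm X}\hat\delta\|_n^2\gtrsim\Lambda^2_{\bar X}(s_0)\|\hat\delta_{S_0}\|_2^2$, and since $\|\tilde{\bm X}\hat\delta\|_n^2\le\|\bm v\|_n^2$ is controlled by the right-hand side, solving the resulting quadratic inequality in $\|\hat\delta\|_2$ delivers $\|\hat\delta\|_1=O_p(\lambda s_0)$, the first conclusion.

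Finally, for the nonparametric rate I would return to the Pythagorean split. The sieve block satisfies $\|\Pi_{n,X|Z}\hat\delta+\Psi_n\hat\theta\|_n^2\le\|\bm v\|_n^2=O_p(R^2)$ by the already-established bound on the right-hand side of the basic inequality. Subtracting $\|\Pi_{n,X|Z}\hat\delta\|_n$, which is $O_p(\|\hat\delta\|_2)$ via the eigenvalue control on $\Sigma_\Pi$ in Assumption \ref{asseigen}(iii), isolates $\|\Psi_n\hat\theta\|_n=O_p(R)$. Converting the empirical sieve norm to the population $\|\cdot\|_{P,2}$ norm through the sieve Gram eigenvalue condition (Assumption \ref{assfeigen}(i)--(ii)) and adding the deterministic approximation error $\|r_n\|_{P,2}\le c_{k_n}\lesssim R$ then gives $\|\hat f-f_0\|_{P,2}\le\|\hat f-f_{n0}\|_{P,2}+\|r_n\|_{P,2}=O_p(R)$. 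The balance required among $\lambda$, $s_0$, $k_n$ and $R$ in Assumption \ref{assum:chooselambda}(ii)--(iii) is exactly what ensures the linear-part fluctuation $\lambda^2 s_0/\Lambda^2_{\bar X}(s_0)$, the sieve variance $k_n/n$, and the approximation error all sit at or below the target order $R^2$.
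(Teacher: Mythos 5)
Your outline follows the paper's own architecture almost step for step: the same basic inequality against $(\beta_0,f_{n0})$, the same orthogonal split of $\|v\|_n^2$ into $\|\tilde{\bm{X}}\hat\delta\|_n^2$ and $\|\Pi_{n,X|Z}\hat\delta+\Psi_n\hat\theta\|_n^2$, dual-norm bounds on the noise, Cauchy--Schwarz on the first-stage term, the cone plus restricted-eigenvalue transfer, and the final empirical-to-population conversion. But there is a genuine gap exactly where the paper spends its effort: \emph{localization}. All of the paper's uniform empirical-process events $\mathcal{T}_1$--$\mathcal{T}_7$ are established only over the localized set $\mathcal{M}_2(R)$; in particular the norm-equivalence event $\mathcal{T}_1$ (Lemma \ref{lem4}), which you need in order to pass from $\|\cdot\|_n$ to $\|\cdot\|_{P,2}$ and to invoke the population restricted eigenvalue $\Lambda^2_{\bar X}(s_0)$, is proved using the bounds $\|\beta\|_1\le R^2/\lambda$ and $\|f_n\|_{P,2}\lesssim R$ that \emph{define} membership in $\mathcal{M}_2(R)$. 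You apply these uniform bounds directly at $(\hat\beta,\hat f)$, but membership of $(\hat\beta-\beta_0,\hat f-f_{n0})$ in that set is essentially what the theorem asserts, so your ``bounding uniformly'' is circular as stated --- and note the sieve block $\hat\gamma_n$ is unpenalized and $k_n$-dimensional, so there is no a priori control on it at all. The paper breaks this circle with a convexity interpolation, used twice: $\tilde\gamma_n=t\hat\gamma_n+(1-t)\gamma_{n0}$ with $t=4R/(4R+\|\hat f-f_{n0}\|_{P,2})$, which forces $\tilde f-f_{n0}$ into the ball $\mathcal{M}_1(R)$ by construction, and a second interpolation with $\tilde t=R/(R+\tau(\hat\beta-\beta_0,\tilde f-f_{n0},R))$ inside Lemma \ref{lemR}, where a self-improving bound $\tau\le R/2$ is proved at the interpolated point and then undone via convexity of the objective. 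Your proof needs this device (or a fully self-normalized substitute for $\mathcal{T}_1$) to be valid.

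The second gap is your treatment of $\mathbb{E}_n[v_i(\hat S_i-S_i)]$. Invoking ``Neyman orthogonality of Lemma \ref{doublerobust}(ii)'' is the right heuristic, but orthogonality is a statement about population derivatives at the true nuisances; it does not by itself control empirical averages in which $\hat\pi,\hat\Phi_1,\hat\Phi_0$ are random functions correlated with the estimation sample. The paper makes the first-order pieces of $\hat S_i-S_i$ (the terms $E2$ and $E3$ in Lemma \ref{lem_S}) genuinely conditionally mean zero via \emph{sample splitting} --- nuisances fitted on one half, $(\hat\beta,\hat f)$ on the other --- so that, e.g., $\mathbb{E}[(1/\hat\pi_i-1/\pi_i)\tilde X_{ij}D_i\epsilon_{1,i}]=0$, after which Bernstein-type bounds give the $\sqrt{\log p/n}$ and $\sqrt{k_n\log k_n/n}$ rates (Lemmas \ref{lem_S} and \ref{lem_Phi}) requiring only the $O_p(1)$ sup-consistency of Assumption \ref{assum:splitting1}(i)--(iii); only the genuine second-order products ($E4$) consume the product-rate conditions \ref{assum:splitting1}(iv)--(v) that you cite. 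Without the splitting step your mean-zero claims fail, and one would instead need entropy conditions on the machine-learning nuisance classes --- precisely what this construction is designed to avoid. With these two repairs the rest of your budget closes as you describe, since Assumption \ref{assum:chooselambda}(ii) ensures $\lambda^2 s_0/\Lambda^2_{\bar X}(s_0)\lesssim R^2\lesssim\lambda$, matching the paper's conclusion.
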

Theorem \ref{thm1} establishes the rate of convergence for our estimator. We show that for the parametric estimator, similar to the one for the high-dimensional linear regressors (e.g., Theorem 6.1 in \cite{buhlmann2011statistics}), its convergence rate depends on the rate of the tuning parameter $\lambda$ and the level of sparsity $s_0$. 
When $\lambda=O(\sqrt{\log p/n})$, we have $\|\hat{\beta}-\beta_0\|_{1}=O_p(s_0\sqrt{\log p/n})$, which is the same rate in lasso regression for high-dimensional linear models without unknown functions. For the nonparametric estimator, the convergence rate maintains the same rate as the one obtained in nonparametric regressor models (e.g., Theorem 4.1 in \cite{BCCK2015}), which depends on the order of basis function $k_{n}$ and the approximation error. Unlike the results in the literature of semiparametric partially linear model when the dimension of $X$ is much smaller than sample size $n$, the convergence rate of the parametric estimator is slower than $\sqrt{n}$ due to high dimensionality of the model. It makes the inference problem challenging. As we will show in Section \ref{sec: Inference}, the asymptotic variance of the nonparametric estimator will contain a projection term that reflects the effect of the high-dimensional parametric estimation.

\section{Asymptotic Inference} \label{sec: Inference}
In many applications, practitioners are not only interested in the estimation of the treatment effect but also the uncertainty quantification of the estimated treatment effect. The latter provides the confidence of the treatment effect estimation and is a routine procedure in most causal inference problems. While the inferential properties under high-dimensional linear/generalized  linear models have been extensively investigated in the recent literature  \citep{zhang2011confidence,javanmard2013confidence,van2013asymptotically,belloni2013honest,ning2014general,ning2017likelihood,cai2015confidence,neykov2018unified,gold2020inference}, the asymptotic inference under the DiD design has not been studied, especially in the partially linear model specification. In this section, we consider how to construct confidence intervals for the parametric component $\beta_0$ and the nonparametric component $f_0(z)$ for given $z\in \mathcal{Z}$. 

Consider the inference problem for a linear combination of $\beta_0$, say $\xi^{\top}\beta_0$, for a known vector $\xi\in \mathbb{R}^p$. For instance, if we take $\xi$ as the unit basis vector $e_j=(0,...,0,1,0,...0)$ with the $j$th position being 1 and 0 otherwise, then the linear functional reduces to $\xi^{\top}\beta_0=(\beta_0)_j$, which is the $j$th component of the regression coefficient. Similarly, if we are interested in the prediction for a given test sample $X=x_0$ and $Z=z_0$, then the parameter of interest becomes $x_0^{\top}\beta+f(z_0)$. Thus, the inference problems can be decomposed into two problems: the inference on $x_0^{\top}\beta$ and the inference on $f(z_0)$. The former is again a linear combination of $\beta_0$ with $\xi=x_0$. The inference on $f(z_0)$ will be studied later in this section. 
To construct the confidence intervals for $\xi^{\top}\beta_0$, we extend the de-biasing approach to the DiD design under the partially linear model specification. Given the Lasso estimator $\hat\beta$, we propose the following de-biased Lasso estimator:
\begin{equation}\label{eqdebias1}
\hat T=\xi^{\top}\hat\beta-\hat w^{\top}  \mathbb{E}_n\left\{\left(\hat{\rho}_i \Big(\Delta Y_i -(1-\hat{\pi}_i)\hat{\Phi}_1(W_i) - \hat{\pi}_i\hat{\Phi}_0(W_i) \Big)-X_i^\top\hat\beta - \hat f(Z_i) \right)\tilde X_i\right\},
\end{equation}
where   
\begin{equation}\label{eqhatw}
\hat w=\arg\min \|w\|_1~~\textrm{s.t.}~~\|\xi+ \hat{\Sigma}_{\tilde{X}} w\|_\infty\leq \lambda',
\end{equation} with $\hat{\Sigma}_{\tilde{X}}=\frac{1}{n}\sum_{i=1}^n \tilde X_i \tilde X_i^{\top}$ and $\lambda'$ as a tuning parameter. 
We will show that $\hat w$ is a consistent estimator of $w_{0}=\Sigma^{-1}_{\bar{X}}\xi$. Let $s_{w} = |\{k: w_{0k} \neq 0\}|$ as the size of non-zero elements in $w_{0}$.

\begin{assumption}\label{assum:splitting} 
\begin{itemize}
\item [(i)]$\mathbb{E}_n((1/\hat{\pi}(W_i)-1/\pi(W_i))^2)=o_p(1/{(k_n \log k_n)} \vee 1/{(s_w\log p)})$ and $\sup_{w}|1/\hat{\pi}(w)-1/\pi(w)|=o_p(1)$;
\item [(ii)]$\mathbb{E}_n(\hat{\Phi}_1(W_i)-{\Phi}_1(W_i))^2=o_p(1/{(k_n \log k_n)} \vee 1/{(s_w\log p)} )$ and  $\sup_{w}|\hat{\Phi}_1(w)-{\Phi}_1(w)|=o_p(1)$;
\item[(iii)]$\mathbb{E}_n(\hat{\Phi}_0(W_i)-{\Phi}_0(W_i))^2=o_p(1/{(k_n \log k_n)} \vee 1/{(s_w\log p)})$ and  $\sup_{w}|\hat{\Phi}_0(w)-{\Phi}_0(w)|=o_p(1)$;
\item [(iv)]$\E_n((1/\hat{\pi}(W_i)-1/\pi(W_i))^2 \cdot (\hat{\Phi}_1(W_i)-{\Phi}_1(W_i))^2) = o_p(1/(s_w^2n) \vee 1/(k_nn))$;
\item[(v)] $\E_n((1/\hat{\pi}(W_i)-1/\pi(W_i)) \cdot (\hat{\Phi}_0(W_i)-{\Phi}_0(W_i))) = o_p(1/(s_w^2n) \vee 1/(k_nn))$.
\end{itemize}
\end{assumption}
Assumption \ref{assum:splitting} is a stronger version of Assumption \ref{assum:splitting1}, which is required for constructing the asymptotic normality. 

Let $\sigma^{2}_i=\mathbb{E}[\epsilon^{2}_{i}|X_i]$, $V_\beta=\Sigma^{-1}_{\bar{X}}\Omega_\beta\Sigma^{-1}_{\bar{X}}$ with $\Sigma_{\bar{X}}=\mathbb{E}\left[\bar{X}_{i}\bar{X}_{i}^{\top}\right]$ and $\Omega_{\beta}:=\mathbb{E}\left[\sigma_{i}^{2}\bar{X}_{i}\bar{X}_{i}^{\top}\right]$. Let $\hat{V}_\beta=\hat{w}^{\top}\hat{\Omega}_{\beta}\hat{w}$ with $\hat{\Omega}_{\beta}:=\mathbb{E}_{n}\left[\hat{\sigma}_{i}^{2}\tilde{X}_{i}\tilde{X}_{i}^{\top}\right]$.

\begin{assumption} \label{assum: rate_thminf1} We have (i) $n^{-1/2}(s_{w}^{2}(\log p)^{1/2}\vee s_{w}\log p)=o_{p}(1)$ and $s_w \max_{1\leq j\leq p,1\leq i \leq n}|\tilde{X}_{ij}r_{ni}|=o(n^{-1/2})$; (ii) $s_{w}\mathbb{E}_{n}\left[\epsilon_{i}(\tilde{X}_{i}-\bar{X}_{i})\right]=o_{p}(n^{-1/2})$; (iii) ; (iv) the smallest eigenvalue of $\Omega_{\beta}$ denoted as $\lambda_{\min}(\Omega_{\beta})$ is bounded away from 0 and the biggest eigenvalue denoted as $\lambda_{\max}(\Omega
_{\beta})$ is bounded from above. \end{assumption}
\begin{thm}\label{thminf1}
Suppose that Assumptions 1-7, 9, 10 and 11 hold. let $\lambda' \gtrsim \sqrt{\log p/n}$, we have that
$$
\sqrt{n}(\hat T-\xi^{\top}\beta_0)\rightarrow_d N(0, \xi^{\top}V_\beta \xi).
$$
Furthermore, if $\left(\log(np)\left( \frac{s_{0} \log p}{n}+\sqrt{\frac{\xi_{0}^{2}(k_{n}) k_{n}}{n}}+\ell_{k_{n}}c_{k_{n}}\right) \right)=o(1)$, $\hat{V}_{\beta}\overset{p}{\rightarrow}V_{\beta}$.
\end{thm}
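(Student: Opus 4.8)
The plan is to expand $\sqrt{n}(\hat{T}-\xi^{\top}\beta_0)$ into one asymptotically linear leading term plus remainders that each vanish at the $\sqrt{n}$ scale. Writing
\[
\hat{S}_i - X_i^{\top}\hat{\beta} - \hat{f}(Z_i) = (\hat{S}_i - S_i) + \epsilon_i - X_i^{\top}(\hat{\beta}-\beta_0) - (\hat{f}(Z_i)-f_0(Z_i))
\]
and substituting into (\ref{eqdebias1}), I would use two facts about the sieve residual $\tilde{X}_i = X_i-\Pi_n(X_i|Z_i)$: because $\tilde{\bm{X}}^{\top}\Psi_n = 0$, the nonparametric estimation error $\hat{f}-f_{n0}$ is orthogonal to $\tilde{X}_i$ and drops out, leaving only the approximation error $r_{ni}$; and because $I-P_Z$ is idempotent, $\mathbb{E}_n[\tilde{X}_i X_i^{\top}] = \hat{\Sigma}_{\tilde{X}}$. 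Collecting terms yields
\begin{align*}
\hat{T}-\xi^{\top}\beta_0 &= -\hat{w}^{\top}\mathbb{E}_n[\epsilon_i\tilde{X}_i] + (\xi+\hat{\Sigma}_{\tilde{X}}\hat{w})^{\top}(\hat{\beta}-\beta_0) \\
&\quad - \hat{w}^{\top}\mathbb{E}_n[(\hat{S}_i-S_i)\tilde{X}_i] - \hat{w}^{\top}\mathbb{E}_n[r_{ni}\tilde{X}_i],
\end{align*}
a leading term followed by a de-biasing term, a first-stage nuisance term, and a sieve term.

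For the leading term I would replace $\hat{w}$ by $w_0=\Sigma^{-1}_{\bar{X}}\xi$ and $\tilde{X}_i$ by $\bar{X}_i$. The error $(\hat{w}-w_0)^{\top}\mathbb{E}_n[\epsilon_i\tilde{X}_i]$ is at most $\|\hat{w}-w_0\|_1\,\|\mathbb{E}_n[\epsilon_i\tilde{X}_i]\|_{\infty}$, where $\|\mathbb{E}_n[\epsilon_i\tilde{X}_i]\|_{\infty}=O_p(\sqrt{\log p/n})$ by sub-exponential concentration under Assumptions \ref{assum:error} and \ref{asseigen} (the summands are conditionally mean-zero since $\mathbb{E}[\epsilon_i\mid W_i]=0$), and $\|\hat{w}-w_0\|_1=O_p(s_w\lambda')$ follows from the near-feasibility of $w_0$ in (\ref{eqhatw}), since $\|\xi+\hat{\Sigma}_{\tilde{X}}w_0\|_\infty=\|(\hat{\Sigma}_{\tilde{X}}-\Sigma_{\bar{X}})w_0\|_\infty$ is small, together with a Dantzig-selector analysis; the term $w_0^{\top}\mathbb{E}_n[\epsilon_i(\tilde{X}_i-\bar{X}_i)]$ is $o_p(n^{-1/2})$ by Assumption \ref{assum:rate_thminf1}(ii). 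What remains, $w_0^{\top}\mathbb{E}_n[\epsilon_i\bar{X}_i]=n^{-1}\sum_i(w_0^{\top}\bar{X}_i)\epsilon_i$, is a sum of independent conditionally mean-zero summands, and a Lindeberg CLT using the $r_\epsilon>4$ moment bound in Assumption \ref{assum:error}(ii) and the eigenvalue bounds in Assumption \ref{assum:rate_thminf1}(iv) gives $\sqrt{n}\,w_0^{\top}\mathbb{E}_n[\epsilon_i\bar{X}_i]\rightsquigarrow N(0,w_0^{\top}\Omega_\beta w_0)$ with $w_0^{\top}\Omega_\beta w_0 = \xi^{\top}\Sigma^{-1}_{\bar{X}}\Omega_\beta\Sigma^{-1}_{\bar{X}}\xi = \xi^{\top}V_\beta\xi$.

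It then remains to kill the three remainders. The de-biasing term is at most $\|\xi+\hat{\Sigma}_{\tilde{X}}\hat{w}\|_{\infty}\,\|\hat{\beta}-\beta_0\|_1 \le \lambda'\cdot O_p(\lambda s_0)=O_p(s_0\log p/n)$ by the constraint in (\ref{eqhatw}) and Theorem \ref{thm1}, which is $o_p(n^{-1/2})$ under the maintained sparsity conditions of Assumptions \ref{assum:chooselambda} and \ref{assum:rate_thminf1}; the sieve term is at most $\|\hat{w}\|_1\max_{i,j}|r_{ni}\tilde{X}_{ij}|\lesssim\sqrt{s_w}\max_{i,j}|r_{ni}\tilde{X}_{ij}|=o_p(n^{-1/2})$, by Assumption \ref{assum:rate_thminf1}(i). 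The main obstacle is the first-stage term $\hat{w}^{\top}\mathbb{E}_n[(\hat{S}_i-S_i)\tilde{X}_i]$. I would exploit sample splitting so that the estimated nuisances $\hat{\alpha}=(\hat{\Phi}_1,\hat{\Phi}_0,\hat{\pi})$ are independent of the evaluation fold, and split this term into a centered part (the empirical average minus its conditional expectation given $\hat{\alpha}$), whose magnitude is governed by the $L_2$ nuisance rates in Assumption \ref{assum:splitting}(i)--(iii) through the sample-splitting variance bound, and a conditional-mean part $\hat{w}^{\top}\mathbb{E}[(\hat{S}_i-S_i)\tilde{X}_i\mid\hat{\alpha}]$. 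For the latter I would expand $\Upsilon(W_i;\hat{\alpha})-\Upsilon(W_i;\alpha_0)$ to second order: the Neyman orthogonality of Lemma \ref{doublerobust}(ii) annihilates the first-order single-error terms in the conditional mean, leaving only cross-products such as $(1/\hat{\pi}-1/\pi)(\hat{\Phi}_1-\Phi_1)$ and $(1/\hat{\pi}-1/\pi)(\hat{\Phi}_0-\Phi_0)$, which are bounded by Cauchy--Schwarz together with the product rates in Assumption \ref{assum:splitting}(iv)--(v). The delicate point is that the prefactor $\|\hat{w}\|_1\lesssim\sqrt{s_w}$ must be absorbed by exactly the $s_w$ factors appearing in those rates, so that the whole term is $o_p(n^{-1/2})$; keeping this bookkeeping tight across all remainders is the crux of the argument and completes the proof of asymptotic normality.

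For the variance statement I would decompose $\hat{w}^{\top}\hat{\Omega}_\beta\hat{w}-w_0^{\top}\Omega_\beta w_0 = (\hat{w}-w_0)^{\top}\hat{\Omega}_\beta\hat{w} + w_0^{\top}\hat{\Omega}_\beta(\hat{w}-w_0) + w_0^{\top}(\hat{\Omega}_\beta-\Omega_\beta)w_0$ and bound the pieces through $\|\hat{w}-w_0\|_1$, $\|\hat{\Omega}_\beta\|_{\max}$, and $\|\hat{\Omega}_\beta-\Omega_\beta\|_{\max}$. The crux is $\|\hat{\Omega}_\beta-\Omega_\beta\|_{\max}=o_p(1)$: I would show the plug-in residual $\hat{\sigma}_i^2=(\hat{S}_i-X_i^{\top}\hat{\beta}-\hat{f}(Z_i))^2$ is close to $\sigma_i^2=\mathbb{E}[\epsilon_i^2\mid X_i]$ in the averaged sense, replace $\tilde{X}_i$ by $\bar{X}_i$ using Assumption \ref{assfeigen}(iii), and apply a maximal inequality over the $p$ coordinates. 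Controlling these residuals uniformly is precisely where the extra factor $\log(np)\big(s_0\log p/n+\sqrt{\xi_0^2(k_n)k_n/n}+\ell_{k_n}c_{k_n}\big)=o(1)$ enters, after which $\hat{V}_\beta\rightarrow_p\xi^{\top}V_\beta\xi$ follows.
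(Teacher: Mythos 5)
Your proposal is correct and follows essentially the same route as the paper's proof: the same four-term decomposition (your use of $\mathbb{E}_n[\tilde X_i X_i^{\top}]=\hat\Sigma_{\tilde X}$ is just the paper's $I_3$/$I_4$ split rearranged via the identity $\mathbb{E}_n[\Pi_{n,X_i|Z_i}^{\top}\tilde X_i]=0$), the same Dantzig-selector bound $\|\hat w-w_0\|_1=O_p(s_w\sqrt{\log p/n})$ with maximal-inequality control of $\|\mathbb{E}_n[\epsilon_i\tilde X_i]\|_\infty$, the same sample-splitting/orthogonality-plus-Cauchy--Schwarz treatment of $\hat w^{\top}\mathbb{E}_n[(\hat S_i-S_i)\tilde X_i]$ (the paper's Lemma 8), the same Lyapunov CLT under $r_\epsilon>4$, and the same $\ell_1$/max-norm bookkeeping for $\hat V_\beta$, where the $\log(np)$ condition enters through uniform control of $\hat\sigma_i^2-\sigma_i^2$ exactly as you indicate. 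The only deviations are cosmetic (e.g.\ your sharper $\|w_0\|_1\lesssim\sqrt{s_w}$ versus the paper's $\|w_0\|_1\leq s_w$), and all stated rates match the paper's assumptions.
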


Theorem \ref{thminf1} implies that we can construct an asymptotic $(1-\alpha)$ confidence interval for $\xi^{\top}\beta_0$ as $(\hat T-z_{1-\alpha/2}(\hat w^{\top} \hat{V}_\beta\hat w)^{1/2}, \hat T+z_{1-\alpha/2}(\hat w^{\top} \hat{V}_\beta\hat w)^{1/2}$, where $z_{1-\alpha/2}$ is the $1-\alpha/2$ quantile of a standard normal random variable. Note that by constructing the Neyman orthogonality condition and by using an de-biased estimator, the asymptotic variance of the parametric parameter coincides the one in low-dimensional partially linear models \citep{Robinson1988root}. In particular, with homoskedasticity $\mathbb{E}[\epsilon_{i}^{2}|X_{i}]=\sigma^{2}$, the asymptotic variance achieves the semiparametry efficiency bound $V_{\beta}=\sigma^{2}\mathbb{E}[\tilde{X}_{i}\tilde{X}_{i}^{\top}]^{-1}$. \\

In the following, we extend the de-biasing approach to construct the confidence intervals for $f(z)$ for any given $z\in \mathcal{Z}$, where we assume $d_z$ is much smaller than $n$ to avoid the curse of dimentionality problem for nonparametric estimation. Recall that $f(z)$ can be approximated in the sieve space by $\psi^{k_n}(z)^{\top}\gamma_{n0}$. To construct the confidence interval for $f(z)$, it suffices to apply the debias approach to the parameter $\gamma_{n}$, in which the parameter $\beta$ is treated as a high-dimensional nuisance parameter. To this end, we first derive the score function for $\gamma_n$ as
$$
U_{\gamma_{n}}(\beta,M)=\mathbb{E}_n\left\{\left(\hat{\rho}_i \Big(\Delta Y_i -(1-\hat{\pi}_i)\hat{\Phi}_1(W_i) - \hat{\pi}_i\hat{\Phi}_0(W_i) \Big)-X_i^\top\beta -  \psi^{k_n}(Z_i)^{\top}\gamma_n \right)(\psi^{k_n}(Z_i)- MX_i)\right\},
$$
where $M=\mathbb{E}(\psi^{k_n}(Z_i)X_i^{\top})\{\mathbb{E}X_i^{\otimes 2}\}^{-1}$ is a $k_n\times p$ matrix. One key property of the score function is that $U_{\gamma_{n}}(\beta,M)$ is insensitive to the unknown high-dimensional nuisance parameters $\beta$ and $M$. In fact, we will show below that $U_{\gamma_{n}}(\hat\beta,\hat M)=U_{\gamma_{n}}(\beta,M)+o_p(n^{-1/2})$ for some suitable estimators $\hat\beta$ and $\hat M$ to be defined later. 

Given this score function $U_{\gamma_{n}}(\beta,M)$, we can  define the one-step updated de-biased estimator as $\bar{f}(z):=\psi^{k_{n}}(z)^{\top}\bar{\gamma}_{n}$, where
$$
\bar{\gamma}_{n}:=\hat{\gamma}_{n}-\hat{\Sigma}^{-1}_{f}\mathbb{E}_n\left\{\left(\hat{\rho}_i \Big(\Delta Y_i -(1-\hat{\pi}_i)\hat{\Phi}_1(W_i) - \hat{\pi}_i\hat{\Phi}_0(W_i) \Big)-X_i^\top\hat\beta - \hat f(Z_i) \right)(\psi^{k_n}(Z_i)-\hat MX_i)\right\}
$$
where $\hat{\Sigma}_f = \mathbb{E}_n\left\{(\psi^{k_n}(Z_i)-\hat MX_i)\psi^{k_n}(Z_i)^{\top}\right\}$ and $\hat M=[\hat M_1,...,\hat M_j,...,\hat M_{k_n}]^{\top}$ with
\begin{equation}\label{eqhatm}
\hat M_j=\arg\min \|m\|_1~~\textrm{s.t.}~~\|m^{\top}\mathbb{E}_n(X_i^{\otimes 2})-\mathbb{E}_n(\psi_j^{k_n}(Z_i)X_i^{\top})\|_\infty\leq \lambda'',
\end{equation}

Let $s_{m}=\left|k:\left\{ \left(\mathbb{E}[\psi^{k_{n}}(Z_{i})X_{i}^{\top}]\right)\left(\mathbb{E}\left[X_{i}X_{i}^{\top}\right]^{-1}\right)\right\} _{k}\neq0\right| $ as the size of non-zeros elements in $\left(\mathbb{E}[\psi^{k_{n}}(Z_{i})X_{i}^{\top}]\right)\left(\mathbb{E}\left[X_{i}X_{i}^{\top}\right]^{-1}\right)$. 

Let $\sigma_{z}^{2}=\psi^{k_{n}}(z)^{\top}V_{f}\psi^{k_{n}}(z)$ with $V_{f}=\Sigma_{f}^{-1}\Omega_{f}\Sigma_{f}^{-1}$, $\Sigma_{f}=\mathbb{E}\left[\left(\psi^{k_{n}}(Z_{i})-MX_{i}\right)\psi^{k_{n}}(Z_{i})^{\top}\right]$ and $\Omega_{f}=\mathbb{E}\left[\sigma_{i}^{2}\psi^{k_{n}}(Z_{i})\psi^{k_{n}}(Z_{i})^{\top}\right]-M\mathbb{E}\left[\sigma_{i}^{2}X_{i}X_{i}^{\top}\right]M^{\top}$. 

We define the sample analogs similarly. Let $\hat{\sigma}_{z}^{2}=\psi^{k_{n}}(z)^{\top}\hat{V}_{f}\psi^{k_{n}}(z)$ with $\hat{V}_{f}=\hat{\Sigma}_{f}^{-1}\hat{\Omega}_{f}\hat{\Sigma}_{f}^{-1}$, $\hat{\Sigma}_{f}=\mathbb{E}\left[\left(\psi^{k_{n}}(Z_{i})-\hat{M}X_{i}\right)\psi^{k_{n}}(Z_{i})^{\top}\right]$ and $\hat{\Omega}_{f}= \mathbb{E}_n\left[\hat{\sigma}^{2}_i\psi^{k_n}(Z_i)\psi^{k_n}(Z_i)^{\top}\right] - \hat M\mathbb{E}_n\left[\hat{\sigma}^{2}_iX_iX_i^{\top}\right]\hat M^{\top}$.

\begin{assumption} \label{assum: rate_thminf2}We have (i)$s_{m}^{2}\sqrt{k_{n}\log p/n}=o(1)$, $\sqrt{n}\sigma^{-1}_{z}\|\mathbb{E}_{n}\left[r_{ni}(\psi^{k_{n}}(Z_{i})-MX_{i})\right]=o(1)$ and $s_{m}s_{0}\log p/\sqrt{n}=o(1)$; 
\[\sqrt{n}\sigma_{z}^{-1}\psi^{k_{n}}(z)^{\top}\Sigma_{f}^{-1}\mathbb{E}_{n}\left\{r_{ni}\left(\psi^{k_{n}}(Z_{i})-MX_{i}\right)\right\} = o_p(1)\]
(ii) the smallest eigenvalues of $\Sigma_{f}$ and $\Omega_{f}$ denoted as $\lambda_{\min}(\Sigma_{f})$ and $\lambda_{\min}(\Omega_{f})$, respectively, are bounded away from 0 and the biggest eigenvalues denoted as $\lambda_{\max}(\Sigma_{f})$ and $\lambda_{\max}(\Omega_{f})$, respectively, are bounded from above. \end{assumption}

\begin{thm}\label{thminf2} Suppose that Assumptions 1-7, 9-12 hold and $\sqrt{n}\sigma_{z}^{-1}\max_{1\leq i\leq n}|r_{ni}|=o(1)$. Let $\lambda'' \gtrsim \sqrt{\log p/n}$, we have that
\[
\sqrt{n}\sigma_{z}^{-1/2}(\bar f(z)-f_{0}(z))\rightarrow_d N(0, 1). 
\]
Furthermore, if $\left(\sqrt{\log np}\left(n^{1/r_{\epsilon}}+c_{k_{n}}\ell_{k_{n}}\right)\left(s_{0}\sqrt{\log p/n}+\sqrt{\xi_{0}^{2}(k_{n})k_{n}/n}+c_{k_{n}}\ell_{k_{n}}\right)\right)=o(1)$, we have $\hat{\sigma}^{2}_{z}\overset{p}\rightarrow \sigma^{2}_{z}$. 
\end{thm}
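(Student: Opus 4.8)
The plan is to derive an asymptotically linear expansion for the one-step de-biased coefficient $\bar\gamma_n$ and then invoke a central limit theorem for the growing-dimensional sieve score. First I would insert the exact decomposition $\bm S=\bm X\beta_0+\Psi_n\gamma_{n0}+\bm\epsilon+\bm r_n$ (recorded before Assumption \ref{assum:data}) together with $\hat f(Z_i)=\psi^{k_n}(Z_i)^\top\hat\gamma_n$ into the definition of $\bar\gamma_n$. Because $\hat\Sigma_f=\mathbb{E}_n\{(\psi^{k_n}(Z_i)-\hat MX_i)\psi^{k_n}(Z_i)^\top\}$, the term carrying $\hat\gamma_n-\gamma_{n0}$ cancels exactly, leaving
\[
\bar\gamma_n-\gamma_{n0}=\hat\Sigma_f^{-1}\mathbb{E}_n\{(\psi^{k_n}(Z_i)-\hat M X_i)\epsilon_i\}+R_a+R_b+R_c+R_d,
\]
where $R_a=\hat\Sigma_f^{-1}\mathbb{E}_n\{r_{ni}(\psi^{k_n}(Z_i)-\hat MX_i)\}$ is the sieve bias, $R_b=-\hat\Sigma_f^{-1}\mathbb{E}_n\{(\psi^{k_n}(Z_i)-\hat MX_i)X_i^\top\}(\hat\beta-\beta_0)$ the $\hat\beta$-plug-in error, $R_c$ collects the error from replacing $\hat M$ and $\hat\Sigma_f$ by their population versions, and $R_d=\hat\Sigma_f^{-1}\mathbb{E}_n\{(\hat S_i-S_i)(\psi^{k_n}(Z_i)-\hat MX_i)\}$ the first-stage nuisance error. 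Left-multiplying by $\psi^{k_n}(z)^\top$, subtracting the bias $r_n(z)$ from $f_0(z)=\psi^{k_n}(z)^\top\gamma_{n0}+r_n(z)$, and scaling by $\sqrt n\,\sigma_z^{-1}$ then reduces the theorem to a CLT for the leading term and negligibility of the four remainders.

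For the leading term I would replace $\hat\Sigma_f^{-1}$ and $\hat M$ by $\Sigma_f^{-1}$ and $M$ (these replacement errors are of lower order and absorbed into $R_c$), so the dominant contribution is
\[
\frac{1}{\sqrt n}\sum_{i=1}^{n}\sigma_z^{-1}\psi^{k_n}(z)^\top\Sigma_f^{-1}(\psi^{k_n}(Z_i)-M X_i)\epsilon_i,
\]
a sum of i.i.d. mean-zero summands whose variance equals $\sigma_z^{-2}\psi^{k_n}(z)^\top\Sigma_f^{-1}\Omega_f\Sigma_f^{-1}\psi^{k_n}(z)=1$ by the definitions of $\Omega_f$, $V_f$ and $\sigma_z^2$. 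Since $k_n\to\infty$, I would verify a Lyapunov condition for this triangular array using the moment bound $r_\epsilon>4$ of Assumption \ref{assum:error}, the eigenvalue bounds of Assumption \ref{assum: rate_thminf2}(ii) (which keep $\sigma_z$ bounded away from $0$), and the envelope control $\sqrt n\,\sigma_z^{-1}\max_i|r_{ni}|=o(1)$, yielding the asserted standard normal limit.

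The negligibility of the remainders is where the effort concentrates. For $R_a$ I would apply the hypothesis $\sqrt n\,\sigma_z^{-1}\psi^{k_n}(z)^\top\Sigma_f^{-1}\mathbb{E}_n\{r_{ni}(\psi^{k_n}(Z_i)-M X_i)\}=o_p(1)$ of Assumption \ref{assum: rate_thminf2}(i) directly. For $R_b$ the key is Neyman orthogonality with respect to $\beta$: because $M=\mathbb{E}[\psi^{k_n}(Z_i)X_i^\top]\{\mathbb{E}X_i^{\otimes2}\}^{-1}$ makes $\mathbb{E}[(\psi^{k_n}(Z_i)-M X_i)X_i^\top]=0$, the error of $\hat\beta$ enters only through the mean-zero empirical cross term $\mathbb{E}_n[(\psi^{k_n}(Z_i)-M X_i)X_i^\top](\hat\beta-\beta_0)$, which I would bound by pairing the rate $\|\hat\beta-\beta_0\|_1=O_p(s_0\sqrt{\log p/n})$ from Theorem \ref{thm1} with an $\|\cdot\|_\infty$ empirical-process bound and the sparsity $s_m$, giving $o_p(n^{-1/2})$ under $s_m s_0\log p/\sqrt n=o(1)$. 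The term $R_c$ is handled through the dual feasibility constraint in \eqref{eqhatm} and $s_m^2\sqrt{k_n\log p/n}=o(1)$. Finally, for $R_d$ I would exploit that the weight $\psi^{k_n}(Z_i)-M X_i$ is $W_i$-measurable, so the weighted moment inherits the Neyman orthogonality of $\varUpsilon$ with respect to $\alpha_0$ established in Lemma \ref{doublerobust}(ii); this annihilates the first-order first-stage effect and leaves only second-order products, which are $o_p(n^{-1/2})$ by the product-rate conditions of Assumption \ref{assum:splitting}(iv)--(v).

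The main obstacle I anticipate is $R_b$: controlling the $p>n$ dimensional nuisance $\beta$ while conducting inference on the $k_n$-dimensional sieve parameter. Here the orthogonalizer $M$ is a $k_n\times p$ matrix, and one must show that the product of the lasso error, the sparsity $s_m$, and the magnitude of $\psi^{k_n}(z)^\top\Sigma_f^{-1}$ stays $o(n^{-1/2})$, simultaneously coupling the rate of $\hat\beta$, the sieve dimension, and the eigenvalue control, which is precisely what the scalar conditions of Assumption \ref{assum: rate_thminf2}(i) are engineered to reconcile. For the concluding claim $\hat\sigma_z^2\to_p\sigma_z^2$, I would establish $\hat\Sigma_f\to_p\Sigma_f$ and $\hat\Omega_f\to_p\Omega_f$ in operator norm via consistency of $\hat M$ and of the plug-in second moments $\mathbb{E}_n[\hat\sigma_i^2 X_iX_i^\top]$ and $\mathbb{E}_n[\hat\sigma_i^2\psi^{k_n}(Z_i)\psi^{k_n}(Z_i)^\top]$, where the residual-variance error is controlled by the additional rate condition $\bigl(\sqrt{\log np}\,(n^{1/r_\epsilon}+c_{k_n}\ell_{k_n})(s_0\sqrt{\log p/n}+\sqrt{\xi_0^2(k_n)k_n/n}+c_{k_n}\ell_{k_n})\bigr)=o(1)$ stated in the theorem.
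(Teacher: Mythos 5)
Your proposal is correct and follows essentially the same route as the paper's proof: your leading term and remainders $R_a$--$R_d$ map exactly onto the paper's decomposition $II_1$--$II_4$ (including the exact cancellation you note, which is the paper's observation that $II_4=0$), and you invoke the same tools---the Dantzig-selector rate for $\hat M$, matrix Bernstein bounds for $\hat\Sigma_f$ and $\hat\Omega_f$, sample-splitting/Bernstein arguments for the first-stage term, the Theorem \ref{thm1} rate for $\hat\beta$, and a Lindeberg--Lyapunov CLT for the growing-dimensional sieve score, under the same conditions of Assumption \ref{assum: rate_thminf2}. The only cosmetic difference is that the paper bounds the $\hat\beta$-plug-in term directly via the dual feasibility constraint $\|\mathbb{E}_n[(\psi^{k_n}_j(Z_i)-\hat M_j^{\top}X_i)X_i^{\top}]\|_\infty\leq\lambda''$ rather than through the population orthogonality of $M$, which is precisely the empirical $\|\cdot\|_\infty$ bound you anticipate needing.
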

Theorem \ref{thminf2} provides asymptotic theory that can be used to construct the confidence intervals for $f(z)$ for any $z$. Unlike the standard results in the nonparametric literature, we need to construct a de-biased $\bar{f}(z)$ estimator that corrects bias caused by estimating the high-dimensional parametric component of the partially linear model. Since the Lasso estimator of the parametric linear part has a convergence rate that is slower than $\sqrt{n}$, the asymptotic variance of $\bar{f}(z)$ contains a projection term that reflects the effect of the parametric component on the nonparametric component. We can construct an asymptotic $(1-\alpha)$ confidence interval for $f_{0}(z)$ as $\left[ \bar{f}(z)-z_{1-\alpha/2}\hat{\sigma}_{z}, \bar{f}(z)+z_{1-\alpha/2}\hat{\sigma}_{z}\right]$, where $z_{1-\alpha/2}$ is the $1-\alpha/2$ quantile of a standard normal random variable. Combining the results in Theorems \ref{thminf1} and \ref{thminf2}, one can easily construct the confidence interval for the heterogeneous ATT denoted as $\tau_0(w)$. 


\section{Simulation}
We compare the finite sample performance of the doubly robust estimator proposed in this paper with the semiparametric DiD estimator in \cite{Abadie2005} when the latter is applicable. We consider two data generating processes. In the first setting (DGP1), we allow $Y^0(i, 0)$ and $Y^1(i,0)$ to follow standard normal distribution, where $Y^1(i, 1)$ and $Y^0(i, 1)$ are defined as follows:
\begin{align*}
	Y^1(i, 1) &= Y^1(i, 0) + X_i^\top\beta^1 +f(Z_i) +\epsilon_{i1},\\
	Y^0(i, 1) &= Y^0(i, 0) + X_i^\top\beta^0  +\epsilon_{i0},
\end{align*}
where $X_i$ and $Z_i$ are generated from independently standard normal distributions. The errors $\epsilon_{i1}$ and $\epsilon_{i0}$ are independently generated from standard normal distributions. In the second setting (DGP2), we define 
\[Y(i, 0) = \tilde\epsilon_{i}\cdot (1/\sqrt{2} \cdot Z_i + 1/\sqrt{2} \cdot X_{i1})\]
 where $\tilde\epsilon_{i}$ is generated from a standard normal distribution and $X_i \sim N(0, \Sigma)$ where $\Sigma_{jk} = \rho^{|j-k|}$. This allows both heteroskedasticity in the error term as well as correlation among regressors. In both DGP1 and DGP2, we set $\beta^1_i = 2/i$ and $\beta^0_i = 1/i$ for $i \leq 15$ and $f(Z_i) = \exp(Z_i)$. The treatment assignment probability is based on a logistic distribution with 
\[\mathbb{P}(T_i=1) = 1-(1+\exp(X_i^\top\theta_0))^{-1},\]
where $\theta_{0i} = 1/i$ for $i \leq 10$. In both setting, we use 8th degree trigonometric polynomial basis for the non-parametric estimation. 

Table \ref{tab2000} and \ref{tab3000} summarize the results for the two settings. We report the average bias, average standard errors, average mean squared errors, average coverages for a $90\%$ confidence intervals as well as the average lengths for this confidence intervals separately for both the linear coefficients and the nonparametric coefficients. To compare with the parametric part, we report the coverages for the linear combination of the nonparametric coefficients. Divided by the standard error, it also converges to standard normal with the same condition in \ref{thminf2}. The ``Dr-DiD" columns represent the results for the doubly robust diff-in-diff estimator and ``semi-DiD" columns represent the results for the \cite{Abadie2005} estimator. We present results with $n$ varying from 200, 500 and 1000 and the dimension for linear specification $p$ varying from 10, 50 , 500 and 1000. Notice that the \cite{Abadie2005} estimator is infeasible when $n\leq p$ so we omit to report the ``semi-DiD" results when $p =500, 1000$ and denote them as ``-" in the tables. Furthermore, the variance for the Semi-Did estimator becomes large when $n$ is relatively large compared to $p$ (e.g. $n=200$ and $p=50$). Although Semi-DiD estimator can still be computed when $n=1000$ and $p=500$, we choose not to report this result because of its large variance . 

As shown in both tables, the Dr-DiD estimator has a smaller standard error, RMSE and confidence interval length in both linear and nonparametric specifications. When $p=50$, the Semi-DiD estimator becomes too conservative and produces larger standard errors. On the other hand, the Dr-DiD estimator is also more robust comparing with the Semi-DiD estimator when switching from homoskedastic errors to heteroskedastic errors. More importantly, our experiments show that in finite sample, the Dr-DiD estimator can deliver reasonable estimates under high-dimensional settings.

\begin{table}[t!]
\caption{Simulation: homogeneous error with independent confounders}\label{tab2000}
	\begin{center}
		\scalebox{0.90}{
			\begin{tabular}{llccccccccc}
				\toprule 
					
				& p  &&  \multicolumn{2}{c}{10} &  \multicolumn{2}{c}{50}    &  \multicolumn{2}{c}{500}  & \multicolumn{2}{c}{1000} \\
				\cmidrule(lr){4-5} \cmidrule(lr){6-7} \cmidrule(lr){8-9} \cmidrule(r){10-11}  
				&&&  Dr-DiD & Semi-DiD & Dr-DiD & Semi-DiD &Dr-DiD & Semi-DiD &Dr-DiD & Semi-DiD  \\
				\midrule     
\parbox[t]{2mm}{\multirow{12}{*}{\rotatebox[origin=c]{90}{$n=200$}}} & \multicolumn{5}{l}{linear} \\                              
&Bias &&-0.0476&0.0746&-0.0279&-0.1177&0.0008&-&-0.0045&-\\  
&Std Err &&0.5241&1.2641&0.4533&4.3207&0.3732&-&0.4030&-\\   
&RMSE &&0.2816&1.7240&0.2116&21.9360&0.1410&-&0.1677&-\\     
&Coverage &&0.8300&0.9260&0.8680&0.9710&0.8833&-&0.8900&-\\                        
&CI length &&1.4018&4.1499&1.2448&49.9842&1.0778&-&1.0800&-\\
& \multicolumn{5}{l}{nonparametric} \\ 
&Bias &&-0.0165&0.0128&0.0365&-0.1181&0.0260&-&0.0102&-\\    
&Std Err &&1.0612&1.3773&0.8825&8.7693&0.8072&-&0.9131&-\\   
&RMSE &&1.1701&1.9408&0.7846&80.8639&0.6548&-&0.8426&-\\     
&Coverage &&0.8200&0.9056&0.8525&0.9712&0.7988&-&0.8375&-\\  
&CI length &&2.4596&5.2435&2.3056&49.2043&2.1269&-&2.1916&-\\
\hline                                                       
\parbox[t]{2mm}{\multirow{12}{*}{\rotatebox[origin=c]{90}{$n=500$}}} & \multicolumn{5}{l}{linear} \\                              
&Bias &&-0.0258&0.0031&-0.0171&-0.0011&-0.0016&-&-0.0011&-\\ 
&Std Err &&0.2812&0.4972&0.2268&0.5771&0.2086&-&0.1983&-\\   
&RMSE &&0.0828&0.2574&0.0536&0.3369&0.0441&-&0.0398&-\\      
&Coverage &&0.8570&0.8815&0.8656&0.9635&0.8857&-&0.8916&-\\                        
&CI length &&0.7628&1.4949&0.6723&3.9223&0.6488&-&0.6261&-\\ 
& \multicolumn{5}{l}{nonparametric} \\ 
&Bias &&0.0023&-0.0133&0.0028&0.0108&0.0031&-&0.0032&-\\     
&Std Err &&0.3953&0.6337&0.3872&0.7954&0.3715&-&0.3558&-\\   
&RMSE &&0.1564&0.4020&0.1501&0.6348&0.1384&-&0.1277&-\\      
&Coverage &&0.8888&0.8644&0.8688&0.9606&0.8588&-&0.8725&-\\  
&CI length &&1.1900&1.9907&1.1284&5.6116&1.1040&-&1.1037&-\\ 
\hline                                                       
\parbox[t]{2mm}{\multirow{12}{*}{\rotatebox[origin=c]{90}{$n=1000$}}} & \multicolumn{5}{l}{linear} \\                              
&Bias &&-0.0091&0.0044&-0.0068&0.0094&-0.0014&-&-0.0008&-\\  
&Std Err &&0.1842&0.3126&0.1485&0.3347&0.1401&-&0.1399&-\\   
&RMSE &&0.0342&0.0992&0.0226&0.1132&0.0199&-&0.0198&-\\      
&Coverage &&0.8640&0.8720&0.8848&0.9027&0.8910&-&0.8946&-\\                      
&CI length &&0.5432&0.9229&0.4683&1.2354&0.4466&-&0.4482&-\\ 
& \multicolumn{5}{l}{nonparametric} \\   
&Bias &&0.0047&-0.0251&-0.0187&-0.0121&-0.0069&-&-0.0083&-\\ 
&Std Err &&0.2695&0.3950&0.2362&0.4562&0.2268&-&0.2446&-\\   
&RMSE &&0.0728&0.1570&0.0566&0.2087&0.0515&-&0.0599&-\\      
&Coverage &&0.8750&0.8725&0.8812&0.9075&0.8825&-&0.8712&-\\  
&CI length &&0.8223&1.2382&0.7472&1.7548&0.7196&-&0.7247&-\\ 
				\bottomrule
\end{tabular}
}
\end{center}
\begin{flushleft}
	\footnotesize{ This table compare the doubly robust diff-in-diff estimator (denoted as Dr-DiD) with the original semi-parametric diff-in-diff estimator (denoted as Semi-DiD) proposed in \cite{Abadie2005}. $p$ represents the dimension for the linear specification. The nonparametric part is specified by an exponential function and it is approximated by a 8th degree trigonometric polynomial basis in both methods. The nominal coverage is at 90\%.}
\end{flushleft}

\end{table}

\begin{table}[t!]
\caption{Simulation: heterogeneous error with correlated confounders}\label{tab3000}
	\begin{center}
		\scalebox{0.90}{
			\begin{tabular}{llccccccccc}
				\toprule 
					
				& p  &&  \multicolumn{2}{c}{10} &  \multicolumn{2}{c}{50}    &  \multicolumn{2}{c}{500}  & \multicolumn{2}{c}{1000} \\
				\cmidrule(lr){4-5} \cmidrule(lr){6-7} \cmidrule(lr){8-9} \cmidrule(r){10-11}  
				&&&  Dr-DiD & Semi-DiD & Dr-DiD & Semi-DiD &Dr-DiD & Semi-DiD &Dr-DiD & Semi-DiD  \\
				\midrule     
\parbox[t]{2mm}{\multirow{12}{*}{\rotatebox[origin=c]{90}{$n=200$}}} & \multicolumn{5}{l}{linear} \\                                                           
&Bias &&-0.0213&0.0159&0.0122&-0.1277&0.0048&-&0.0027&-\\     
&Std Err &&0.5929&2.2577&0.5333&11.3165&0.4583&-&0.4649&-\\   
&RMSE &&0.3643&5.7073&0.2912&156.7863&0.2129&-&0.2255&-\\     
&Coverage &&0.8390&0.9260&0.8608&0.9360&0.8881&-&0.8912&-\\                         
&CI length &&1.5717&8.8340&1.4061&12.8266&1.2764&-&1.2174&-\\ 
& \multicolumn{5}{l}{nonparametric} \\  
&Bias &&0.0327&0.0407&-0.0200&0.0026&-0.0391&-&0.0183&-\\     
&Std Err &&0.9525&2.7935&1.0049&12.2120&0.9813&-&0.9562&-\\   
&RMSE &&0.9288&8.0217&1.0136&176.1709&0.9779&-&0.9171&-\\     
&Coverage &&0.8275&0.9269&0.8200&0.9450&0.8275&-&0.8088&-\\   
&CI length &&2.4823&10.4079&2.4977&12.4647&2.4471&-&2.3827&-\\
\hline                                                        
\parbox[t]{2mm}{\multirow{12}{*}{\rotatebox[origin=c]{90}{$n=500$}}} & \multicolumn{5}{l}{linear} \\                                                            
&Bias &&0.0148&-0.0486&0.0072&0.0198&0.0006&-&0.0001&-\\      
&Std Err &&0.3691&1.7750&0.2966&2.5093&0.2305&-&0.2285&-\\    
&RMSE &&0.1368&3.8087&0.0888&6.7667&0.0536&-&0.0526&-\\       
&Coverage &&0.8350&0.8550&0.8798&0.9761&0.8901&-&0.8943&-\\                           
&CI length &&0.9454&3.2110&0.8478&15.5631&0.7177&-&0.7059&-\\
& \multicolumn{5}{l}{nonparametric} \\ 
&Bias &&-0.0068&0.0255&-0.0061&0.1500&0.0127&-&-0.0156&-\\    
&Std Err &&0.4750&1.8659&0.4719&2.7917&0.4122&-&0.3913&-\\    
&RMSE &&0.2278&3.7041&0.2237&7.9921&0.1695&-&0.1541&-\\       
&Coverage &&0.8800&0.8569&0.8487&0.9681&0.8225&-&0.8562&-\\   
&CI length &&1.3348&3.5635&1.2741&18.3474&1.1226&-&1.1594&-\\ 
\hline                                                        
\parbox[t]{2mm}{\multirow{12}{*}{\rotatebox[origin=c]{90}{$n=1000$}}} & \multicolumn{5}{l}{linear} \\                                                            
&Bias &&-0.0018&-0.0000&0.0048&-0.0007&0.0004&-&0.0006&-\\    
&Std Err &&0.2733&0.6733&0.2071&1.1275&0.1850&-&0.1706&-\\    
&RMSE &&0.0750&0.4579&0.0433&1.3780&0.0345&-&0.0294&-\\       
&Coverage &&0.8410&0.8650&0.8720&0.9099&0.8943&-&0.8960&-\\                          
&CI length &&0.7548&1.9303&0.6045&3.2445&0.5709&-&0.5378&-\\  
& \multicolumn{5}{l}{nonparametric} \\ 
&Bias &&-0.0116&0.0129&0.0114&-0.1022&0.0023&-&0.0081&-\\     
&Std Err &&0.3303&0.7637&0.2771&1.3032&0.2771&-&0.2642&-\\    
&RMSE &&0.1092&0.5839&0.0781&1.8072&0.0765&-&0.0703&-\\       
&Coverage &&0.8600&0.8750&0.8762&0.8981&0.8588&-&0.8488&-\\   
&CI length &&0.9666&2.1754&0.8233&3.6871&0.8006&-&0.7682&-\\                                 
				\bottomrule
\end{tabular}
}
\end{center}
\begin{flushleft}
	\footnotesize{ This table compare the doubly robust diff-in-diff estimator (denoted as Dr-DiD) with the original semi-parametric diff-in-diff estimator (denoted as Semi-did) proposed in \cite{Abadie2005}. $p$ represents the dimension for the linear specification. The nonparametric part is specified by an exponential function and it is approximated by a 8th degree trigonometric polynomial basis in both methods. The nominal coverage is at 90\%.}
\end{flushleft}

\end{table}  

\section{Empirical Application}
We use the proposed method to study the effect of increasing the minimum wage on unemployment rates at the county level. We use the same dataset collected by \cite{CallawayLi2020}, which contains the county level unemployment rates from 2005 to 2007 before the Fair Minimum Wage Act was enacted in all states on May 25, 2007. Eleven states increased their minimum wage by the first quarter of 2007, while the other states did not increase their minimum wage until the federal minimum wage increased in July of 2007.\footnote{New Hampshire and Pennsylvania are dropped for the same reason  as in \cite{CallawayLi2020} } 

We explore the variation in adopting the minimum wage policy among different states to evaluate its impact on county level unemployment rates. \cite{CallawayLi2020} consider identification and estimation of the quantile treatment effect on the treated under a distributional extension of the mean difference in differences assumption with fixed-dimensional covariates. Differing from the work in \cite{CallawayLi2020}, this work focuses on studying the impact of covariates on heterogeneous ATT in this DiD design. Our proposed method allows us to weaken the parallel trend assumption to the conditional parallel trend assumption by conditioning on a large amount of potential confounders. For example, states with smaller populations may have higher variation in the unemployment rates, thus moving at different trends as compared to states with larger populations. Our method also allows us to derive marginal effect given a specific covariate of interest. As a result, customized policy recommendations can be designed based on those results. 

Figure \ref{meandiff} plots the simple difference for the 2005 to 2007 difference in unemployment rate by median income (panel a) and population (panel b). We separate the counties in the control and treated states by red and blue color. The solid lines on the graphs are the local means for the control and treated groups. There is a general decrease in the unemployment rate from 2005 to 2007 across all counties as the change in unemployment rate is centered below 0. The difference between the red and blue lines is the standard DiD estimator under the unconditional parallel trend assumption. The decrease in the unemployment rate for the treated counties is lower than the control counties at the low income region, while not much difference between treated and control is observed at high income region. On the other hand, the decrease in the unemployment rate for the treated counties is consistently lower than the control counties regardless of population size.\\
Figure \ref{703diff} compares the semi-parametric diff-in-diff estimator (Semi-DiD: blue) with the doubly robust estimator (Dr-DiD: red) on the nonparametric component $f$. Both methods use 4th degree trigonometric polynomial basis to approximate $f$ and have partially linear forms. The Semi-DiD estimator controls only the underlying covariate of interest ($Z_i$), while the Dr-DiD estimator controls not only the underlying covariate of interest ($Z_i$) but also 703 covariates ($X_i$) in a linear additive form. These covariates include 38 county level characteristics, as well as all interactions between them. It is also worth pointing out that when computing the marginal effect for median income, population is used as a confounder in the linear part and vice versa when computing the marginal effect for population. The dashed lines are 95\% confidence intervals for the estimator.

We find that both estimators show that regions with high median income levels and larger population sizes may suffer from an increase in unemployment rate due to the minimum wage policy, while no significant effect of the policy is detected for regions with lower median income levels and smaller population sizes. These findings coincide with the canonical economic theory on unemployment rate. For example, a higher median income level implies a higher substitution cost for a worker currently at minimum wage and thus leads to an increase in unemployment rate when the minimum wage rises. On the other hand, a region with a larger population size means more labor supply and thus a raise in minimum wage can lead to a surplus. The difference in the general direction of the results predicted by Figure \ref{meandiff} and Figure \ref{703diff} indicates the potential severeness of confounding problems in this design.

Next, while both the Semi-DiD estimator and Dr-DiD estimator show no significant impact of the policy at low median income and thin population regions, the Dr-DiD shows a larger impact in regions with a higher median income level and a denser population than the Semi-DiD estimator. The Dr-DiD estimated effects are also more significant in those regions. This is due to the controlling of additional covariates that further alleviates the concern of confoundedness as well as reducing the uncertainly in the model to yield more accurate estimation.

\begin{figure}[t]
\begin{center}
\subfloat[Median Income]{\includegraphics[scale=0.5]{./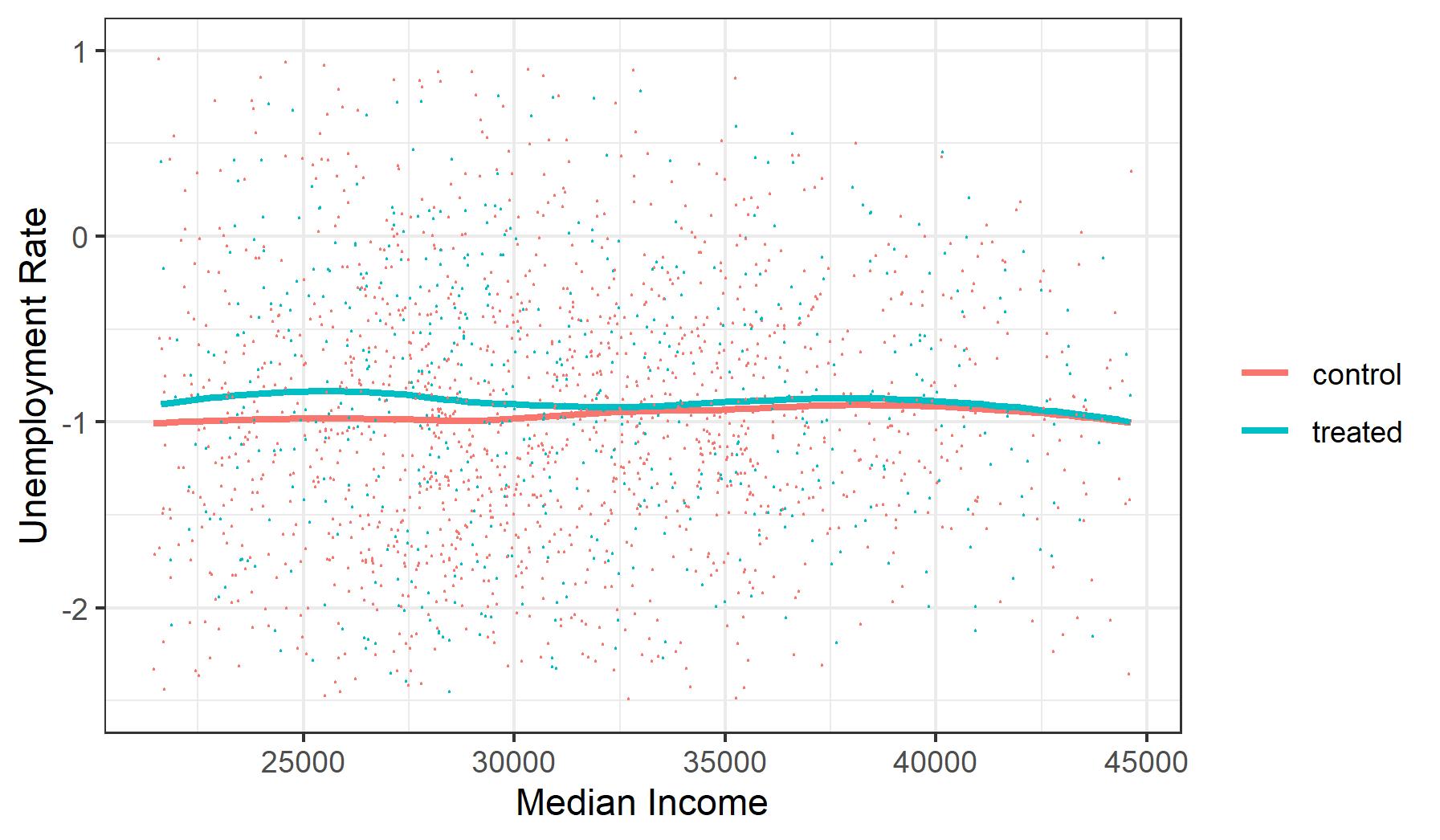}	}
\subfloat[Population]{\includegraphics[scale=0.5]{./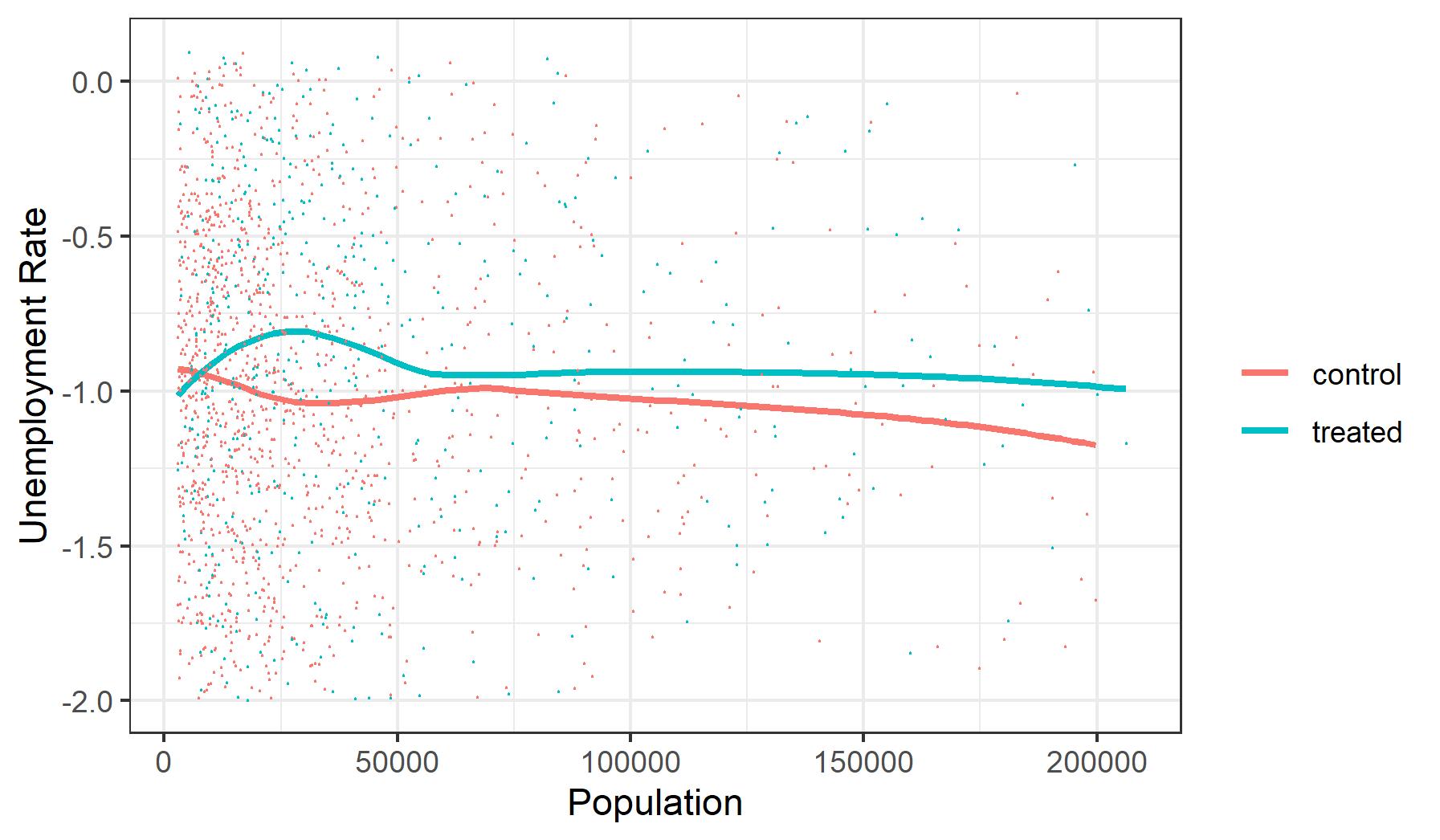}	}
\end{center}	
\caption{Difference in 2005 to 2007 Unemployment Rate \label{meandiff}}
\end{figure}


\begin{figure}[t]
\begin{center}
\subfloat[Median Income]{\includegraphics[trim={0 0 0 0},clip, scale=0.63]{./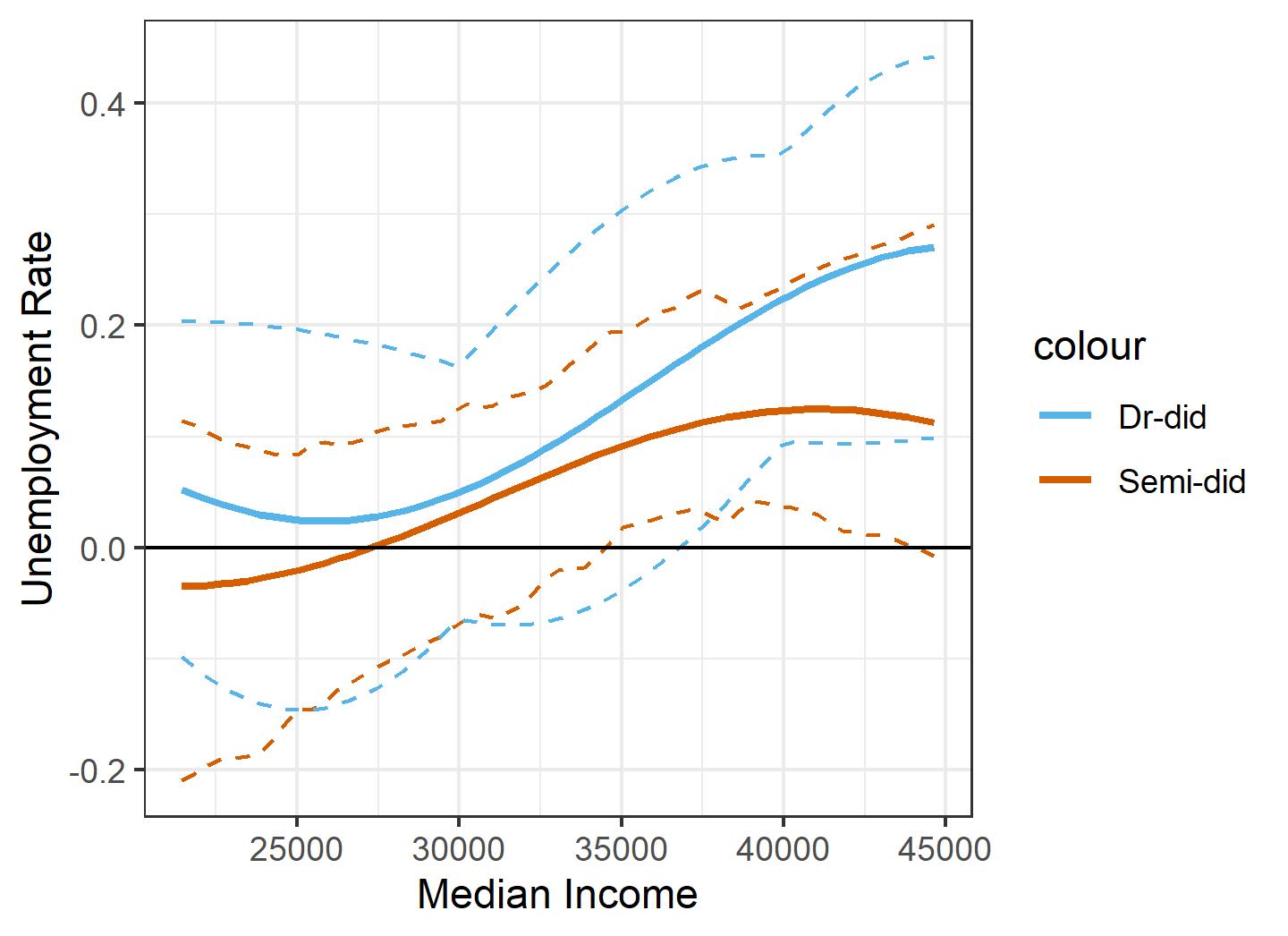}	}
\subfloat[Population]{\includegraphics[trim={0 0 0 0},clip, scale=0.63]{./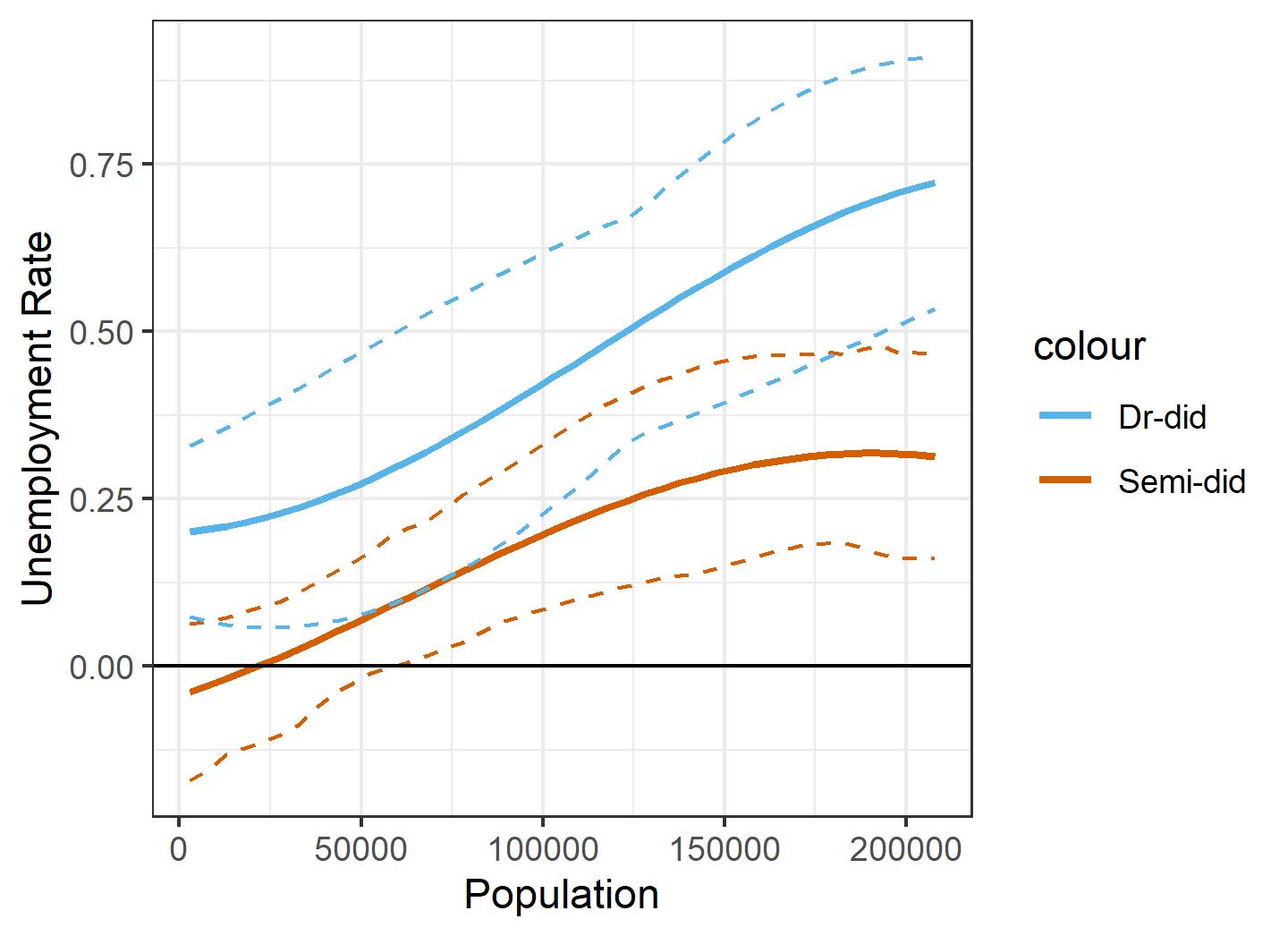}	}
\end{center}	
\caption{Compare Dr-DiD estimator (with 703 covariates) with Ab-did estimator \label{703diff}}
\end{figure}

\section{Discussion}
In this paper, we propose a new doubly robust two-stage difference-in-differences estimator that allows for, but does not require, the number of potential confounding covariates to be greater than the number of observations. Our estimator is robust to model miss-specification and a general set of machine learning tools can be used in our estimation procedure to estimate the propensity score. The outcome equation is modeled as a flexible partially linear form. The rate of convergence is derived for the new estimator and a novel de-bias procedure is proposed for inference. This allows the user to construct confidence intervals for the heterogeneous treatment effects. A simulation study shows promising finite sample performance of our estimator under different data generation processes. Our method is applied to study the effect of the Fair Minimum Wage Act on local unemployment rates and show heterogenous effects could rise due to the differences in demographics. Moreover, an R package for implementing the proposed method is available on Github. More work remains to be done. For example, it will be interesting to consider a similar estimation and inference strategy for panel data, or develop estimators for quantile treatment effect on the treated with high-dimensional covariates. We leave these topics for future studies.

\appendix
\section{Proofs} 
\subsection{Proof of Lemma \ref{doublerobust}}
\begin{proof}
For Part (i):let $\Phi_1(W_i;\theta_{1})$ and $\Phi_{0}(W_i;\theta_{0})$ be postulated models for $\Phi_1(W_i)$ and $\Phi_0(W_i)$. Let $\pi(W_i;\vartheta_{0})$ be a postulated model for the true propensity score $\mathbb{P}(D_i=1|W_i)$.
Since $\rho_0 = \frac{D_i - \mathbb{P}(D_i=1|W_i)}{\mathbb{P}(D_i=1|W_i)\mathbb{P}(D_i=0|W_i)}$, we have
\begin{equation*}
\begin{split}
&\mathbb{E}\left[\rho_0\Big(\Delta Y(i)-\mathbb{P}(D_i=0|W_i)\Phi_1(W_i) -\mathbb{P}(D_i=1|W_i)\Phi_0(W_i) \Big)|W_i \right]\\
& = \mathbb{E}\left[\frac{(D_i - \mathbb{P}(D_i=1|W_i))}{\mathbb{P}(D_i=1|W_i)\mathbb{P}(D_i=0|W_i)}\Delta Y(i) \Big| W_i\right]\\
& - \mathbb{E}\left[\frac{D_i - \mathbb{P}(D_i=1|W_i)}{\mathbb{P}(D_i=1|W_i)\mathbb{P}(D_i=0|W_i)}\Big(\mathbb{P}(D_i=0|W_i)\Phi_1(W_i) + \mathbb{P}(D_i=1|W_i)\Phi_0(W_i)  \Big)\Big| W_i \right]\\
& = \underbrace{\mathbb{E}\left[\frac{D_i\Delta Y(i)}{\mathbb{P}(D_i=1|W_i)} \Big| W_i \right] - \mathbb{E}\left[\frac{D_i - \mathbb{P}(D_i=1|W_i)}{\mathbb{P}(D_i=1|W_i)}\Phi_1(W_i) \Big| W_i \right]}_{PartL1.1}\\
& - \underbrace{\mathbb{E}\left[\frac{(1-D_i)\Delta Y(i)}{\mathbb{P}(D_i=0|W_i)} \Big| W_i \right] - \mathbb{E}\left[\frac{D_i - \mathbb{P}(D_i=1|W_i)}{\mathbb{P}(D_i=0|W_i)}\Phi_0(W_i) \Big| W_i \right].}_{PartL1.2}\\
\end{split}
\end{equation*}

First we consider Part L1.1:
\begin{equation*}
\begin{split}
&\mathbb{E}\left[\frac{D_i\Delta Y(i)}{\mathbb{P}(D_i=1|W_i)} \Big| W_i \right] - \mathbb{E}\left[\frac{D_i -\mathbb{P}(D_i=1|W_i)}{\mathbb{P}(D_i=1|W_i)}\Phi_1(W_i) \Big| W_i \right]\\
& = \mathbb{E}\left[\frac{D_i(Y^1(i, 1) - Y^0(i, 1))}{\mathbb{P}(D_i=1|W_i]} - \frac{D_i - \mathbb{P}(D_i=1|W_i)}{\mathbb{P}(D_i=1|W_i)}\Phi_1(W_i) \Big| W_i   \right]\\
& = \mathbb{E}\left[\Phi_1(W_i)  + \frac{D_i}{\mathbb{P}(D_i=1|W_i)}((Y^1(i, 1) - Y^0(i, 1)) -\Phi_1(W_i)) \Big| W_i   \right]\\
& = \Phi_1(W_i)  + \mathbb{E}\left[\frac{D_i}{\mathbb{P}(D_i=1|W_i)}((Y^1(i, 1) - Y^0(i, 1)) -\Phi_1(W_i))\Big| W_i \right].\\
\end{split}
\end{equation*}

Notice that when $\pi(W_i;\vartheta_{0})$ is misspecified, but $\Phi_1(W_i;\theta_{1})$ is correctly specified so $\Phi_{1}(W_{i})=\Phi_1(W_i;\theta_{1})$,  we have

\begin{equation*}
\begin{split}
&\mathbb{E}\left[\frac{D_i }{\pi(W_i;\vartheta_{0})}((Y^1(i, 1) - Y^0(i, 1)) - \Phi_1(W_i)) \Big| W_i   \right]\\
& = \mathbb{E}\left[((Y^1(i, 1) - Y^0(i, 1)) - \Phi_1(W_i)) \Big| W_i,  D_i = 1  \right]\frac{1-\pi(W_i;\vartheta_{0})}{\pi(W_i;\vartheta_{0})}=0.\\
\end{split}
\end{equation*}

When $\Phi_1(W_i;\theta_{1})$ is misspecified, but the propensity score $\pi(W_i;\vartheta_{0})$ is correctly specified so $\pi(W_i)=\pi(W_i;\vartheta_{0})$, we have

\begin{equation*}
\begin{split}
&\mathbb{E}\left[\frac{D_i\Delta Y(i)}{\mathbb{P}(D_i=1|W_i)} \Big| W_i \right] - \mathbb{E}\left[\frac{D_i - \mathbb{P}(D_i=1|W_i)}{\mathbb{P}(D_i=1|W_i)}\Phi_1(W_i,\theta) \Big| W_i \right]\\
& =\mathbb{E}[Y^1(i, 1) - Y^0(i, 0)|W_i, D_i=1) - \mathbb{E}\left[D_i - \mathbb{ P}(D_i=1|W_i) \Big| W_i \right] \frac{\Phi_1(W_i,\theta_{1})}{\mathbb{P}(D_i=1|W_i)}\\
& = \mathbb{E}[Y^1(i, 1) - Y^0(i, 0)|W_i, D_i=1). 
\end{split}
\end{equation*}

We next consider Part L1.2: 

When $\pi(W_i,\gamma)$ is misspecified, but $\Phi_0(W_i,\theta_{0})$ is correctly specified so $\Phi_{0}(W_i)=\Phi_0(W_i,\theta_{0})$, Part L1.2 is
\begin{equation*}
\begin{split}
&\mathbb{E}\left[\frac{(1-D_i)\Delta Y(i)}{1-\pi(W_i;\vartheta_{0})} \Big| W_i \right] + \mathbb{E}\left[\frac{D_i - \pi(W_i;\vartheta_{0})}{1-\pi(W_i;\vartheta_{0})}\Phi_0(W_i) \Big| W_i \right]\\
& = \mathbb{E}\left[\frac{(1-D_i)(Y^0(i, 1) - Y^0(i, 0))}{1-\pi(W_i;\vartheta_{0})} - \frac{(1-D_i) - \mathbb{P}(D_i=0|W_i)}{1-\pi(W_i;\vartheta_{0})}\Phi_0(W_i) \Big| W_i   \right)\\
&=  \mathbb{E}\left[(Y^0(i, 1) - Y^0(i, 0)) + \frac{(1-D_i) -(1-\pi(W_i;\vartheta_{0}))}{1-\pi(W_i;\vartheta_{0})}((Y^0(i, 1) - Y^0(i, 0)) - \Phi_0(W_i)) \Big| W_i   \right]\\
 &=\Phi_0(W_i) + \mathbb{E}\left[\frac{(1-D_i) }{1-\pi(W_i;\vartheta_{0})}((Y^0(i, 1) - Y^0(i, 0)) - \Phi_0(W_i)) \Big| W_i   \right] = \Phi_0(W_i).\\
\end{split}
\end{equation*}
And when $\Phi_0(W_i;\theta_{0})$ is misspecified, but propensity score $\pi(W_i;\vartheta_{0})$ is correctly specified, 
\begin{equation*}
\begin{split}
&\mathbb{E}\left[\frac{(1-D_i)\Delta Y(i)}{\mathbb{P}(D_i=0|W_i)} \Big| W_i \right]+ \mathbb{E}\left[\frac{D_i - \mathbb{P}(D_i=1|W_i]}{\mathbb{P}(D_i=0|W_i)} \Phi_0(W_i;\theta_{0}) \Big| W_i \right]\\
& = \mathbb{E}\left[(Y^0(i,1) - Y^0(i,0)) \Big| W_i, D_i=0 \right]+ \mathbb{E}\left[\frac{D_i - \mathbb{P}(D_i=1|W_i)}{\mathbb{P}(D_i=0|W_i)}\Phi_0(W_i;\theta_{0}) \Big| W_i \right]\\
& = \mathbb{E}\left[(Y^0(i,1) - Y^0(i,0)) \Big| W_i, D_i=0 \right] + \mathbb{E}\left[D_i - \mathbb{P}(D_i=1|W_i) \Big| W_i \right]\frac{\Phi_0(W_i;\theta_{0})}{\mathbb{P}(D_i=0|W_i)}\\
& = \mathbb{E}\left[(Y^0(i,1) - Y^0(i,0)) \Big| W_i, D_i=1 \right]. \qquad(\text{By Assumption \ref{Ass:parallel}})
\end{split}
\end{equation*}

The result in Part (ii) follows from Part (i) and direct calculation.
\end{proof}

\subsection{Proof of Theorem \ref{thm1}}
Let $B(s_0,p)$ be a set of $p-$ dimensional vectors with at most $s_{0}$ non-zero coordinates. Let $\widehat{Q}_{z}=\frac{1}{n}\sum_{i=1}^{n}\psi^{k_{n}}(Z_{i})\psi^{k_{n}}(Z_{i})^{\top}$. Recall the definition of $S_i = X_i^{\top}\beta_{0}-\psi^{k_n}(Z_i)^{\top}\gamma_{n0} + r_{ni} +\epsilon_{in} $.
Let $\bm{v}_{n}(\hat f, \hat \beta) = \bm{v}_{n}(\bm{Z}, \hat f) + \bm{v}_{n}(\bm{X}, \hat \beta)$, where $\bm{v}_{n}(\bm{Z}, \hat f)=\Psi_{n}(\hat{\gamma}_{n}-\gamma_{n0})$, $\bm{v}_{n}(\bm{X}, \hat \beta)=\bm{X}(\hat{\beta}-\beta_{0})$.
Let $\tilde{\bm{v}}_{n}(\bm{Z}, \hat f)=\Psi_{n}(\hat{\gamma}_{n}-\gamma_{n0})+\Pi_{n,X|Z}(\hat{\beta}-\beta_{0})$
and $\tilde{\bm{v}}_{n}(\bm{X}, \hat \beta)=\tilde{\bm{X}}(\hat{\beta}-\beta_{0}).$ Then $\bm{v}_{n}(\hat f, \hat \beta)=\tilde{\bm{v}}_{n}(\bm{Z}, \hat \gamma_{n})+\tilde{\bm{v}}_{n}(\bm{X}, \hat \beta)$. 
Define $\iota_{ni} = r_{ni} + \epsilon_{ni}$ and  $\bm{\iota}_{n}$ as the vector with $\iota_{in}$ on the $i$th position. Let $\bm{\iota}_{n}^{*}=P_Z\bm{\iota}_{n}$, we have $\bm{\iota}_{n}^{*\top}\tilde{\bm{v}}_{n}(\bm{Z}, \hat f)=\bm{\iota}_{n}^{\top}\tilde{\bm{v}}_{n}(\bm{Z}, \hat f)$ for $\bm{\iota}_{n}=\bm{\epsilon}+\bm{r}_n $. Define the following norm and set:

\begin{align*}
\tau(\beta, f, R) &= R^{-1}\lambda\|\beta\|_1+\|X\beta + f\|_{P,2}\\
\mathcal{M}_1(R)&=\left\{f: \|f-f_{n0}\|_{P,2} \leq 4R, f\in \mathcal{F} \right\}\\
\mathcal{M}_2(R)&=\left\{(\beta,f):\tau(\beta,f)\leq  R,\beta\in B(s_0,p) , f\in \mathcal{M}_1(R) \right\} \\
\mathcal{T}_{1}&=\left\{\sup_{\mathcal{M}_2(R)}|\|X^{\top}\beta+f\|^{2}_{n}-\|X^{\top}\beta+f\|^{2}_{P,2}|\leq  R^{2}/96\right\}\\
\mathcal{T}_{2}& =\{\|\bm{\iota}_{n}^{*}\|_{n}^{2}\leq R^{2}/192\}\\
\mathcal{T}_{3}&=\{|\bm{\iota}_{n}^{\top}\tilde{\bm{v}}_{n}(\bm{X}, \beta)/n|\leq \lambda/8\|{\beta}-\beta_0\|_{1}, (\beta,f) \in \mathcal{M}_2(R)\}\\
\mathcal{T}_{4}&=\{\Lambda_{\tilde{X},n}^{2}(s_{0})\geq\Lambda_{\tilde{X}}^{2}(s_{0})/2\}\\
\mathcal{T}_{5}&=\{|(\hat{\bm{S}} - \bm{S})^{\top}{\bm{v}}_{n}(\bm{X}, \beta)/n|\leq \lambda/8\|{\beta}-\beta_0\|_{1}, (\beta,f) \in \mathcal{M}_2(R)\}\\
\mathcal{T}_{6}&=\{\sqrt{k_n}\cdot \|(\hat{\bm{S}} - \bm{S})^{\top} \Psi_n /n\|_\infty \leq R/384\}\\
\mathcal{T}_{7}&=\{|(\hat{\bm{S}} - \bm{S})^{\top}\tilde{\bm{v}}_{n}(\bm{X}, \beta)/n|\leq \lambda/4\|\hat{\beta}-\beta_0\|_{1}, (\beta,f) \in \mathcal{M}_2(R)\}
\end{align*} 
and 
\[
\Lambda_{\tilde{X},n}^{2}(s_{0})=\min_{\delta\in\mathbb{R}^{p}\backslash\{0\},\|\delta_{S_{0}^{C}}\|_{1}\leq3\sqrt{s_{0}}\|\delta_{S_{0}}\|_{2}}\frac{\delta^{\top}\mathbb{E}_{n}\left[\tilde{X}_{i}\tilde{X}_{i}^{\top}\right]\delta}{\|\delta_{S_{0}}\|_{2}^{2}}.
\]

 \proof 
Let $(\hat{\beta},\hat{f})$ be the solution to then minimization problem in Equation \eqref{eq:Estimator}, define 
\[t := \frac{4R}{4R+\|\hat f - f_{n0}\|_{P,2}} \]
and $\tilde \gamma_n := t\hat \gamma_n + (1-t) \gamma_{n0}$ and thus $\tilde f=\tilde{f}(z)= \psi^{k_n}(z)^{\top}\tilde \gamma_n$. By convexity,  
\[
\Vert \hat{\bm{S}}- \tilde{\bm{X}}\hat{\beta} - \Pi_{n,X|Z}\hat{\beta} - \Psi_{n}\tilde{\gamma}_{n}\Vert^{2}_n+\lambda\|\hat{\beta}\|_{1}\leq\Vert \hat{\bm{S}}- \tilde{\bm{X}}{\beta_0} - \Pi_{n,X|Z}{\beta_0} - \Psi_{n}{\gamma}_{0}\Vert^{2}_n+\lambda\|\beta_{0}\|_{1},
\]
By the definition of $\bm{S}$ and $\bm{\iota}$ and Lemma \ref{lemR},  
\begin{equation}\label{eqlem3}
\begin{split}
\Vert\tilde{\bm{X}}(\hat{\beta}-\beta_0)\Vert^{2}_n+\lambda\|\hat{\beta}\|_{1} &\leq2\left|\bm{\iota}^{\top}\tilde{\bm{X}}\left(\hat{\beta}-\beta_{0}\right)\right| + 2\left|(\hat{\bm{S}}-\bm{S})^{\top}\tilde{\bm{X}}\left(\hat{\beta}-\beta_{0}\right)\right|  \\
&+2\left|(\tilde{\gamma}_{n}-\gamma_{n0})^{\top} \Psi_{n}^{\top}\tilde{\bm{X}}\left(\hat{\beta}-\beta_{0}\right)\right| \\
&+ 2\left|(\hat \beta - \beta_0)^{\top} \Pi_{n,X|Z}^{\top} \tilde{\bm{X}}\left(\hat{\beta}-\beta_{0}\right)\right| + \lambda\|\beta_{0}\|_{1}.
\end{split}
\end{equation}
Notice that $(\tilde{\gamma}_{n}-\gamma_{n0})^{\top}\Psi_{n}^{\top}\tilde X\left(\hat{\beta}-\beta_{0}\right) = 0$ and $(\hat \beta - \beta_0)^{\top}\Pi_{n,X|Z}^{\top} \tilde X\left(\hat{\beta}-\beta_{0}\right) = 0$. On the set  $\mathcal{T}_{3} \cap \mathcal{T}_{7}$
\[\mathbb{E}_{n}\left[\tilde{X}_{i}^\top\left(\beta_{0}-\hat{\beta}\right)\right]^2+\lambda\|\hat{\beta}\|_{1}  \leq \frac{\lambda}{2}\|\hat \beta - \beta_0\|_1 + \lambda\|\beta_0\|_{1}\]
Subtract $\|\hat\beta_{S_0}\|$ and add $\lambda\|\hat \beta_{S_0} - \beta_{0S_0}\|_1$ to both sides and from Assumption \ref{asseigen} (ii), on the set $\mathcal{T}_{4}$,
\begin{align*}
2\mathbb{E}_{n}\left[\tilde{X}_{i}^\top\left(\beta_{0}-\hat{\beta}\right)\right]^2&+\lambda\|\hat \beta - \beta_0\|_1  \leq 4\lambda\|\hat \beta_{S_0} - \beta_{0S_0}\|_1 \leq 4\lambda \sqrt{s_0}\|\hat \beta - \beta_0\|_1\\
& \leq \frac{4\lambda \sqrt{s_0}}{\Lambda_{\tilde{X}}(s_0)}\mathbb{E}_{n}\left[\tilde{X}_{i}^\top\left(\beta_{0}-\hat{\beta}\right)\right] \leq \mathbb{E}_{n}\left[\tilde{X}_{i}^\top\left(\beta_{0}-\hat{\beta}\right)\right]^2 + \frac{\lambda^2 s_0}{\Lambda^2_{\tilde{X}}(s_0)}
\end{align*}
As a result, from Lemma \ref{lem_iotav}, \ref{lem:eigenX}, and \ref{lem_S},  with probability approaching one, $\|\hat \beta - \beta_0\|_1 \leq \lambda s_0/\Lambda^2_{\tilde{X}}(s_0)$ and $\mathbb{E}_{n}\left[\tilde{X}_{i}^\top\left(\beta_{0}-\hat{\beta}\right)\right]^2 \leq \lambda^2 s_0/\Lambda^2_{\tilde{X}}(s_0)$. 

Lemma \ref{lemR}, \ref{lem4}, \ref{lem:eigen}, \ref{lem_iotav} and \ref{lem_S} imply that $\|\bm{X}(\beta_{0}-\hat{\beta})+(f_{n0}-\tilde{f})\|^{2}_{P,2} \leq R^2$ with probability approaching one. By orthogonal decomposition, it implies that $\|\tilde{\bm{X}}(\beta_{0}-\hat{\beta})\|^{2}_{P,2}\leq R^{2}$ and $\|\Pi_{n,X|Z}(\beta_{0}-\hat{\beta})+f_{n0}-\tilde{f}\|^{2}_{P,2} \leq R^{2}$ with probability approaching one. Then
$\|\Pi_{n,X|Z}(\beta_{0}-\hat{\beta})\|^{2}_{P,2}\leq_{p} 3R^2$ by Assumption \ref{asseigen} (iii) and \ref{assfeigen} (iii), so we have 
\[
\|\tilde{f}-f_{n0}\|_{P,2}^{2}\leq \|\Pi_{n,X|Z}(\beta_{0}-\hat{\beta})+f_{n0}-\tilde{f}\|_{P,2}^{2}+\|\Pi_{n,X|Z}(\beta_{0}-\hat{\beta})\|_{P,2}^{2}\leq_{p} 4R^2.
\]
which implies $\|\tilde{f}-f_{n0}\|_{P,2}\leq_{p} 2R$. Combining with Assumption \ref{assum:chooselambda}(ii), it yields that
\[\|\hat{f}-f_{n0}\|_{P,2}\leq_{p} 4R.\]
\qed \\

\begin{lem}\label{lemR}Suppose that Assumptions \ref{Ass:parallel}-\ref{assum:chooselambda} are satisfied. For $\tilde f$ as defined in the proof of Theorem \ref{thm1},
\[\tau(\hat\beta - \beta_0, \tilde f - f_{n0}, R) \leq R\]
on the event $\mathcal{T}_{1} \cap \mathcal{T}_{2} \cap \mathcal{T}_{3}  \cap \mathcal{T}_{5} \cap \mathcal{T}_{6}$.\end{lem}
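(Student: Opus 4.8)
The plan is to run the penalized least–squares ``basic inequality'' argument, but \emph{localized} by convexity so that the uniform empirical-process bounds encoded in $\mathcal{T}_1,\mathcal{T}_2,\mathcal{T}_3,\mathcal{T}_5,\mathcal{T}_6$ become applicable at the estimator. First I would observe that the criterion in \eqref{eq:Estimator}, as a function of $(\beta,\gamma_n)$, is convex (a quadratic in the fit plus the $\ell_1$ penalty) and is minimized at $(\hat\beta,\hat\gamma_n)$. Interpolating the sieve coefficient toward $\gamma_{n0}$ with the specific weight $t=4R/(4R+\|\hat f-f_{n0}\|_{P,2})$ produces $\tilde\gamma_n$ whose increment satisfies $\|\tilde f-f_{n0}\|_{P,2}=t\,\|\hat f-f_{n0}\|_{P,2}<4R$ \emph{unconditionally}, so $\tilde f\in\mathcal{M}_1(R)$ regardless of how large $\|\hat f-f_{n0}\|_{P,2}$ actually is. Convexity transfers the optimality inequality to this interpolated point, which is exactly what lets me evaluate the suprema defining $\mathcal{T}_1,\mathcal{T}_3,\mathcal{T}_5$ (taken over the localized region $\mathcal{M}_2(R)$) at the estimator, once the cone condition below places $\hat\beta-\beta_0$ in the admissible set.

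Next I would expand the transferred inequality. Writing the reference residual as $\hat{\bm{S}}-\bm{X}\beta_0-\Psi_n\gamma_{n0}=(\hat{\bm{S}}-\bm{S})+\bm{\iota}_n$ and subtracting off the fitted increment, the autocorrelation terms $\|\hat{\bm{S}}-\bm{S}\|_n^2$, $\|\bm{\iota}_n\|_n^2$ and their cross term cancel on both sides, leaving a bound on the prediction norm of the increment $\bm{v}:=\bm{X}(\hat\beta-\beta_0)+(\tilde f-f_{n0})$ against the penalty gap $\lambda\|\beta_0\|_1-\lambda\|\cdot\|_1$ plus the two inner products $n^{-1}\bm{\iota}_n^\top\bm{v}$ and $n^{-1}(\hat{\bm{S}}-\bm{S})^\top\bm{v}$. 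I would split each using the $\|\cdot\|_n$-orthogonal decomposition $\bm{v}=\tilde{\bm{v}}_n(\bm{X},\hat\beta)+\tilde{\bm{v}}_n(\bm{Z},\hat f)$, whose first (the $\tilde{\bm{X}}$) component is orthogonal to the column space of $\Psi_n$. The $\tilde{\bm{X}}$ component of $n^{-1}\bm{\iota}_n^\top\bm{v}$ is controlled directly by $\mathcal{T}_3$ ($\le \tfrac{\lambda}{8}\|\hat\beta-\beta_0\|_1$); the column-space component interacts with $\bm{\iota}_n$ only through its projection $\bm{\iota}_n^{*}=P_Z\bm{\iota}_n$, so Cauchy--Schwarz with $\mathcal{T}_2$ ($\|\bm{\iota}_n^{*}\|_n\le R/\sqrt{192}$) bounds it by a small multiple of $R\,\|\tilde{\bm{v}}_n(\bm{Z},\hat f)\|_n$. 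Symmetrically, $n^{-1}(\hat{\bm{S}}-\bm{S})^\top\bm{v}$ splits into an $\bm{X}$ part bounded by $\mathcal{T}_5$ and a sieve part bounded by $\mathcal{T}_6$ after Hölder and $\|\gamma\|_1\le\sqrt{k_n}\|\gamma\|_2$, again yielding a small multiple of $R$ times the column-space prediction norm.

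Finally I would convert $\|\bm{v}\|_n^2$ to $\|\bm{v}\|_{P,2}^2$ via $\mathcal{T}_1$ (losing only $R^2/96$), use the $S_0$/$S_0^c$ decomposition of the penalty gap to extract both a cone condition on $\hat\beta-\beta_0$ and a leading term $\lambda\|(\hat\beta-\beta_0)_{S_0}\|_1$, and convert the latter into the prediction norm through the \emph{population} restricted-eigenvalue condition Assumption \ref{asseigen}(ii) (this is why the empirical event $\mathcal{T}_4$ is absent from the hypotheses). Young's inequality then absorbs every term that is linear in $\|\bm{v}\|_{P,2}$ into $\tfrac12\|\bm{v}\|_{P,2}^2$, and collecting the residual constants against $R^2$ while invoking the calibration $2\lambda^2 s_0/\Lambda^2_{\bar X}(s_0)\lesssim R^2\lesssim\lambda$ of Assumption \ref{assum:chooselambda} delivers $\|\bm{v}\|_{P,2}\le R$ together with $R^{-1}\lambda\|\hat\beta-\beta_0\|_1\le R$, hence $\tau(\hat\beta-\beta_0,\tilde f-f_{n0},R)\le R$. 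I expect the genuine obstacle to be the localization step itself: the events $\mathcal{T}_1,\mathcal{T}_3,\mathcal{T}_5$ hold only uniformly over $\mathcal{M}_2(R)$, yet the target $\tau\le R$ is essentially the constraint defining $\mathcal{M}_2(R)$, so the argument is self-referential; it is precisely the convex interpolation with this $t$, guaranteeing $\tilde f\in\mathcal{M}_1(R)$ a priori, that breaks the circularity, and the bookkeeping of the numerical constants ($96,192,384$ and the $\lambda/8$ factors) must be arranged so that the absorbed pieces leave a clean $R^2$ on the right-hand side.
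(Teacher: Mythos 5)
You have correctly identified the central difficulty --- that $\mathcal{T}_1$, $\mathcal{T}_3$, and $\mathcal{T}_5$ are suprema taken only over $\mathcal{M}_2(R)$, while the target $\tau(\hat\beta-\beta_0,\tilde f-f_{n0},R)\leq R$ is essentially the membership condition for $\mathcal{M}_2(R)$ --- but your resolution of it fails. The interpolation $t=4R/(4R+\|\hat f-f_{n0}\|_{P,2})$ guarantees only $\|\tilde f-f_{n0}\|_{P,2}<4R$, i.e.\ $\tilde f\in\mathcal{M}_1(R)$; it gives no control whatsoever on $\lambda\|\hat\beta-\beta_0\|_1$ or on the joint prediction norm $\|X(\hat\beta-\beta_0)+\tilde f-f_{n0}\|_{P,2}$, so the pair $(\hat\beta-\beta_0,\tilde f-f_{n0})$ is not known to lie in $\mathcal{M}_2(R)$, and the moment you ``evaluate the suprema at the estimator'' your argument assumes its own conclusion. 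Your aside that ``the cone condition below places $\hat\beta-\beta_0$ in the admissible set'' does not repair this: the cone condition is itself extracted from the basic inequality, which you can only exploit after invoking $\mathcal{T}_3$ and $\mathcal{T}_5$, and in any case the constraint defining $\mathcal{M}_2(R)$ is $\tau(\beta,f)\leq R$ together with the sparsity restriction, not the cone.

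The paper closes the circle with a \emph{second} convex rescaling that is missing from your proposal. It sets $\tilde t=R/\bigl(R+\tau(\hat\beta-\beta_0,\tilde f-f_{n0},R)\bigr)$, $\tilde\beta=\tilde t\hat\beta+(1-\tilde t)\beta_0$ and $\vardbtilde{f}=\tilde t\tilde f+(1-\tilde t)f_{n0}$; by homogeneity of $\tau$ in the increment, $\tau(\tilde\beta-\beta_0,\vardbtilde{f}-f_{n0},R)=\tilde t\,\tau(\hat\beta-\beta_0,\tilde f-f_{n0},R)\leq R$ holds \emph{automatically}, so the doubly interpolated point does lie where $\mathcal{T}_1,\mathcal{T}_2,\mathcal{T}_3,\mathcal{T}_5,\mathcal{T}_6$ apply, and convexity of the penalized criterion transfers the basic inequality to it. Running your chain of bounds at this point --- your treatment of the $P_Z$-projected noise $\bm{\iota}_n^{*}$ via $\mathcal{T}_2$, the $\mathcal{T}_6$ bound via $\|\gamma\|_1\leq\sqrt{k_n}\|\gamma\|_2$ and the minimum eigenvalue of $\widehat{Q}_{z}$, the population restricted eigenvalue with Young's inequality and the calibration $144\lambda^2 s_0/\Lambda^2_{\bar X}(s_0)\leq R^2$, all of which match the paper --- yields the \emph{strict improvement} $\tau(\tilde\beta-\beta_0,\vardbtilde{f}-f_{n0},R)\leq R/2$ (the paper obtains $23R/72$). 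Since $r\mapsto Rr/(R+r)$ is increasing and exceeds $R/2$ whenever $r>R$, the bound $\tilde t\,\tau\leq R/2$ forces $\tau(\hat\beta-\beta_0,\tilde f-f_{n0},R)\leq R$. Without this rescale-then-improve step none of the uniform events can legitimately be invoked, so the gap is essential rather than cosmetic; with it, the rest of your outline goes through as written.
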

\proof
Define 
\[\tilde t = \frac{R}{R + \tau(\hat\beta - \beta_0, \tilde f - f_{n0}, R) }\]
Let $\tilde \beta = \tilde t\hat\beta + (1-\tilde t) \beta_0$, $\vardbtilde{f} = \tilde t\tilde f + (1-\tilde t)f_{n0}$. Notice that the definition of $\vardbtilde{f}$ implies $\vardbtilde{\gamma}_{n} = \tilde t \tilde \gamma_{n} + (1-\tilde t) \gamma_{n0}$ and $\vardbtilde{f} \in \mathcal{M}_1(R)$. Since $\tau(\tilde\beta - \beta_0, \vardbtilde{f} - f_{n0}, R) = \tilde t\tau(\hat\beta - \beta_0, \tilde f - f_{n0}, R) \leq R $. Thus $(\tilde\beta - \beta_0, \vardbtilde{f} - f_{n0}) \in \mathcal{M}_2(R)$. To show $\tau(\hat\beta - \beta_0, \tilde f - f_{n0}, R) \leq R$, it is then sufficient to show $\tau(\hat\beta - \beta_0, \vardbtilde{f} - f_{n0}, R) \leq R/2$. From the definition of $(\hat{\beta},\vardbtilde{f})$, and convexity,   
\[
\mathbb E_n \left[\left\{\hat S_i-X_i^{\top}\tilde{\beta}-\psi^{k_n}(Z_i)^{\top}\vardbtilde{\gamma}_{n}\right\}^2\right]+\lambda\|\tilde{\beta}\|_{1}\leq\mathbb E_n \left[\left\{\hat S_i-X_i^{\top}\beta_{0}-\psi^{k_n}(Z_i)^{\top}\gamma_{n0}\right\}^{2}\right]+\lambda\|\beta_{0}\|_{1},
\]
Then by the definition of $\widehat{\bm{S}}$ and $\bm{\iota}_n$, 
\begin{align}
 &\Vert\Psi_n(\vardbtilde{\gamma}_n-\gamma_{n0})+\bm{X}(\hat{\beta}-\beta_0)\Vert^{2}_n+\lambda\|\hat{\beta}\|_{1}\nonumber\\
&\leq  2(\bm{\iota}_{n}+(\hat{\bm{S}} - \bm{S}))^{\top}\left(\Psi_{n}\left(\vardbtilde{\gamma}_{n}-\gamma_{n0}\right)/n+\bm{X}\left(\hat{\beta}-\beta_{0}\right)/n\right)+\lambda\|\beta_{0}\|_{1}  \label{eq:L3_1}
\end{align}
First notice that 
\begin{align*}
	\left|(\hat{\bm{S}} - \bm{S})^{\top}\Psi_{n}\left(\vardbtilde{\gamma}_{n}-\gamma_{n0}\right)/n\right|
	&\leq \|(\hat{\bm{S}} - \bm{S})^{\top} \Psi_n /n\|_\infty \|\vardbtilde{\gamma}_{n}-\gamma_{n0}\|_1\\
	& \leq \sqrt{k_n}\|(\hat{\bm{S}} - \bm{S})^{\top} \Psi_n /n\|_\infty \|\vardbtilde{\gamma}_{n}-\gamma_{n0}\|_2 \\
	&\leq \sqrt{k_n}/\Lambda_{min,(\widehat{Q}_{z}} \cdot \|(\hat{\bm{S}} - \bm{S})^{\top} \Psi_n /n\|_\infty \|\Psi_n (\vardbtilde{\gamma}_n-\gamma_{n0})\|_n. 
\end{align*}
where $\Lambda_{min,(\widehat{Q}_{z}}$ is the minimum eigenvalue of $\mathbb{E}(\Psi_n^{\top}\Psi_n/n)$ and is bounded away from 0 by Assumption \ref{assfeigen}. On the set $\mathcal{T}_{5}$ and $\mathcal{T}_{6}$, and for $\vardbtilde{f}\in \mathcal{M}_1(R)$
\begin{align}\label{eq:L3_new}
	2\left|(\hat{\bm{S}} - \bm{S})^{\top}{\bm{v}}_{n}(\vardbtilde{f}, \tilde \beta)/n\right|\leq \lambda/4\|\tilde{\beta}-\beta_0\|_{1} +  R^2/192.
\end{align}
By the Cauchy-Schwarz inequality, 
\[
2\left|\bm{\iota}_{n}^{*\top}\tilde{\bm{v}}_{n}(\bm{Z}, \vardbtilde{f})/n\right|\leq 2\|\bm{\iota}_{n}^{*}\|_{n}\cdot\|\tilde{\bm{v}}_{n}(\bm{Z}, \vardbtilde{f})\|_{n}\leq 2\|\bm{\iota}_{n}^{*}\|_{n}^{2}+\frac{1}{2}\|\tilde{\bm{v}}_{n}(\bm{Z}, \vardbtilde{f})\|_{n}^{2}.
\]
Therefore, 
\begin{align}
& 2\bm{\iota}_{n}^{\top}\Psi_{n}(\vardbtilde{\gamma}_{n}-\gamma_{n0})/n+2\bm{\iota}_{n}^{\top}\Pi_{n,X|Z}(\tilde{\beta}-\beta_{0})/n+2\bm{\iota}_{n}^{\top}\tilde{\bm{X}}(\tilde{\beta}-\beta_{0})/n\nonumber\\
& =2\bm{\iota}_{n}^{*\top}\tilde{\bm{v}}_{n}(\bm{Z}, \vardbtilde{\gamma})/n+2\bm{\iota}_{n}^{\top}\tilde{\bm{v}}_{n}(\bm{X}, \tilde \beta)/n\nonumber\\
&\leq 2\|\bm{\iota}_{n}^{*}\|_{n}^{2}+\frac{1}{2}\|\tilde{\bm{v}}_{n}(\bm{Z}, \vardbtilde{f})\|_{n}^{2}+2\bm{\iota}_{n}^{\top}\tilde{\bm{v}}_{n}(\bm{X}, \tilde \beta)/n.\label{eq: L3_2}
\end{align}
Then, (\ref{eq:L3_1}), (\ref{eq:L3_new}),  and (\ref{eq: L3_2}) imply that 
\begin{align*}
 &\Vert\Psi_n(\vardbtilde{\gamma}_n-\gamma_{n0})+\bm{X}(\tilde{\beta}-\beta_0)\Vert^{2}_n+\lambda\|\tilde{\beta}\|_{1} =\|\bm{v}_{n}\|^{2}_{n}+\lambda\|\tilde{\beta}\|_{1} \\
&\leq  2\|\bm{\iota}_{n}^{*}\|_{n}^{2}+\frac{1}{2}\|{\bm{v}}_{n}\|_{n}^{2}+2\bm{\iota}_{n}^{\top}\tilde{\bm{v}}_{n}(\bm{X}, \tilde \beta)/n+\lambda/4\|\tilde{\beta}-\beta_0\|_{1} +  R^2/96 + \lambda\|\beta_0\|_1
\end{align*}
where the last inequality follows from the orthogonal decomposition such that
\[
\|\bm{v}_n(\vardbtilde{f}, \tilde \beta)\|^{2}_{n}=\|\tilde{\bm{v}}_n(\bm{X}, \tilde \beta)\|^{2}_{n}+\|\tilde{\bm{v}}_n(\bm{Z}, \vardbtilde{f})\|^{2}_{n}\geq \|\tilde{\bm{v}}_n(\bm{Z}, \vardbtilde{f})\|^{2}_{n},
\]
which implies that
\begin{align}
	\|\bm{v}_{n}(\vardbtilde{f}, \tilde \beta)\|^{2}_{n}+2\lambda\|\tilde{\beta}\|_{1}&\leq 4\|\bm{\iota}_{n}^{*}\|_{n}^{2}+4\bm{\iota}_{n}^{\top}\tilde{\bm{v}}_{n}(\bm{X}, \tilde \beta)/n+\lambda/2\|\tilde{\beta}-\beta_0\|_{1} \nonumber\\
	&+  R^2/96+2\lambda\|\beta_0\|_1, \label{eq: L3_3}
\end{align}
On the event $\mathcal{T}_{3}$, $|\bm{\iota}_{n}^{\top}\tilde{\bm{v}}_{n}(\bm{X}, \tilde \beta)/n|\leq \lambda/8\|\tilde{\beta}-\beta_0\|_{1}$. Thus \eqref{eq: L3_3} is equivalent to
\begin{equation}
\|\bm{v}_{n}(\vardbtilde{f}, \tilde \beta)\|^{2}_{n}+2\lambda\|\tilde{\beta}\|_{1}\leq 4\|\bm{\iota}_{n}^{*}\|_{n}^{2}+\lambda\|\tilde{\beta}-\beta_0\|_{1}+  R^2/96 + 2\lambda\|\beta_0\|_1.\label{eq:Lemma3:1}
\end{equation}
Because
\begin{align*}
\|\tilde{\beta}-\beta_{0}\|_{1}&=\|\tilde{\beta}_{S_{0}}-\beta_{0,S_{0}}\|_{1}+\|\tilde{\beta}_{S_{0}^{C}}\|_{1},\\
\|\tilde{\beta}\|_{1}&=\|\tilde{\beta}_{S_{0}}\|_{1}+\|\tilde{\beta}_{S_{0}^{C}}\|_{1}\geq\|\beta_{0,S_{0}}\|_{1}-\|\tilde{\beta}_{S_{0}}-\beta_{0,S_{0}}\|_{1}+\|\tilde{\beta}_{S_{0}^{C}}\|_{1},
\end{align*}
(\ref{eq:Lemma3:1}) further implies that
\begin{align*}
&\|\bm{v}_{n}(\vardbtilde{f}, \tilde \beta)\|_{n}^{2}+2\lambda\left(\|\beta_{0,S_{0}}\|_{1}-\|\tilde{\beta}_{S_{0}}-\beta_{0,S_{0}}\|_{1}+\|\hat{\beta}_{S_{0}^{C}}\|_{1}\right)\\
 &\leq \lambda\left(\|\tilde{\beta}_{S_{0}}-\beta_{0,S_{0}}\|_{1}+\|\tilde{\beta}_{S_{0}^{C}}\|_{1}\right)+2\lambda\|\beta_{0}\|_{1}+4\|\bm{\iota}_{n}^{*}\|_{n}^{2}+ R^2/96.
\end{align*}
Since $\left|\|\bm{v}_{n}(\vardbtilde{f}, \tilde \beta)\|_{n}^{2}-\|\bm{v}_{n}(\vardbtilde{f}, \tilde \beta)\|_{P,2}^{2}\right|\leq R^{2}/96$ on $\mathcal{T}_{1}$,
\begin{equation}
\|\bm{v}_{n}(\vardbtilde{f}, \tilde \beta)\|_{P,2}^{2}+\lambda\|\tilde{\beta}_{S_{0}^{C}}\|_{1}\leq3\lambda\|\tilde{\beta}_{S_{0}}-\beta_{0,S_{0}}\|_{1}+4\|\bm{\iota}_{n}^{*}\|_{n}^{2}+ R^2/48.\label{eq:A.5}
\end{equation}
By Assumption \ref{asseigen} (ii), and take $144\lambda^2s_0/\Lambda^{2}_{\bar{X}}(s_0)\leq R^2$,
\begin{align}
&\lambda\|\tilde{\beta}_{S_{0}}-\beta_{0,S_{0}}\|_{1}\leq \lambda\sqrt{s_0}\|\tilde{\beta}_{S_{0}}-\beta_{0,S_{0}}\|_2\leq \lambda\sqrt{s_0}\|\tilde{\bm{X}}(\tilde{\beta}-\beta_{0})\|_{P,2}/\Lambda_{\bar{X}}(s_0)\nonumber\\
&\leq\lambda^{2}s_{0}/\Lambda^2_{\bar{X}}(s_0)+\|\tilde{\bm{X}}(\tilde{\beta}-\beta_{0})\|_{P,2}^{2}/4\leq \lambda^{2}s_{0}/\Lambda^2_{\bar{X}}(s_0)+\|\bm{v}_{n}(\vardbtilde{\gamma}, \tilde \beta)\|^{2}_{P,2}/4\nonumber\\
&\leq R^{2}/144+\|\bm{v}_{n}(\vardbtilde{f}, \tilde \beta)\|^{2}_{P,2}/4, \label{eq:A.6}
\end{align}
where the third inequality follows becaue $ab<a^{2}+b^{2}/4$.

Adding $\lambda\|\tilde{\beta}_{S_{0}}-\beta_{0,S_{0}}\|_{1}$ on both sides of (\ref{eq:A.5}) yields that 
\begin{align*}
\|\bm{v}_{n}(\vardbtilde{f}, \tilde \beta)\|_{P,2}^{2}+\lambda\|\tilde{\beta}-\beta_{0}\|_{1}=&\|\bm{v}_{n}(\vardbtilde{f}, \tilde \beta)\|_{P,2}^{2}+\lambda\|\tilde{\beta}_{S_{0}^{C}}\|_{1}+\lambda\|\tilde{\beta}_{S_{0}}-\beta_{0,S_{0}}\|_{1}\\
\leq& 4\lambda\|\tilde{\beta}_{S_{0}}-\beta_{0,S_{0}}\|_{1}+4\|\bm{\iota}_{n}^{*}\|_{n}^{2}+R^2/24\\
\leq& R^2/36+\|\bm{v}_{n}(\vardbtilde{f}, \tilde \beta)\|_{P,2}^{2}+4\|\bm{\iota}_{n}^{*}\|_{n}^{2}+ R^2/48,
\end{align*}
which implies that 
\[
\lambda\|\tilde{\beta}-\beta_0\|_1\leq 4\|\bm{\iota}_{n}^{*}\|_{n}^{2}+7R^2/144.\label{eq:lambda}
\]
On the set $\mathcal{T}_{2}$, $\|\bm{\iota}_{n}^{*}\|^{2}_{n}\leq R^{2}/192$. Thus we find the bound of $\|\tilde{\beta}-\beta_0\|_1$ such that 
\[
\|\tilde{\beta}-\beta_0\|_1\leq \frac{5}{72}R^{2}/\lambda. \label{eq:beta_1_bound}
\]
On the other hand, substituting (\ref{eq:A.6}) into (\ref{eq:A.5}) and combining with $\|\bm{\iota}_{n}^{*}\|^{2}_{n}\leq R^{2}/192$ yield that 
\[
\|\bm{v}_{n}(\vardbtilde{f}, \tilde \beta)\|^{2}_{P,2}=\|\vardbtilde{f}-f_{n0}+\bm{X}(\tilde{\beta}-\beta_{0})\|^{2}_{P,2}\leq \frac{1}{16} R^{2}.
\]
As a result, 
\[\tau(\tilde\beta - \beta_0, \vardbtilde{f} - f_{n0}, R) = R^{-1}\lambda\|\tilde \beta - \beta_0\|_1+\|\vardbtilde{f}-f_{n0} + \bm{X}(\tilde\beta - \beta_0) \|_{P,2} \leq \frac{23}{72} R \leq R/2\]
\qed

\begin{lem} \label{lem4}
Suppose that Assumptions \ref{Ass:parallel}-\ref{assum:chooselambda} are satisfied, we have $\mathbb{P}(\mathcal{T}_{1} ) = 1-O(1/p \wedge 1/k_n)$
\end{lem}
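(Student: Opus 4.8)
The plan is to control, uniformly over $\mathcal{M}_2(R)$, the fluctuation of the empirical quadratic form around its population counterpart. Writing $h_{\beta,f}:=X^\top\beta+f$, the quantity inside $\mathcal{T}_1$ is $\mathbb{G}_n(h_{\beta,f}^2)$ in the paper's notation (since $\|h\|_n^2=\mathbb{E}_n(h^2)$ and $\|h\|_{P,2}^2=\mathbb{E}(h^2)$), and I would expand the square as
\[
\mathbb{G}_n(h_{\beta,f}^2)=\mathbb{G}_n\big((X^\top\beta)^2\big)+2\,\mathbb{G}_n\big((X^\top\beta)f\big)+\mathbb{G}_n(f^2),
\]
so that it suffices to bound each of the three empirical processes uniformly by $R^2/288$. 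These are, respectively, a purely sparse-linear term, a cross term, and a purely nonparametric term; the first is governed by the dimension $p$ through the sparse support structure, while the last is governed by the sieve/smoothness complexity indexed by $k_n$, which is where the two tail rates $1/p$ and $1/k_n$ originate. The constraints available on $\mathcal{M}_2(R)$ are exactly $\lambda\|\beta\|_1\le R^2$ and $\|h_{\beta,f}\|_{P,2}\le R$ (from $\tau(\beta,f,R)\le R$), the $s_0$-sparsity $\beta\in B(s_0,p)$, the localization $\|f-f_{n0}\|_{P,2}\le 4R$, and the uniform bound $\|f\|_\infty\le M$ from $\mathcal{F}\subseteq C^\alpha_M(\mathcal{Z})$.

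Before bounding the processes I would separate the two components metrically. Using Assumption \ref{assum:data}(iv) together with the eigenvalue conditions in Assumptions \ref{asseigen}(iii) and \ref{assfeigen}, the linear span of $X$ and the space $\mathcal{F}$ are bounded away from collinearity, so the joint constraint $\|X^\top\beta+f\|_{P,2}\le R$ and the localization $\|f-f_{n0}\|_{P,2}\le 4R$ yield separate population bounds $\|X^\top\beta\|_{P,2}=O(R)$ and $\|f-f_{n0}\|_{P,2}=O(R)$. This is what lets me treat the three expanded terms independently.

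For the concentration itself, the key structural fact is that $X_i$ is sub-Gaussian (Assumption \ref{asseigen}(i)) and $f$ is uniformly bounded, so each $h_{\beta,f}(X_i,Z_i)$ is sub-Gaussian and $h_{\beta,f}^2$ is sub-exponential. I would truncate $X_i^\top\beta$ at a level of order $\sqrt{\log(np)}$, bound the discarded tail through the moment/sub-Gaussian tail bounds so that it contributes $o(R^2)$, and then apply a Bernstein-type maximal inequality to the truncated, hence bounded, increments over an $\varepsilon$-net. For the linear term I would union-bound over the $\binom{p}{s_0}$ supports (contributing $\lesssim s_0\log p$ to the exponent) together with an $\varepsilon$-net of each $s_0$-dimensional face; tuning $\varepsilon$ and using $\lambda\gtrsim\sqrt{\log p/n}$ and the choice of $R$ in Assumption \ref{assum:chooselambda} makes the Bernstein exponent dominate $s_0\log p$, giving failure probability $O(1/p)$. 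For the nonparametric term I would use the metric entropy of $C^\alpha_M(\mathcal{Z})$, which is of order $\varepsilon^{-d_z/\alpha}$ with $\alpha\ge d_z/2$, and the sieve bound $\sup_z\|\psi^{k_n}(z)\|\le\xi_0(k_n)$, so that the scaling conditions $\xi_0(k_n)^2\log k_n/n=o(1)$ and $k_n\xi_0(k_n)^2\log p/n=O(1)$ of Assumption \ref{assfeigen}(ii) force the corresponding deviation below $R^2/288$ with failure probability $O(1/k_n)$. The cross term is handled by combining the two nets via Cauchy--Schwarz after truncation, and taking the union of the three exceptional events yields $\mathbb{P}(\mathcal{T}_1)=1-O(1/p\wedge 1/k_n)$.

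I expect the main obstacle to be the fact that the required deviation $R^2/96$ is only a fixed fraction of the localization radius $R^2$, so this is a \emph{local}, constant-level restricted-isometry statement rather than a vanishing-deviation one. Getting the expected supremum of the normalized empirical process below a small absolute constant forces the sample-size-versus-complexity ratios in Assumptions \ref{assfeigen} and \ref{assum:chooselambda} to be used sharply, and the unbounded sub-Gaussian linear part means the truncation level and the moment control of the remainder must be calibrated precisely so that neither the truncation bias nor the net cardinality overwhelms the $R^2/96$ budget. The cross term, mixing the heavy-tailed linear part with the smooth part, is where this calibration is most delicate.
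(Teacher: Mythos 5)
Your three-term expansion of $\mathbb{G}_n\big((X^{\top}\beta+f)^2\big)$ and the preliminary metric separation (extracting $\|X^{\top}\beta\|_{P,2}=O(R)$ and $\|f\|_{P,2}=O(R)$ from the joint constraint) are exactly the paper's skeleton; the paper gets the separation from the orthogonal decomposition $X=\bar X+\Pi_{X|Z}$ and the eigenvalue bounds in Assumption \ref{asseigen}(iii) (your appeal to Assumption \ref{assum:data}(iv) is not the right tool here, since that condition is purely qualitative). The genuine gap is in your treatment of the nonparametric term. The functions over which $\mathcal{T}_1$ must be controlled are sieve functions $f_n=\psi^{k_n}(\cdot)^{\top}\gamma_n$ --- this is how the event is invoked in Lemma \ref{lemR} and how the paper's proof writes term $(B)$ --- and the paper exploits this finite-dimensional structure directly: after normalizing $\mathbb{E}[\psi^{k_n}\psi^{k_n\top}]=I_{k_n}$, it bounds $\big|(\mathbb{E}_n-\mathbb{E})f_n^2\big|\le\big\|(\mathbb{E}_n-\mathbb{E})\psi^{k_n}(Z_i)\psi^{k_n}(Z_i)^{\top}\big\|_2\,\|\gamma_n\|_2^2$ with $\|\gamma_n\|_2^2=\|f_n\|_{P,2}^2\lesssim R^2$, and applies a matrix Bernstein inequality with $t=\log k_n$; this is precisely where $\xi_0(k_n)$, the scaling $\xi_0(k_n)^2\log k_n/n=o(1)$, and the $O(1/k_n)$ tail come from. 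Your route via the metric entropy $\varepsilon^{-d_z/\alpha}$ of $C^{\alpha}_M(\mathcal{Z})$ cannot deliver these: the Dudley integral diverges at the boundary $\alpha=d_z/2$ permitted by Assumption \ref{assum:data}(iii); the implied sample-size requirement (roughly $k_n\gtrsim n^{d_z/(2\alpha+d_z)}$ once $R^2\asymp k_n/n$ from Assumption \ref{assum:chooselambda}(iii)) is not among the stated assumptions; and no chaining bound over a Hölder ball produces a failure probability of order $1/k_n$, because $k_n$ does not index that class at all.

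For the other two terms your machinery is heavier than needed and partly fragile, whereas the paper's argument is elementary: no truncation and no union bound over $\binom{p}{s_0}$ supports. Term $(A)$ is handled by Hölder with the $\ell_1$ budget, $(A)\le\big\|(\mathbb{E}_n-\mathbb{E})X_iX_i^{\top}\big\|_{\infty}\|\beta\|_1^2$ with $\|\beta\|_1\le R^2/\lambda$, entrywise Bernstein giving $\big\|(\mathbb{E}_n-\mathbb{E})X_iX_i^{\top}\big\|_{\infty}\lesssim\sqrt{\log p/n}\lesssim\lambda$, and then $R^4/\lambda\lesssim R^2$ by Assumption \ref{assum:chooselambda}(ii). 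The cross term $(C)$ is bounded by $\|\beta\|_1\sqrt{k_n}\,\|\gamma_n\|_2\max_{j,m}\big|(\mathbb{E}_n-\mathbb{E})X_{ij}\psi_m^{k_n}(Z_i)\big|$ via a Hoeffding-type bound, and this cross term --- not the pure nonparametric term, as you suggest --- is where the condition $k_n\xi_0^2(k_n)\log p/n=O(1)$ of Assumption \ref{assfeigen}(ii) is actually consumed. Your sparse-support union bound is admissible for the lemma as literally stated (since $\mathcal{M}_2(R)$ imposes $\beta\in B(s_0,p)$), but note that in the downstream application the vector $\tilde\beta-\beta_0=\tilde t(\hat\beta-\beta_0)$ is only $\ell_1$-bounded, not exactly $s_0$-sparse, so the paper's sparsity-free Hölder argument is the one that survives there. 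As written, your proof fails on the nonparametric term; the repair is to replace the entropy argument by the matrix-Bernstein argument on the $k_n$-dimensional sieve Gram deviation.
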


\proof 
First notice that $\tau(\beta, f_{n}, R) \leq R$ implies for $(\beta, f_{n})\in \mathcal{M}_2(R)$,

\[\|X\beta +f\|_{P,2}^2 \leq R^2, \qquad \|\beta\|_1\leq R^2/\lambda,\]
which further implies $\|\bar X\beta\|_{P,2}^2 \leq R^2$ and $\|\Pi_{X|Z}^{\top}\beta+f\|_{P,2}^2 \leq R^2$.
By Assumption \ref{asseigen}, we then have
\begin{align*}
&\|X\beta\|_{P,2}\leq \|\bar X\beta\|_{P,2} + \|\Pi_{X|Z}^{\top}\beta\|_{P,2} \leq R+\Lambda_{\max}\left(\Sigma_{\Pi}\right)/\Lambda_{\min}\left(\Sigma_{\Pi}\right)\|\bar X\beta\|_{P,2}\\
& \leq \left(1+\Lambda^{2}_{\max}\left(\Sigma_{\Pi}\right)/ \Lambda^{2}_{\min}\left(\Sigma_{\bar X}\right)\right)R :=R_1,
\end{align*}
 and
\begin{align}
\|f_{n}\|_{P,2}& \leq \|f+\Pi_{X|Z}^{\top}\beta\|_{P,2}+\|\Pi_{X|Z}^{\top}\beta\|_{P,2}\leq \left(1+\Lambda^{2}_{\max}\left(\Sigma_{\Pi}\right)/ \Lambda^{2}_{\min}\left(\Sigma_{\bar X}\right)\right)R=R_1. \label{eq:f_n}
\end{align}
For any $(\beta, f_{n})\in \mathcal{M}_{2}(R)$, consider the decomposition such that
\begin{align*}
 & \left|\|X\beta+f_{n}\|_{n}^{2}-\|X\beta+f_{n}\|^{2}_{P,2}\right|\\
\leq & \underbrace{|(\mathbb{E}_{n}-\mathbb{E})[\|X_{i}^{\top}\beta\|^2]|}_{(A)}
+\underbrace{|(\mathbb{E}_{n}-\mathbb{E})[\|\psi^{k_n}(Z_i)^{\top}\gamma_{n}\|^2]|}_{(B)}
+2\underbrace{|\left(\mathbb{E}_{n}-\mathbb{E}\right)[\beta^{\top}X_i \psi^{k_n}(Z_i)^{\top}\gamma_n]|}_{(C)}
\end{align*}

For Term (A), let $\Sigma_{X}=\mathbb{E}\left[X_{i}X_{i}^{\top}\right]$. For all $(\beta,f)\in\mathcal{M}_{2}(R)$, 

\[
(A) = \left|\frac{1}{n}\sum_{i=1}^{n}\beta^{\top}X_{i}X_{i}^{\top}\beta-\beta^{\top}\Sigma_{X}\beta\right|\leq\|\left(\mathbb{E}_{n}-\mathbb{E}\right)X_{i}X_{i}^{\top}\|_{\infty}\|\beta\|_{1}^{2}.
\]
From Bernstein inequality, with probability at least $1-1/p$,  
\[\|\left(\mathbb{E}_{n}-\mathbb{E}\right)X_{i}X_{i}^{\top}\|_{\infty} \lesssim \sqrt{\frac{\log p}{n}}\]
because $X_{i}$'s are sub-Gaussian variables. Thus, from Assumption \ref{assum:chooselambda}, $ R^2 \leq \lambda$, and with probability at least $1-1/p$, 
\[
\sup_{(\beta,f)\in\mathcal{M}_{2}(R)}\left|\left(\mathbb{E}_{n}-\mathbb{E}\right)\|X_{i}^{\top}\beta\|^{2}\right|\leq CR^{4}/\lambda \leq CR^2.
\]
For Term (B), from Assumption \ref{assfeigen}, $Q_{Z}:=\mathbb{E}\left[\psi^{k_n}(Z_{i})\psi^{k_n}(Z_{i})^{\top}\right]$ can be normalized to $I_{k_{n}}.$
\begin{align*}
\sup_{(\beta,f_{n})\in \mathcal{M}_{2}(R)}\left|\left(\mathbb{E}_{n}-\mathbb{E}\right)f_{n}^{2}\right| & =\sup_{(\beta,f_{n})\in \mathcal{M}_{2}(R)}\left|\frac{1}{n}\sum_{i=1}^{n}\gamma_{n}^{\top}\psi^{k_n}(Z_{i})\psi^{k_n}(Z_{i})^{\top}\gamma_{n}-\gamma_{n}^{\top}\gamma_{n}\right|\\
& \leq \sup_{(\beta,f_{n})\in \mathcal{M}_{2}(R)} \|\left(\mathbb{E}_{n}-\mathbb{E}\right)\psi^{k_{n}}(Z_{i})\psi^{k_{n}}(Z_{i})^{\top}\|_{2}\|\gamma_{n}\|_{2}^{2}\\
\end{align*}
where $\|\gamma_{n}\|^{2}_{2}=\|f_n\|^{2}_{P,2}$ is bounded by $R^{2}_1$ following from (\ref{eq:f_n}). Moreover, by a Bernstein type inequality for random matrices (Theorem 6.1 in \cite{tropp2012user}; see also Theorem 4.3 in \cite{vandeGeer2014uniform}, Theorem 4.1 in \cite{ChenChristensen2015} or Lemma 6.2 in \cite{BCCK2015}), we have
\[
\mathbb{P}\left(\|\left(\mathbb{E}_{n}-\mathbb{E}\right)\psi^{k_{n}}(Z_{i})\psi^{k_{n}}(Z_{i})^{\top}\|_{2}>C\left(\xi_{0}(k_n)\sqrt{\frac{\log k_{n}+t}{n}}+\xi^{2}_{0}(k_n)\left(\frac{\log k_{n}+t}{n}\right)\right)\right)\leq\exp(-t).
\]
By taking $t=\log k_n$, with probability at least $1-O(1/k_n)$, 
\[
\mathbb{P}
\sup_{(\beta,f_{n})\in \mathcal{M}_{2}(R)}\left|\left(\mathbb{E}_{n}-\mathbb{E}\right)f_{n}^{2}\right|\leq CR^{2}_1
\]
by Assumption \ref{assfeigen} (ii). 
For Term (C), we have 
\begin{align*}
&\sup_{(\beta,f_{n})\in\mathcal{M}_{2}(R)}\left|\left(\mathbb{E}_{n}-\mathbb{E}\right)\left[\|\beta^{\top}X_{i}f_{n}(Z_{i})\|^{2}\right]\right| \\
& =\sup_{(\beta,f_{n})\in \mathcal{M}_{2}(R)}\left|\frac{1}{n}\sum_{i=1}^{n}\beta^{\top}X_{i}\psi^{k_n}_{n}(Z_{i})^{\top}\gamma_{n}-\mathbb{E}\left[\beta^{\top}X_{i}\psi^{k_n}_{n}(Z_{i})^{\top}\gamma_{n}\right]\right|\\
& \leq \sup_{(\beta,f_{n})\in\mathcal{M}_{2}(R)}\|\beta\|_{1} \|\gamma_{n}\|_{1}\left\Vert \left(\mathbb{E}_{n}-\mathbb{E}\right)X_{i}\psi_{n}^{k_{n}}(Z_{i})^{\top}\right\Vert _{\infty}\\
 & \leq\sup_{(\beta,f_{n})\in\mathcal{M}_{2}(R)}\|\beta\|_{1}\sqrt{k_{n}}\|\gamma_{n}\|_{2}\left\Vert \left(\mathbb{E}_{n}-\mathbb{E}\right)X_{i}\psi_{n}^{k_{n}}(Z_{i})^{\top}\right\Vert _{\infty}.
\end{align*}
Note that $\vert X_{ij}\psi^{k_n}_{m}(Z_i)\vert \leq \xi_{0}(k_{n})|X_{ij}|$ for $j=1,\dots,p$ and $m=1,\dots,k_{n}$.
Thus, Lemma 14.15 in \cite{buhlmann2011statistics} implies that given $X$,
\begin{align*}
&\mathbb{P}\left(\max_{1\leq j\leq p}\max_{1\leq m\leq k_{n}}\left|\left(\mathbb{E}_{n}-\mathbb{E}\right)X_{ij}\psi_{m}^{k_{n}}(Z_{i})\right|\geq\max_{1\leq j\leq p}\sqrt{\frac{\xi_{0}^{2}(k_{n})\sum_{i=1}^{n}X_{ij}}{n}}\sqrt{2\left(t^{2}+\frac{2\log2p}{n}\right)}\right)\\
& \leq \exp (-nt^{2}).
\end{align*}
Because $X$ is sub-Gaussian, letting $t^{2}=\log (2p)/n$ gives 
\[
\mathbb{P}\left(\max_{1\leq j\leq p}\max_{1\leq m\leq k_{n}}\left|\left(\mathbb{E}_{n}-\mathbb{E}\right)X_{ij}\psi_{m}^{k_{n}}(Z_{i})\right|\geq \sqrt{2K_{X}\xi_{0}^{2}(k_{n})}\sqrt{\frac{\log2p}{n}}\right)\leq p^{-1}. 
\]
Because for $(\beta,f_n)\in \mathcal{M}_{2}(R)$, $\|\beta\|_{1}\leq R^{2}/\lambda$ and $\|\gamma_{n}\|_2=\|f_n\|_{2}\leq R^2_{1}$, it implies that $\|\beta\|_{1}\|\gamma_{n}\|_{2}\lesssim R^{4}/\lambda\lesssim R^{2}$ by Assumption \ref{assum:chooselambda}(ii). Then combing with Assumption \ref{assfeigen}(iii), we have that with probability at least $1-O(1/p)$,
\[
\sup_{(\beta,f_{n})\in\mathcal{M}_{2}(R)}\left|\left(\mathbb{E}_{n}-\mathbb{E}\right)\left[\|\beta^{\top}X_{i}f_{n}(Z_{i})\|^{2}\right]\right| \lesssim R^{2}.
\]
The conclusion follows from Assumption \ref{asseigen}. 
\qed

\begin{lem} \label{lem:eigen}
Suppose that Assumptions \ref{Ass:parallel}-\ref{assum:chooselambda} are satisfied. For a sequence $\kappa_n \rightarrow \infty$ as $n \rightarrow \infty$ and $\kappa_n$ does not depend on $p$ or $k_n$, the set $\mathcal{T}_{2}$ has probability at least $1-O(1/\kappa_{n}+1/k_{n})$. 
\end{lem}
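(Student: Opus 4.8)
The plan is to use the orthogonal decomposition $\bm{\iota}_{n}=\bm{\epsilon}+\bm{r}_{n}$ together with the fact that $\bm{\iota}_{n}^{*}=P_{Z}\bm{\iota}_{n}$ is a projection, so that $\|\bm{\iota}_{n}^{*}\|_{n}^{2}=\tfrac1n\bm{\iota}_{n}^{\top}P_{Z}\bm{\iota}_{n}\leq \tfrac2n\bm{\epsilon}^{\top}P_{Z}\bm{\epsilon}+\tfrac2n\bm{r}_{n}^{\top}P_{Z}\bm{r}_{n}$, and then to bound the ``variance'' term (in $\bm{\epsilon}$) and the ``bias'' term (in $\bm{r}_{n}$) separately. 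I would first record the event $\{\|\tfrac1n\Psi_{n}^{\top}\Psi_{n}-I_{k_{n}}\|_{2}\leq 1/2\}$, which holds with probability $1-O(1/k_{n})$ by the same matrix Bernstein inequality invoked for Term (B) in Lemma \ref{lem4} together with Assumption \ref{assfeigen}(ii); on this event $\lambda_{\min}(\tfrac1n\Psi_{n}^{\top}\Psi_{n})\geq 1/2$, so $P_{Z}$ is well defined and its eigenvalues are controlled. This event carries the $1/k_{n}$ in the probability bound.

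For the stochastic term I would condition on $\bm{W}=(W_{1},\dots,W_{n})$. Since the data are i.i.d.\ and $\mathbb{E}[\epsilon_{i}|W_{i}]=0$, the $\epsilon_{i}$ are conditionally independent and mean zero given $\bm{W}$, so all cross terms vanish and $\mathbb{E}[\bm{\epsilon}^{\top}P_{Z}\bm{\epsilon}\mid \bm{W}]=\sum_{i}(P_{Z})_{ii}\sigma_{i}^{2}\leq (\max_{i}\sigma_{i}^{2})\,\mathrm{tr}(P_{Z})=(\max_{i}\sigma_{i}^{2})\,k_{n}$, with $\sigma_{i}^{2}$ bounded by Assumption \ref{assum:error}. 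Hence $\mathbb{E}[\tfrac1n\bm{\epsilon}^{\top}P_{Z}\bm{\epsilon}]\lesssim k_{n}/n$, and Markov's inequality gives $\tfrac1n\bm{\epsilon}^{\top}P_{Z}\bm{\epsilon}\lesssim \kappa_{n}k_{n}/n$ off an event of probability $O(1/\kappa_{n})$. This is exactly the step that produces the $1/\kappa_{n}$ term: because only $r_{\epsilon}>4$ moments are available, I cannot use an exponential (Hanson--Wright) concentration bound and must settle for Markov, absorbing the slack into the diverging sequence $\kappa_{n}$.

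For the bias term I would write $\bm{r}_{n}^{\top}P_{Z}\bm{r}_{n}=a^{\top}(\Psi_{n}^{\top}\Psi_{n})^{-1}a$ with $a:=\Psi_{n}^{\top}\bm{r}_{n}=\sum_{i}\psi^{k_{n}}(Z_{i})r_{ni}$, so that on the eigenvalue event $\tfrac1n\bm{r}_{n}^{\top}P_{Z}\bm{r}_{n}\leq \tfrac{2}{n^{2}}\|a\|^{2}$. Since $\gamma_{n0}$ is the $L^{2}(P)$ projection coefficient, $\mathbb{E}[\psi_{j}(Z)r_{n}(Z)]=0$ for each $j$, so every coordinate of $a$ is a centered i.i.d.\ sum and $\mathbb{E}\|a\|^{2}=n\,\mathbb{E}[r_{n}^{2}\|\psi^{k_{n}}(Z)\|^{2}]$. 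Bounding $\|\psi^{k_{n}}\|^{2}\leq \xi_{0}^{2}(k_{n})$ gives $\mathbb{E}\|a\|^{2}\leq n\xi_{0}^{2}(k_{n})c_{k_{n}}^{2}$, whereas bounding $r_{n}^{2}\leq \|r_{n}\|_{\infty}^{2}\leq \ell_{k_{n}}^{2}c_{k_{n}}^{2}$ and $\sum_{j}\mathbb{E}\psi_{j}^{2}=k_{n}$ gives $\mathbb{E}\|a\|^{2}\leq n\ell_{k_{n}}^{2}c_{k_{n}}^{2}k_{n}$; taking the smaller of the two and applying Markov yields $\tfrac1n\bm{r}_{n}^{\top}P_{Z}\bm{r}_{n}\lesssim \kappa_{n}\min\!\big(\xi_{0}^{2}(k_{n})c_{k_{n}}^{2}/n,\ \ell_{k_{n}}^{2}c_{k_{n}}^{2}k_{n}/n\big)$. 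Combining the two pieces gives $\|\bm{\iota}_{n}^{*}\|_{n}^{2}\lesssim k_{n}/n+\min(\cdots)\asymp R^{2}$ by the choice of $R$ in Assumption \ref{assum:chooselambda}(iii), so that, after fixing constants, $\mathcal{T}_{2}$ holds with probability $1-O(1/\kappa_{n}+1/k_{n})$. The main obstacle is the weak-moment control of the quadratic form $\bm{\epsilon}^{\top}P_{Z}\bm{\epsilon}$, which forces reliance on Markov and hence on the diverging slack $\kappa_{n}$; a secondary point is the need to invoke the population orthogonality of $r_{n}$ to the sieve, without which $\mathbb{E}\|a\|^{2}$ would acquire an $n^{2}(\mathbb{E}[\psi_{j}r_{n}])^{2}$ term and blow up.
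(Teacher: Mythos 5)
Your proposal is correct and follows essentially the same route as the paper's own proof: the same decomposition of $\|\bm{\iota}_n^*\|_n^2$ into the $\bm{\epsilon}$ and $\bm{r}_n$ quadratic forms, the matrix Bernstein eigenvalue event for $\widehat{Q}_z$ supplying the $1/k_n$ term, the expectation bounds $k_n/n$ and $\min\left(\ell_{k_n}^2c_{k_n}^2k_n/n,\ \xi_0^2(k_n)c_{k_n}^2/n\right)$ matching Assumption \ref{assum:chooselambda}(iii), and Markov's inequality with the diverging slack $\kappa_n$ supplying the $1/\kappa_n$ term. Your two refinements are cosmetic but welcome: the conditional trace identity $\mathbb{E}[\bm{\epsilon}^{\top}P_Z\bm{\epsilon}\mid\bm{W}]=\sum_i (P_Z)_{ii}\sigma_i^2\leq k_n\max_i\sigma_i^2$ is a slightly cleaner way to get the $k_n/n$ bound than the paper's route through $\|\widehat{Q}_z^{-1/2}\|_2^2\,\|\mathbb{E}_n[\psi^{k_n}(Z_i)\epsilon_i]\|_2^2$, and you make explicit the population orthogonality $\mathbb{E}[\psi_j(Z)r_n(Z)]=0$ (i.e., $\gamma_{n0}$ as the $L^2$ projection coefficient) that the paper's cross-term cancellation uses only implicitly.
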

\begin{proof}
\[
\|\bm{\iota}_{n}^{*}\|_{n}^{2}\leq2\bm{\epsilon}^{\top}P_{z}\bm{\epsilon}/n+2\bm{r}_{n}^{\top}P_{z}\bm{r}_{n}/n.
\]
For the first term, note that 
\begin{align*}
\bm{\epsilon}^{\top}P_{z}\bm{\epsilon}/n & =\|\left(\Psi_{n}^{\top}\Psi_{n}/n\right)^{-1/2}\Psi_{n}\bm{\epsilon}/n\|_{2}^{2}=\|\widehat{Q}_{z}^{-1/2}\mathbb{E}_{n}\left[\psi^{k_{n}}(Z_{i})\epsilon_{i}\right]\|_{2}^{2}\\
 & \leq\|\widehat{Q}_{z}^{-1/2}\|_{2}^{2}\|\mathbb{E}_{n}\left[\psi^{k_{n}}(Z_{i})\epsilon_{i}\right]\|_{2}^{2},
\end{align*}
where all eigenvalues of $\widehat{Q}_{z}$ are bounded away with probability at least $1-1/k_{n}$ following the matrix Bernstein inequality. Moreover,
\begin{align*}
&\|\mathbb{E}\left[\psi^{k_n}(Z_i )\epsilon_{i}\right]\|^{2}_{2}=\mathbb{E}[\epsilon^2_{i}\psi^{k_n}(Z_i)^{\top}\psi^{k_n}(Z_i)/n]\lesssim \mathbb{E}[\psi^{k_n}(Z_i)^{\top}\psi^{k_n}(Z_i)/n]\\
&=tr(\mathbb{E}[\psi^{k_n}(Z_i)\psi^{k_n}(Z_i)^{\top}/n])=tr(I_{k_{n}}/n)=(k_{n}/n).
\end{align*}
For the second term, 
\[
\bm{r}_{n}^{\top}P_{z}\bm{r}_{n}/n=\|\widehat{Q}_{z}^{-1/2}\mathbb{E}_{n}\left[\psi^{k_{n}}(Z_{i})r_{ni}\right]\|_{2}^{2}.
\]
Because
\[
\mathbb{E}\left[\|\mathbb{E}_{n}\left[\psi^{k_{n}}(Z_{i})r_{ni}\right]\|_{2}^{2}\right]=\frac{1}{n^{2}}\sum_{k=1}^{k_{n}}\mathbb{E}\left[\psi_{k}(Z_{i})^{2}r_{ni}^{2}\right]\leq\left(\frac{\ell_{k_{n}}c_{k_{n}}}{\sqrt{n}}\right)^{2}\mathbb{E}\left[\|\psi^{k_{n}}(Z_{i})\|_{2}^{2}\right]=\frac{\ell_{k_{n}}^{2}c_{k_{n}}^{2}k_{n}}{n}
\]
or 
\[
\mathbb{E}\left[\|\mathbb{E}_{n}\left[\psi^{k_{n}}(Z_{i})r_{ni}\right]\|_{2}^{2}\right]\leq \frac{1}{n}\mathbb{E}[\xi_{0}^{2}(k_n)r^{2}_{ni}]\leq \frac{\xi_{0}^{2}(k_{n})c^{2}_{k_n}}{n}
\]
so we have 
\[
\mathbb{E}\left[\|\mathbb{E}_{n}\left[\psi^{k_{n}}(Z_{i})r_{ni}\right]\|_{2}^{2}\right] \leq \min\left(\frac{\ell_{k_{n}}^{2}c_{k_{n}}^{2}k_{n}}{n}, \frac{\xi_{0}^{2}(k_{n})c^{2}_{k_n}}{n}\right).
\]
From triangular inequality and Markov inequality,there exists a constant $C$, such that $\mathbb{E}\left[\|\mathbb{E}_{n}\left[\psi^{k_{n}}(Z_{i})r_{ni}\right]\|_{2}^{2}\right] = CR^2/\kappa_n$ and by Markov inequality, 
\[\mathbb{P}\left(\left\|\mathbb{E}_{n}\left[\psi^{k_{n}}(Z_{i})r_{ni}\right]\right\|_{2}^{2} >R^2/192\right) \leq  192C/\kappa_n\]

\end{proof}

\begin{lem}\label{lem_iotav}
Suppose that Assumptions \ref{Ass:parallel}-\ref{assum:chooselambda} are satisfied, then $\mathbb{P}(\mathcal{T}_3) = 1 - O(1/p+1/k_{n})$
\end{lem}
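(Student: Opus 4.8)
The plan is to strip away the quantifier over $(\beta,f)\in\mathcal{M}_2(R)$ by a single Hölder step and then reduce everything to an $\ell_\infty$ concentration bound. Since $\tilde{\bm{v}}_n(\bm{X},\beta)=\tilde{\bm{X}}(\beta-\beta_0)$, Hölder's inequality gives
\[
\left|\bm{\iota}_n^\top\tilde{\bm{X}}(\beta-\beta_0)/n\right|\leq\left\|\tilde{\bm{X}}^\top\bm{\iota}_n/n\right\|_\infty\,\|\beta-\beta_0\|_1 ,
\]
so on the event $\{\|\tilde{\bm{X}}^\top\bm{\iota}_n/n\|_\infty\leq\lambda/8\}$ the defining inequality of $\mathcal{T}_3$ holds for every $(\beta,f)$ at once. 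It therefore suffices to prove $\|\tilde{\bm{X}}^\top\bm{\iota}_n/n\|_\infty\leq\lambda/8$ with probability $1-O(1/p+1/k_n)$.

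I would decompose around the population projection residual $\bar{\bm{X}}=\bm{X}-\Pi_{X|Z}$, writing $\tilde{\bm{X}}^\top\bm{\iota}_n/n=\bar{\bm{X}}^\top\bm{\iota}_n/n+(\tilde{\bm{X}}-\bar{\bm{X}})^\top\bm{\iota}_n/n$ with $\bm{\iota}_n=\bm{\epsilon}+\bm{r}_n$. The leading term has $j$th coordinate $n^{-1}\sum_i\bar X_{ij}(\epsilon_i+r_{ni})$, and both contributions are averages of i.i.d. mean-zero variables: $\mathbb{E}[\bar X_{ij}\epsilon_i]=\mathbb{E}[\bar X_{ij}\,\mathbb{E}(\epsilon_i\mid W_i)]=0$ by Assumption \ref{assum:error}(i) since $\bar X_{ij}$ is $W_i$-measurable, while $\mathbb{E}[\bar X_{ij}r_{ni}]=\mathbb{E}[\mathbb{E}(\bar X_{ij}\mid Z_i)\,r_{ni}]=0$ because $\bar X_{ij}=X_{ij}-\mathbb{E}[X_{ij}\mid Z_i]$. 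For the noise part, sub-Gaussianity of $\bar X_{ij}$ (Assumption \ref{asseigen}(i)) combined with only $r_\epsilon>4$ conditional moments of $\epsilon_i$ (Assumption \ref{assum:error}(ii)) calls for a truncation / Fuk--Nagaev moderate-deviation bound, after which a union bound over $j=1,\dots,p$ yields $O_p(\sqrt{\log p/n})$ with probability $1-O(1/p)$. For the approximation part, $|r_{ni}|\leq\ell_{k_n}c_{k_n}$ and $\mathbb{E}[r_{ni}^2]\leq c_{k_n}^2$ (Assumption \ref{approx error}) make $\bar X_{ij}r_{ni}$ sub-Gaussian with small variance, so Bernstein plus a union bound gives a bound of order $c_{k_n}\sqrt{\log p/n}\lesssim\sqrt{\log p/n}$.

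The remainder $(\tilde{\bm{X}}-\bar{\bm{X}})^\top\bm{\iota}_n/n$ measures the gap between the sample sieve projection $\Pi_{n,X|Z}=P_Z\bm{X}$ and the population conditional mean $\Pi_{X|Z}$. I would rewrite it through $P_Z=\Psi_n\widehat{Q}_z^{-1}\Psi_n^\top/n$, control the eigenvalues of $\widehat{Q}_z=\Psi_n^\top\Psi_n/n$ by the matrix Bernstein inequality (this is exactly what injects the $O(1/k_n)$ into the failure probability, as in Lemmas \ref{lem4} and \ref{lem:eigen}), and bound the resulting quadratic forms using the projection-smoothness condition Assumption \ref{assfeigen}(iii) together with the sieve growth conditions in Assumption \ref{assfeigen}(ii); this again produces a term of order $\sqrt{\log p/n}$. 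Collecting the three contributions and invoking $\lambda\gtrsim\sqrt{\log p/n}$ from Assumption \ref{assum:chooselambda}(i) shows each piece is at most $\lambda/16$, hence $\|\tilde{\bm{X}}^\top\bm{\iota}_n/n\|_\infty\leq\lambda/8$ on the desired event.

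The main obstacle is the noise term under the weak moment assumption $r_\epsilon>4$: the products $\bar X_{ij}\epsilon_i$ are not sub-exponential, so a naive Bernstein inequality fails and one must use a moderate-deviation/truncation argument to recover the sharp $\sqrt{\log p/n}$ rate uniformly over all $p$ coordinates. A second delicate point is that $\tilde{\bm{X}}$ depends on the whole sample through $P_Z$, so the summands in $(\tilde{\bm{X}}-\bar{\bm{X}})^\top\bm{\iota}_n$ are not independent; routing the argument through the spectral control of $\widehat{Q}_z$ is what makes this tractable and is responsible for the $1/k_n$ term in the probability bound.
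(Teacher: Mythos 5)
Your proposal is correct and shares the paper's skeleton---the Hölder reduction to $\|\tilde{\bm{X}}^{\top}\bm{\iota}_n/n\|_{\infty}\leq\lambda/8$, a Fuk--Nagaev/moderate-deviation maximal inequality for the noise term under only $r_\epsilon>4$ moments (the paper invokes Lemmas E.1--E.2 of \cite{chernozhukov2017central} for exactly this), a bounded-times-sub-Gaussian bound for the $r_{ni}\tilde X_{ij}$ term with $\ell_{k_n}c_{k_n}=o(1)$ absorbing it into $\lambda/8$, and the $1/k_n$ entering through matrix-Bernstein control of the eigenvalues of $\widehat{Q}_z$---but it diverges from the paper in one substantive way. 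The paper works directly with the sample residual $\tilde{\bm{X}}=(I-P_Z)\bm{X}$ and asserts sub-Gaussianity of $\tilde X_{ij}$ via a moment-generating-function step, $\mathbb{E}[\exp(s(I-P_Z)X_{ij})]\leq\mathbb{E}[\exp(s(1-\Lambda^2_{\min}(\widehat{Q}_Z))X_{ij})]$, which glosses over the fact that $P_Z$ depends on the whole sample, so the summands $\tilde X_{ij}\epsilon_i$ are neither independent nor is the scalar MGF manipulation valid as written. You instead decompose $\tilde{\bm{X}}=\bar{\bm{X}}+(\tilde{\bm{X}}-\bar{\bm{X}})$ around the population projection residual: the leading term then has genuinely i.i.d.\ mean-zero summands (your observations that $\mathbb{E}[\bar X_{ij}\epsilon_i]=0$ by Assumption \ref{assum:error}(i) and $\mathbb{E}[\bar X_{ij}r_{ni}]=0$ because $r_{ni}$ is $Z_i$-measurable are exactly right and give cleaner centering than the paper achieves), while the dependence is quarantined in the gap term and handled spectrally through $P_Z=\Psi_n\widehat{Q}_z^{-1}\Psi_n^{\top}/n$, using Assumption \ref{assfeigen}(iii) and the growth condition $k_n\xi_0^2(k_n)\log p/n=O_p(1)$, which is precisely what those assumptions were built for. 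What your route buys is rigor at the delicate point the paper elides; what the paper's route buys is brevity, at the cost of a sub-Gaussianity claim for $\tilde X_{ij}$ that your version never needs. To make your sketch fully airtight you would still need to write out the gap bound (e.g.\ via $\tilde{\bm{X}}-\bar{\bm{X}}=(I-P_Z)\Pi_{X|Z}-P_Z\bar{\bm{X}}$ and Cauchy--Schwarz on each factor), but the ingredients you cite suffice and the stated failure probability $O(1/p+1/k_n)$ comes out correctly from your union bounds and the $\widehat{Q}_z$ event.
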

\proof
By definition of $\bm{\iota}_n$,
\[
|\bm{\iota}_{n}^{\top}\tilde{\bm{v}}_{n}(\bm{X}, \tilde \beta)/n|
=|(\bm{\epsilon}+\bm{r}_n)^{\top}\tilde{\bm{X}}(\tilde{\beta}-\beta_0)|\leq \|(\bm{\epsilon}+\bm{r}_n )^{\top}\tilde{\bm{X}}\|_{\infty}\|\tilde{\beta}-\beta_0\|_1.
\]
By Lemma 6.2 in \cite{BCCK2015}, all eigenvalues of $\widehat{Q}_{Z}$ are bounded away from zero with Assumption \ref{assfeigen} on the set $\mathcal{T}_{3}$ with probability at least $1-1/k_{n}$.
Notice that for $\tilde{X}_{i} = (I - P_{Z})X_i$. As $X_i$ is sub-Gaussian, its moment generating function has
\[\mathbb{E}[\exp(s((I - P_{Z}) X_{ij}))] \leq  \mathbb{E}[\exp(s(1-\Lambda^2_{\min}(\widehat{Q}_{Z})) X_{ij})]\leq \exp(K_{X}^2s^2(1-\Lambda^2_{\min}(\widehat{Q}_{Z}))^2/2). \]
Thus $\tilde{X}_{ij}$ is also sub-Gaussian so Assumption \ref{assum:error} implies that $\max_{1\leq i\leq n}, \max_{1\leq j\leq n}\mathbb{E}[(\tilde{X}_{ij}\epsilon_i)^2]$ is bounded from above. From  Lemma E.1 and E.2 of \cite{chernozhukov2017central}, 

\[\mathbb{E}\left(\max_{1\leq j \leq p}\left|\frac{1}{n}\sum_{i = 1}^n\epsilon_i \tilde X_{ij} \right| > C\sqrt{\frac{t+\log p}{n}}\right)\leq \exp(-t).\]
Because $\lambda\gtrsim 4\cdot C\sqrt{2\log p/n}$, we have with probability at least $1-1/p$ 
\[
\max_{1\leq j\leq p}\left|\frac{1}{n}\sum_{i=1}^{n}\epsilon_{i}\tilde{X}_{ij}\right|\leq \lambda/4.
\]
Furthermore, by Bernstein inequality, we have for some constant $K_{\tilde{X}}\geq 1$, 
\[\mathbb{P}\left(\max_{1\leq j\leq p}\|\tilde{X}_{j}\|_{n}^{2}\geq2K_{\tilde{X}}\right)\leq \mathbb{P}\left(\max_{1\leq j\leq p}\|\tilde{X}_{j}\|_{n}^{2}\geq\mathbb{E}\|\tilde{X}_{j}\|_{n}^{2}+K_{\tilde{X}}\sqrt{\frac{\log p}{n}}\right)\leq(2p)^{-1}.\]
Next, note that $\max_{1\leq i\leq n}\max_{1\leq j\leq p}|r_{ni}\tilde{X}_{ij}|^2\leq \max_{1\leq i\leq n}\max_{1\leq j\leq p}\ell^{2}_{k_n}c^{2}_{k_n}|\tilde{X}_{ij}|^2,$
by using Lemma 14.15 of \cite{buhlmann2011statistics}, we have 
\[
\mathbb{P}\left(\max_{1\leq j\leq p}\left|\frac{1}{n}\sum_{i=1}^{n}r_{ni}\tilde{X}_{ij}\right|\geq\max_{1\leq j\leq p}\sqrt{\ell^{2}_{k_n}c^{2}_{k_n}\sum_{i=1}^{n}\tilde{X}_{ij}^{2}/n}\sqrt{2(t^{2}+\log p/n)}\right)\leq\exp(-nt^{2}).
\]
Taking $t^2=\log p/n$ yields that with probability at least $1-1/(2p)-1/p$, 
\[
\max_{1\leq j\leq p}\left|\frac{1}{n}\sum_{i=1}^{n}r_{ni}\tilde{X}_{ij}\right|\leq 2\sqrt{K_{\tilde{X}}}\ell_{k_{n}}c_{k_n}\sqrt{\log p/n}.
\]
Thus, when $\lambda\gtrsim 8 \cdot C\sqrt{2\log p/n}$ and because $l_{k_n}c_{k_n}=o(1)$, we have 
\[
\max_{1\leq j\leq p}\left|\frac{1}{n}\sum_{i=1}^{n}(\epsilon_{i}+r_{ni} )\tilde{X}_{ij}\right|\leq \lambda/8.
\]
which further implies that
$|\bm{\iota}_{n}^{\top}\tilde{\bm{v}}_{n}(\bm{X}, \tilde \beta)/n|\leq \lambda/8\|\tilde{\beta}-\beta_0\|_{1}$.
\qed

\begin{lem} \label{lem:eigenX}
Suppose that Assumptions \ref{Ass:parallel}-\ref{assum:chooselambda} are satisfied. Then $\mathbb{P}(\mathcal{T}_{4})=1 - O(1/p)$.
\end{lem}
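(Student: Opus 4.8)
The plan is to prove the standard transfer of the restricted eigenvalue from the population Gram matrix $\Sigma_{\tilde X}=\mathbb{E}[\tilde X_i\tilde X_i^{\top}]$ to its sample analogue $\mathbb{E}_{n}[\tilde X_i\tilde X_i^{\top}]$, uniformly over the restricted cone $\mathcal{C}=\{\delta\in\mathbb{R}^{p}\setminus\{0\}:\|\delta_{S_0^{C}}\|_1\leq 3\sqrt{s_0}\|\delta_{S_0}\|_2\}$. The whole statement collapses to a single max-norm deviation bound on the Gram matrix, which I would organize through the triangle inequality
\[
\|\mathbb{E}_{n}[\tilde X_i\tilde X_i^{\top}]-\Sigma_{\tilde X}\|_{\max}\leq \|\mathbb{E}_{n}[\tilde X_i\tilde X_i^{\top}]-\mathbb{E}_{n}[\bar X_i\bar X_i^{\top}]\|_{\max}+\|\mathbb{E}_{n}[\bar X_i\bar X_i^{\top}]-\Sigma_{\bar X}\|_{\max}+\|\Sigma_{\bar X}-\Sigma_{\tilde X}\|_{\max},
\]
so that the i.i.d. residual $\bar X_i=X_i-\mathbb{E}[X_i|Z_i]$ carries the concentration while the sieve-projection error is isolated in the first and third terms.

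First I would exploit the cone geometry. For $\delta\in\mathcal{C}$, Cauchy--Schwarz on $S_0$ gives $\|\delta_{S_0}\|_1\leq\sqrt{s_0}\|\delta_{S_0}\|_2$, hence $\|\delta\|_1\leq 4\sqrt{s_0}\|\delta_{S_0}\|_2$ and $\|\delta\|_1^2\leq 16s_0\|\delta_{S_0}\|_2^2$. Writing $\delta^{\top}\mathbb{E}_{n}[\tilde X_i\tilde X_i^{\top}]\delta=\delta^{\top}\Sigma_{\tilde X}\delta+\delta^{\top}(\mathbb{E}_{n}[\tilde X_i\tilde X_i^{\top}]-\Sigma_{\tilde X})\delta$ and using $|\delta^{\top}B\delta|\leq\|B\|_{\max}\|\delta\|_1^2$, I obtain for every $\delta\in\mathcal{C}$
\[
\frac{\delta^{\top}\mathbb{E}_{n}[\tilde X_i\tilde X_i^{\top}]\delta}{\|\delta_{S_0}\|_2^2}\geq \Lambda^2_{\tilde X}(s_0)-16s_0\,\|\mathbb{E}_{n}[\tilde X_i\tilde X_i^{\top}]-\Sigma_{\tilde X}\|_{\max}.
\]
Minimizing over $\mathcal{C}$ shows that $\mathcal{T}_4$ holds whenever $16s_0\|\mathbb{E}_{n}[\tilde X_i\tilde X_i^{\top}]-\Sigma_{\tilde X}\|_{\max}\leq\Lambda^2_{\tilde X}(s_0)/2$. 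The same cone inequality applied to $\Sigma_{\bar X}-\Sigma_{\tilde X}$ together with Assumption \ref{assfeigen}(iii) and Assumption \ref{asseigen}(ii) yields $\Lambda^2_{\tilde X}(s_0)\geq\mathcal{C}_{\bar{x}}-16s_0\,O(\sqrt{\log p/n})$, which is bounded away from zero under the sparsity condition $s_0\sqrt{\log p/n}=o(1)$. Hence it only remains to prove that $s_0\|\mathbb{E}_{n}[\tilde X_i\tilde X_i^{\top}]-\Sigma_{\tilde X}\|_{\max}=o_p(1)$ on an event of probability $1-O(1/p)$.

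For the deviation itself, the middle term of the triangle inequality is classical: each entry $\bar X_{ij}\bar X_{ik}$ is sub-exponential as a product of the sub-Gaussian coordinates of Assumption \ref{asseigen}(i), so Bernstein's inequality with threshold $t\asymp\log p$ and a union bound over the $p^2$ entries give $\|\mathbb{E}_{n}[\bar X_i\bar X_i^{\top}]-\Sigma_{\bar X}\|_{\max}=O_p(\sqrt{\log p/n})$, with the failure probability calibrated to $O(1/p)$. The third term is exactly $\|\Sigma_{\tilde X}-\Sigma_{\bar X}\|_{\max}=O(\sqrt{\log p/n})$ by Assumption \ref{assfeigen}(iii). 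The first term is the empirical sieve-versus-truth gap: writing $\tilde X_i-\bar X_i=-\Delta_i$ with $\Delta_i:=\Pi_n(X_i|Z)-\mathbb{E}[X_i|Z_i]$, it equals $\|\mathbb{E}_{n}[-\bar X_i\Delta_i^{\top}-\Delta_i\bar X_i^{\top}+\Delta_i\Delta_i^{\top}]\|_{\max}$, which I would bound by the Cauchy--Schwarz inequality in max norm together with control of $\mathbb{E}_{n}[\|\Delta_i\|^2]$ through the bounded eigenvalues of $\widehat{Q}_z$ and the sieve growth bound $\sup_z\|\psi^{k_n}(z)\|\leq\xi_0(k_n)$ of Assumption \ref{assfeigen}(ii).

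The main obstacle is precisely this first term, because $\tilde X_i=(I-P_Z)X_i$ are not independent across $i$: $P_Z=\Psi_n(\Psi_n^{\top}\Psi_n)^{-1}\Psi_n^{\top}$ depends on the entire $Z$-sample, so $\mathbb{E}_{n}[\tilde X_i\tilde X_i^{\top}]=\tfrac1n\bm{X}^{\top}(I-P_Z)\bm{X}$ is not a sum of i.i.d. matrices and a direct Bernstein bound is unavailable. I would circumvent this by conditioning on $\bm{Z}$, which freezes $P_Z$ and leaves the sub-Gaussian $X_i$ intact, controlling $\tfrac1n\bm{X}^{\top}P_Z\bm{X}$ via the spectral control of $\widehat{Q}_z$ (bounded away from zero and infinity on a high-probability event by the matrix Bernstein inequality) and the sieve rate, and only then integrating out $\bm{Z}$ using Assumption \ref{assfeigen}(iii). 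Finally, the $1/k_n$-type probabilities arising from the sieve Gram matrix can be sharpened to $1/p$ by taking the matrix Bernstein threshold at scale $\log p$, which stays admissible because $k_n\xi_0^2(k_n)\log p/n=O_p(1)$ in Assumption \ref{assfeigen}(ii); combined with $s_0\sqrt{\log p/n}=o(1)$ this gives $16s_0\|\mathbb{E}_{n}[\tilde X_i\tilde X_i^{\top}]-\Sigma_{\tilde X}\|_{\max}\leq\Lambda^2_{\tilde X}(s_0)/2$ on an event whose complement has probability $O(1/p)$, which is the claim.
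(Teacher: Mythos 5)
Your proposal follows the same skeleton as the paper's proof: the cone bound $\|\delta\|_1^2\leq 16s_0\|\delta_{S_0}\|_2^2$ and the resulting inequality $\Lambda^2_{\tilde X,n}(s_0)\geq \Lambda^2_{\tilde X}(s_0)-16s_0\|\hat{\Sigma}_{\tilde X}-\Sigma_{\tilde X}\|_{\infty}$ reproduce the paper's opening display exactly, and both arguments close with a max-norm Gram deviation of order $\sqrt{\log p/n}$ together with the sparsity condition $s_0\sqrt{\log p/n}=o(1)$. Where you genuinely diverge is in the deviation bound itself. The paper uses the two-term split $\|(\mathbb{E}_n-\mathbb{E})\tilde X_i\tilde X_i^{\top}\|_{\infty}+\|\mathbb{E}[\tilde X_i\tilde X_i^{\top}-\bar X_i\bar X_i^{\top}]\|_{\infty}$, handling the first term by asserting (in Lemma \ref{lem_iotav}) that the coordinates of $\tilde X_i=((I-P_Z)\bm{X})_i$ are sub-Gaussian and applying entrywise Bernstein as if the $\tilde X_i$ were independent, and disposing of the second term directly by Assumption \ref{assfeigen}(iii). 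You instead center the concentration at the genuinely i.i.d.\ residuals $\bar X_i$, isolate the empirical sieve-versus-truth gap $\mathbb{E}_n[\tilde X_i\tilde X_i^{\top}-\bar X_i\bar X_i^{\top}]$, and propose to control it by conditioning on $\bm{Z}$. This buys rigor precisely where the paper is loosest: the $\tilde X_i$ are coupled across $i$ through $P_Z$, so the paper's direct Bernstein step (and the moment-generating-function argument behind it in Lemma \ref{lem_iotav}) is the shakiest part of its proof, and your conditioning device is the standard correct repair. Your additional step lower-bounding $\Lambda^2_{\tilde X}(s_0)\geq \mathcal{C}_{\bar{x}}-16s_0\,O(\sqrt{\log p/n})$ via Assumptions \ref{asseigen}(ii) and \ref{assfeigen}(iii) is also something the paper needs but never states, since its restricted-eigenvalue assumption is imposed on $\Sigma_{\bar X}$ while $\mathcal{T}_4$ is phrased in terms of $\Lambda^2_{\tilde X}(s_0)$.

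One caution on your treatment of the first term: the per-entry Cauchy--Schwarz bound you lead with, $|\mathbb{E}_n[\bar X_{ij}\Delta_{ik}]|\leq\|\bar X_{\cdot j}\|_n\|\Delta_{\cdot k}\|_n$, only delivers the rate $\sqrt{(k_n+\log p)/n}$, because the noise-projection part of $\Delta$ satisfies $n^{-1}\|P_Z\bar X_{\cdot k}\|_2^2\asymp k_n/n$; this is strictly weaker than the $\sqrt{\log p/n}$ you need when $k_n\gg\log p$, and $s_0\sqrt{k_n/n}=o(1)$ is not guaranteed by Assumption \ref{assum:chooselambda}. So the argument must rely on the finer structure you sketch afterwards: conditional on $\bm{Z}$, the cross terms $n^{-1}\bar X_{\cdot j}^{\top}(P_Z-I)\Pi_{\cdot k}$ and the off-diagonal quadratic forms $n^{-1}\bar X_{\cdot j}^{\top}P_Z\bar X_{\cdot k}$ are mean-zero and concentrate at rates $\sqrt{\log p/n}$ and $(\sqrt{k_n\log p}+\log p)/n$ respectively (Hoeffding- and Hanson--Wright-type bounds), while the diagonal bias of order $k_n/n$ belongs to $\Sigma_{\tilde X}$ itself and is then transferred to $\Sigma_{\bar X}$ by Assumption \ref{assfeigen}(iii). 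With that refinement your route closes and gives the stated $1-O(1/p)$ probability; the crude Cauchy--Schwarz step alone does not.
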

\begin{proof}
For $\hat{\Sigma}_{\tilde{X}}=\mathbb{E}_{n}\left[\tilde{X}_{i}\tilde{X}_{i}^{\top}\right]$
and $\Sigma_{\tilde{X}}=\mathbb{E}\left[\tilde{X}_{i}\tilde{X}_{i}^{\top}\right]$, we have 
\begin{align*}
\frac{\delta^{\top}\hat{\Sigma}_{\tilde{X}}\delta}{\|\delta_{S_{0}}\|_{2}^{2}} & =\frac{\delta^{\top}\Sigma_{\tilde{X}}\delta+\delta^{\top}\left(\hat{\Sigma}_{\tilde{X}}-\Sigma_{\tilde{X}}\right)\delta}{\|\delta_{S_{0}}\|_{2}^{2}}\\
 & \geq\Lambda_{\tilde{X}}^{2}(s_{0})-\left|\frac{\|\delta\|_{1}^{2}\|\hat{\Sigma}_{\tilde{X}}-\Sigma_{\tilde{X}} \|_{\infty}}{\|\delta_{S_{0}}\|_{2}^{2}}\right|\\
 & \geq\Lambda_{\tilde{X}}^{2}(s_{0})-16s_{0}\|\hat{\Sigma}_{\tilde{X}}-\Sigma_{\tilde{X}}\|_{\infty},
\end{align*}
where the first inequality follows by the definition of $\Lambda_{\tilde{X}}^{2}(s_{0})$
and the second inequality follows by $\|\delta\|_{1}^{2}\leq16s_{0}\|\delta_{S_{0}}\|_{2}^{2}$.
Since with probability at least $1-1/p$,
\begin{align*}
 \|\hat{\Sigma}_{\tilde{X}}-\Sigma_{\tilde{X}}\|_{\infty}\leq\|\left(\mathbb{E}_{n}-\mathbb{E}\right)\tilde{X}_{i}\tilde{X}_{i}^{\top}\|_{\infty}+\|\mathbb{E}\left[\tilde{X}_{i}\tilde{X}_{i}^{\top}-\bar{X}_{i}\bar{X}_{i}^{\top}\right]\|_{\infty}=O_{p}\left(\sqrt{\log p/n}\right)
\end{align*}
where the equality follows from Bernstein inequality applied in Lemma \ref{lem_iotav} and Assumption \ref{assfeigen}(iii). Because $s_{0}\sqrt{\frac{\log p}{n}}=o(1)$, for large enough
$n$, we have $\Lambda_{\tilde{X},n}^{2}(s_{0})\geq\Lambda_{\tilde{X}}^{2}(s_{0})/2$
with probability at least $1-1/p$.
\end{proof}
\begin{lem}\label{lem_S}
Suppose that Assumptions \ref{Ass:parallel}-\ref{assum:chooselambda} are satisfied. Then $\mathbb{P}(\mathcal{T}_{5})=1 - O(1/p)$ and $\mathbb{P}(\mathcal{T}_{7})=1 - O(1/p)$.
\end{lem}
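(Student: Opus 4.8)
The plan is to reduce both events to a single sup-norm bound on the ``score residual'' $\hat{\bm S}-\bm S$ and then to exploit the Neyman-orthogonal structure of Lemma \ref{doublerobust} together with the first-stage rates of Assumption \ref{assum:splitting1}. For $\mathcal{T}_5$, Hölder's inequality gives
\[
\left|(\hat{\bm S}-\bm S)^{\top}\bm X(\beta-\beta_0)/n\right|\leq\left\|(\hat{\bm S}-\bm S)^{\top}\bm X/n\right\|_\infty\,\|\beta-\beta_0\|_1,
\]
so it suffices to show $\|(\hat{\bm S}-\bm S)^{\top}\bm X/n\|_\infty\leq\lambda/8$ with probability $1-O(1/p)$. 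For $\mathcal{T}_7$ the same reduction applies with $\bm X$ replaced by $\tilde{\bm X}=\bm X-\Pi_{n,X|Z}$ and the target $\lambda/4$; since $\tilde X_{ij}$ is again sub-Gaussian (as established inside the proof of Lemma \ref{lem_iotav} using the bounded eigenvalues of $P_Z$ from Assumption \ref{assfeigen}), this version is handled verbatim, so I focus on $\max_{1\le j\le p}|n^{-1}\sum_i(\hat S_i-S_i)X_{ij}|$.

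The next step is to invoke sample splitting: conditionally on the auxiliary fold used to construct $\hat\pi,\hat\Phi_1,\hat\Phi_0$, these are fixed functions of $w$, so I can write
\[
\hat S_i-S_i=\underbrace{\big(\hat S_i-S_i-\mathbb{E}[\hat S_i-S_i\mid W_i]\big)}_{=:\,\Delta_i^{(1)}}+\underbrace{\mathbb{E}[\hat S_i-S_i\mid W_i]}_{=:\,\Delta_i^{(2)}},
\]
where the conditional expectation is taken over $(D_i,\Delta Y_i)$ given $W_i$. The centered part $\Delta_i^{(1)}$ is mean-zero given $W_i$ and is linear in the primitive noise $\epsilon_{1i},\epsilon_{0i}$ and $(D_i-\pi_i)$, each weighted by a nuisance error that is $O_p(1)$ uniformly by Assumption \ref{assum:splitting1}(i)--(iii). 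The essential feature is that the conditional bias $\Delta_i^{(2)}$ carries \emph{no} first-order nuisance error: the orthogonality in Lemma \ref{doublerobust}(ii) kills the first derivative in the nuisance direction, so $\Delta_i^{(2)}$ reduces, up to uniformly bounded factors, to products of the form $(1/\hat\pi_i-1/\pi_i)(\hat\Phi_1-\Phi_1)$ and $(1/\hat\pi_i-1/\pi_i)(\hat\Phi_0-\Phi_0)$.

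For the centered part I would bound $\max_j|n^{-1}\sum_i\Delta_i^{(1)}X_{ij}|$ by the maximal-inequality route already used in Lemma \ref{lem_iotav}: each summand is a product of a sub-exponential noise factor (Assumption \ref{assum:error}(ii), $r_\epsilon>4$), a uniformly bounded nuisance error, and a sub-Gaussian $X_{ij}$, so the maximal inequality of \cite{chernozhukov2017central} (Lemmas E.1--E.2) over the $p$ coordinates yields $\max_j|n^{-1}\sum_i\Delta_i^{(1)}X_{ij}|\lesssim\sqrt{\log p/n}$ on an event of probability $1-O(1/p)$, which is below $\lambda/16$ once the constant in $\lambda\gtrsim\sqrt{\log p/n}$ is taken large. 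For the bias part I would apply Cauchy--Schwarz coordinatewise,
\[
\max_j\Big|n^{-1}\sum_i\Delta_i^{(2)}X_{ij}\Big|\le\sqrt{\mathbb{E}_n\big[(\Delta_i^{(2)})^2\big]}\,\max_j\sqrt{\mathbb{E}_n\big[X_{ij}^2\big]},
\]
where $\max_j\mathbb{E}_n[X_{ij}^2]=O_p(1)$ (sub-Gaussianity and $\log p/n\to0$) and $\mathbb{E}_n[(\Delta_i^{(2)})^2]$ is governed by the product rates of Assumption \ref{assum:splitting1}(iv)--(v). Adding the two pieces delivers the $\lambda/8$ (resp. $\lambda/4$) threshold on a set of probability $1-O(1/p)$.

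The main obstacle is precisely the bias control. One must verify, first, that $\Delta_i^{(2)}$ genuinely collapses to second-order products---this is where the double-robust construction of $\Upsilon$ in Lemma \ref{doublerobust} is indispensable, since otherwise a surviving first-order term of size $\sqrt{\log p/n}\cdot O_p(1)$ would not be $o_p(\lambda)$ and the $\lambda/8$ budget could fail---and, second, that the product rates in Assumption \ref{assum:splitting1}(iv)--(v) are calibrated against the tuning choices of Assumption \ref{assum:chooselambda} so that $\sqrt{\mathbb{E}_n[(\Delta_i^{(2)})^2]}=O_p(\lambda)$. A secondary technical point is that $\epsilon_i$ is only sub-exponential with $r_\epsilon>4$ moments rather than sub-Gaussian, so the concentration step for $\Delta_i^{(1)}$ must use the heavier-tailed maximal inequality inherited from Lemma \ref{lem_iotav} rather than a plain sub-Gaussian bound.
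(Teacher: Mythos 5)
Your proposal is correct and follows essentially the same route as the paper's proof: the H\"older reduction to $\|(\hat{\bm S}-\bm S)^{\top}\tilde{\bm X}/n\|_\infty$, sample splitting to make the first-order terms conditionally mean-zero given $W_i$ (your $\Delta_i^{(1)}$ collects exactly the paper's terms $E2$, $E3$ and the centered part of $E4$, bounded by the same Bernstein-type maximal inequalities over the $p$ coordinates), and Cauchy--Schwarz with Assumption \ref{assum:splitting1}(iv)--(v) for the second-order product bias (your $\Delta_i^{(2)}$, which is the paper's $E4$ terms, the paper applying Cauchy--Schwarz to them uncentered). Your conditional-centering presentation is only a repackaging of the paper's explicit six-term algebraic decomposition of $\hat S_i-S_i$, and even the calibration caveat you flag---that $\sqrt{\mathbb{E}_n[(\Delta_i^{(2)})^2]}$ must be $O_p(\lambda)$---mirrors how the paper invokes the product-rate assumption there.
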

\proof 
The proof for $\mathcal{T}_{7}$ is similar to $\mathcal{T}_{5}$ and thus we focus on the proof of $\mathcal{T}_{7}$ below. From the definition of $S_i$, 
\begin{align*}
\hat S_i - S_i = & \hat\rho\Big(\Delta Y(i) -(1-\hat{\pi}_i)\hat{\Phi}_1(W_i) - \hat{\pi}_i\hat{\Phi}_0(W_i) \Big) - \rho_i \Big(\Delta Y_i -(1-\pi_i)\Phi_1(W_i) - \pi_i\Phi_0(W_i) \Big)  \\
& =  \underbrace{D_i(\Delta Y_{i} - \Phi_1(W_i))\left(\frac{1}{\hat{\pi}_i} -  \frac{1}{\pi_i}\right)}_{E2:1} + \underbrace{(1-D_i)(\Delta Y_{i} - \Phi_0(W_i))\left(\frac{1}{1-\hat{\pi}_i} -  \frac{1}{1-\pi_i}\right)}_{E2:2}\\
& + \underbrace{(\hat{\Phi}_1(W_i) - \Phi_1(W_i))\left(1-\frac{D_i}{{\pi}_i}\right)}_{E3:1}-\underbrace{(\hat{\Phi}_0(W_i) - \Phi_0(W_i))\left(1-\frac{1-D_i}{1-{\pi}_i}\right)}_{E3:2}\\
& + \underbrace{D_i(\hat{\Phi}_1(W_i) -  \Phi_1(W_i))\left(\frac{1}{{\pi}_i}-\frac{1}{\hat{\pi}_i}\right)}_{E4:1}-\underbrace{(1-D_i)(\hat{\Phi}_0(W_i) - \Phi_0(W_i))\left(\frac{1}{1-{\pi}_i}-\frac{1}{1-\hat{\pi}_i}\right)}_{E4:2}
\end{align*}
Let $S1$ to denote Sample 1 and $S2$ to denote Sample 2. We use $S1$ to estimate all nuisance functions $\hat{\Phi}_1(\cdot)$, $\hat{\Phi}_0(\cdot)$ and $\hat\pi(\cdot)$, and we use $S2$ to find $\hat \beta$ and $\hat f$. Similar to Lemma \ref{lem_iotav}, for $(\beta,f) \in \mathcal{M}_2(R)$,
\[
|(\hat{\bm{S}} - \bm{S})^{\top}\tilde{\bm{v}}_{n}(\bm{X}, \beta)/n|
=|(\hat{\bm{S}} - \bm{S})^{\top}\tilde{\bm{X}}(\beta-\beta_0)/n|\leq \|(\hat{\bm{S}} - \bm{S})^{\top}\tilde{\bm{X}}/n\|_{\infty}\|\beta-\beta_0\|_1.
\]
For each component of $\tilde X_{i}$,
\begin{align*}
\E_n\left[\tilde X_{ij}D_i(\Delta Y_{i} - \Phi_1(W_i))\left(\frac{1}{\hat{\pi}_i} - \frac{1}{\pi_i} \right)\right] = \E_n\left[\left(\frac{1}{\hat{\pi}_i} - \frac{1}{\pi_i} \right)\tilde X_{ij}D_i\epsilon_{1,i}\right].
\end{align*}
Since $\epsilon_{1,i}$ is independent of $\left(1/\hat{\pi}_i - 1/\pi_i \right)$ due to sample split and by Assumption \ref{assum:error}, we have $\E\left[\left(1/\hat{\pi}_i - 1/\pi_i \right)\tilde X_{ij}D(i)\epsilon_{1,i}\right]= 0$. 
Define $r_{ni}^{\pi} = \left(1/\hat{\pi}_i - 1/\pi_i \right)$. Lemma 14.15 in \cite{buhlmann2011statistics}  implies that for a constant $C_\pi$,
\begin{align*}
&\mathbb{P}\left(\max_{1\leq j\leq p}\left|\frac{1}{n}\sum_{i=1}^{n}r_{ni}^{\pi}\tilde{X}_{ij}D_{i}\epsilon_{1,i}\right|\geq\max_{1\leq i\leq n}r_{ni}^{\pi}\max_{1\leq j\leq p}\sqrt{\frac{1}{n}\sum_{i=1}^{n}\tilde{X}_{ij}^{2}D_{i}\epsilon_{1,i}^{2}}\sqrt{2\left(t^{2}+\frac{\log p}{n}\right)}\right)\\
&\leq\exp(-nt^{2}),
\end{align*}
where with probability bigger than $1-1/p$, for sufficiently large $C$

\[\max_{1\leq j\leq p}\left|\frac{1}{n}\sum_{i=1}^{n}\left(\tilde{X}_{ij}^{2}D_{i}\epsilon_{1,i}^{2}-\mathbb{E}\left[\tilde{X}_{ij}^{2}D_{i}\epsilon_{1,i}^{2}\right]\right)\right|\leq C\sqrt{\frac{\log p}{n}}.\]
Thus,
\[\max_{1\leq j\leq p}\left|\frac{1}{n}\sum_{i=1}^{n}r_{ni}^{\pi}\tilde{X}_{ij}D_{i}\epsilon_{1,i}\right|\leq2\max_{1\leq i\leq n}r_{ni}^{\pi}\sqrt{\frac{\log p}{n}}\]
with probability approaching one.

From Assumption \ref{assum:splitting1}, $\max_{1\leq i \leq n}(1/\hat{\pi}_{i}-1/\pi_{i})^2=O_p(1)$. Thus, take $t^{2} = \log p$, with probability at least $1-2/p$, $\|\E_n(   (E2:1) \tilde X_{i})\|_\infty \lesssim \sqrt{\log p/n}$. 

Next define $r_{ni}^{\Phi_1} = \hat{\Phi}_1(W_i) - \Phi_1(W_i)$. From sample splitting, 
$$
\mathbb{E}\left(r_{ni}^{\Phi_1} \tilde X_{ij}\left(1-\frac{D_i}{{\pi}_i}\right)\right) = \mathbb{E}\left(r_{ni}^{\Phi_1}\tilde X_{ij}\left(1-\frac{\mathbb{E}(D_i|X_i, Z_i, S1)}{{\pi}_i}\right)\right) = 0
$$
With constant $C_{\Phi_1}$, We can then apply the same bound such that
\begin{align*}
&\mathbb{P}\left(\max_{1\leq j\leq p}\left|\frac{1}{n}\sum_{i=1}^{n}r_{ni}^{\Phi_{1}}\tilde{X}_{ij}D_{i}\epsilon_{1,i}\right|\geq C_{\Phi_1} \max_{1\leq i\leq n}r_{ni}^{\Phi_{1}}\max_{1\leq j\leq p}\sqrt{\frac{1}{n}\sum_{i=1}^{n}\tilde{X}_{ij}^{2}D_{i}\epsilon_{1,i}^{2}}\sqrt{2\left(t^{2}+\frac{\log p}{n}\right)}\right)\\
&\leq\exp(-nt^{2}).
\end{align*}
From Assumption \ref{assum:splitting1}, $\|\E_n((E3:1) \tilde X_{i})\|_\infty = O_p(\sqrt{\log p/n})$. Lastly consider $E4:1$,
\begin{align*}
& \left|\mathbb{E}_n\left(\tilde X_{ij} D_i(\hat{\Phi}_1(W_i) -  \Phi_1(W_i))\left(\frac{1}{{\pi}_i}-\frac{1}{\hat{\pi}_i}\right) \right) \right|^2\\
& \leq  \mathbb{E}_n\left((\hat{\Phi}_1(W_i) - \Phi_1(W_i))^2 \cdot \left(\frac{1}{{\pi}_i}-\frac{1}{\hat{\pi}_i}\right)^2\right) \cdot \mathbb{E}_n\left(\tilde X_{ij}^2 D_i^2 \right) = O_p(\log p/n)	
\end{align*}
The last equality follows as $\mathbb{E}_n\left(\tilde X_{ij}^2 D_i^2 \right) = O_p(1)$. Similar results can be derived for $E2:2$, $E3:2$, and $E4:2$. Thus the first statement is proved. \\
\qed
\begin{lem}\label{lem_Phi}
Suppose that Assumptions \ref{Ass:parallel}-\ref{assum:chooselambda} are satisfied. Then $\mathbb{P}(\mathcal{T}_{6})=1 - O(1/p)$.
\end{lem}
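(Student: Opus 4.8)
The plan is to follow the template of the proof of Lemma~\ref{lem_S}, replacing the projected regressors $\tilde X_{ij}$ by the sieve basis functions $\psi^{k_n}_m(Z_i)$ and the maximum over $j\le p$ by the maximum over $m\le k_n$. Writing $v_m:=\mathbb{E}_n[(\hat S_i-S_i)\psi^{k_n}_m(Z_i)]$, so that $\|(\hat{\bm S}-\bm S)^{\top}\Psi_n/n\|_\infty=\max_{1\le m\le k_n}|v_m|$, I would insert the six-term decomposition of $\hat S_i-S_i$ already established in the proof of Lemma~\ref{lem_S} (the two error-linear pairs $E2$ and $E3$, and the product pair $E4$), bound the contribution of each group to $\max_m|v_m|$ on an event of probability $1-O(1/p)$, and only at the end multiply by $\sqrt{k_n}$ and compare with $R/384$.

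For the four error-linear terms ($E2{:}1,E2{:}2,E3{:}1,E3{:}2$) I would reuse the sample-splitting device: conditionally on the first-stage sample the summands are mean zero, since $\mathbb{E}[\epsilon_{1i}\mid W_i]=\mathbb{E}[\epsilon_{0i}\mid W_i]=0$ handles $E2$ and $\mathbb{E}[D_i\mid X_i,Z_i,S1]=\pi_i$ handles $E3$. Conditioning on $\bm Z$ and the first-stage sample and applying Lemma~14.15 of \cite{buhlmann2011statistics}, together with the moment bound of Assumption~\ref{assum:error}, the envelope $|\psi^{k_n}_m|\le\xi_0(k_n)$, and the $\sup$-norm control of the nuisance errors in Assumption~\ref{assum:splitting1}(i)--(iii), I would take a union bound over $m=1,\dots,k_n$ with tail parameter of order $\log p$ (using $\log k_n\le\log p$) to conclude that each such term contributes $O_p(\sqrt{\log p/n})$ to $\max_m|v_m|$ with probability $1-O(1/p)$.

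For the two product terms ($E4{:}1,E4{:}2$) I would use Cauchy--Schwarz coordinatewise, $|v_m|\le(\mathbb{E}_n[\psi^{k_n}_m(Z_i)^2D_i])^{1/2}(\mathbb{E}_n[(1/\hat\pi_i-1/\pi_i)^2(\hat\Phi_1(W_i)-\Phi_1(W_i))^2])^{1/2}$ and its $E4{:}2$ analogue, bounding the first factor by a constant through the normalization $\mathbb{E}[\psi^{k_n}(Z_i)\psi^{k_n}(Z_i)^{\top}]=I_{k_n}$ and a Bernstein step, and invoking the product rates of Assumption~\ref{assum:splitting1}(iv)--(v) to obtain $O_p(\sqrt{\log p/n\vee k_n\log k_n/n})$.

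The step I expect to be the main obstacle is the reconciliation with $R$. Summing the groups and multiplying by $\sqrt{k_n}$, the coordinatewise bounds above carry a spurious $\sqrt{\log}$ (and, for the product terms, $\sqrt{k_n}$) inflation relative to $R$, whose leading piece is only $\sqrt{k_n/n}$ by Assumption~\ref{assum:chooselambda}(iii). The robust way to close the gap is to control the entire $k_n$-vector in $\ell_2$ rather than coordinatewise: writing the error-linear contribution as $n^{-1}\Psi_n^{\top}\bm u$ for the appropriate error vector $\bm u$ and bounding $\mathbb{E}\|n^{-1}\Psi_n^{\top}\bm u\|_2^2$ via the trace identity $\mathbb{E}\|\psi^{k_n}(Z_i)\|_2^2=\mathrm{tr}(I_{k_n})=k_n$ yields the exact order $\sqrt{k_n/n}\asymp R$ with no logarithmic loss, mirroring the treatment of $\|\bm{\iota}_{n}^{*}\|_n$ in Lemma~\ref{lem:eigen}; the product terms are then absorbed through the projection bound $\|n^{-1}\Psi_n^{\top}\bm u\|_2^2\le n^{-1}\|\bm u\|_2^2$ combined with Assumption~\ref{assum:splitting1}(iv)--(v). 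This $\ell_2$ estimate, $\|(\hat{\bm S}-\bm S)^{\top}\Psi_n/n\|_2=O_p(R)$, is precisely what the downstream Cauchy--Schwarz step in Lemma~\ref{lemR} actually requires, and verifying that the rate conditions of Assumptions~\ref{assfeigen}(ii) and \ref{assum:chooselambda}(iii) force the combined bound below the required multiple of $R$ is the delicate accounting on which the argument ultimately rests.
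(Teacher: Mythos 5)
Your first three paragraphs reproduce the paper's own proof of Lemma~\ref{lem_Phi} essentially verbatim: the paper inserts the six-term decomposition of $\hat S_i-S_i$ established in Lemma~\ref{lem_S}, uses sample splitting so that the error-linear terms are conditionally mean zero, applies Lemma~14.15 of \cite{buhlmann2011statistics} with a union bound over the $k_n$ basis coordinates and a tail choice of order $\log p$ (so the exceptional event has probability $O(1/p)$), arrives at $\sqrt{k_n}\,\|\mathbb{E}_n[(E2{:}1)\Psi_n]\|_\infty\lesssim\sqrt{k_n\log k_n/n}$, and then disposes of the $E3$ and $E4$ terms via Assumption~\ref{assum:splitting1}(iv)--(v) exactly as you propose, declaring them ``similar'' and omitting the details. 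Where you genuinely depart is the final reconciliation with $R/384$, and your diagnosis there is accurate: the paper never performs this step --- its displayed rate $\sqrt{k_n\log k_n/n}$ exceeds $R\asymp\sqrt{k_n/n}$ (Assumption~\ref{assum:chooselambda}(iii)) by a $\sqrt{\log k_n}$ factor, and the inclusion in $\mathcal{T}_6$ is simply asserted. Your $\ell_2$/trace repair in the style of Lemma~\ref{lem:eigen} is sound for the error-linear terms (conditional independence kills the cross terms and $\mathbb{E}\|\psi^{k_n}(Z_i)\|_2^2=k_n$ gives exactly $O_p(\sqrt{k_n/n})$ with no log), and you are right that $\|(\hat{\bm{S}}-\bm{S})^{\top}\Psi_n/n\|_2$ is all that the Cauchy--Schwarz display in Lemma~\ref{lemR} actually consumes.

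Two caveats on the repair. First, since $\|v\|_2\le\sqrt{k_n}\|v\|_\infty$, an $\ell_2$ bound does not imply the event $\mathcal{T}_6$ as literally defined; your argument proves a sufficient surrogate, so to be airtight you must redefine $\mathcal{T}_6$ (and the corresponding line of Lemma~\ref{lemR}) in $\ell_2$ form --- a harmless but necessary edit. Second, your claim that the $E4$ product terms are ``absorbed'' is not fully closed: the projection bound gives $\|n^{-1}\Psi_n^{\top}\bm{u}\|_2\lesssim\|\bm{u}\|_n$ (using bounded eigenvalues of $\widehat{Q}_z$), and Assumption~\ref{assum:splitting1}(iv)--(v) only yields $\|\bm{u}\|_n=O_p(\sqrt{\log p/n\vee k_n\log k_n/n})$, which again overshoots $R$ by $\sqrt{\log k_n}$. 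That residual slack is inherited from the paper's assumption, afflicts the published proof equally, and can only be removed by strengthening the first-stage product rate to $o_p(k_n/n)$ or by building the log factor into $R^2$. In short: your proposal matches the paper's route where the paper has one, is at least as complete where it does not, and is more candid about where the argument remains loose.
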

\proof 
Similar as Lemma \ref{lem_S}, consider the interaction of each component of $\hat S_i - S_i$ with $\psi^{k_n}_{j}(Z_{i})$ for $j = 1, \cdots, k_n$.
 \begin{align*}
\E_n\left[\psi^{k_n}_{j}(Z_{i})D_i(\Delta Y_{i} - \Phi_1(W_i))\left(\frac{1}{\hat{\pi}_i} - \frac{1}{\pi_i} \right)\right] = \E_n\left[\left(\frac{1}{\hat{\pi}_i} - \frac{1}{\pi_i} \right)\psi^{k_n}_{j}(Z_{i})D_i\epsilon_{1,i}\right].
\end{align*}
By sample splitting and Assumption \ref{assum:error}, we have $\E\left[\left(1/\hat{\pi}_i - 1/\pi_i \right)\psi^{k_n}_{j}(Z_{i})D(i)\epsilon_{1,i}\right]= 0$. Again, Lemma 14.15 in \cite{buhlmann2011statistics} implies that for a constant $C_\pi$,
\begin{align*}
&\mathbb{P}\left(\max_{1\leq j\leq k_n}\left|\frac{1}{n}\sum_{i=1}^{n}r_{ni}^{\pi}\psi^{k_n}_{j}(Z_{i})D_{i}\epsilon_{1,i}\right|\geq\max_{1\leq i\leq n}r_{ni}^{\pi}\max_{1\leq j\leq k_n}\sqrt{\frac{1}{n}\sum_{i=1}^{n}\psi^{k_n}_{j}(Z_{i})^{2}D_{i}\epsilon_{1,i}^{2}}\sqrt{2\left(t^{2}+\frac{\log k_n}{n}\right)}\right)\\
&\leq\exp(-nt^{2}),
\end{align*}
where

\[\max_{1\leq j\leq k_n}\left|\frac{1}{n}\sum_{i=1}^{n}\left(\psi^{k_n}_{j}(Z_{i})^{2}D_{i}\epsilon_{1,i}^{2}-\mathbb{E}\left[\psi^{k_n}_{j}(Z_{i})^{2}D_{i}\epsilon_{1,i}^{2}\right]\right)\right|=O_{p}\left(\sqrt{\frac{\log k_n}{n}}\right).\]
So we have 
\[\max_{1\leq j\leq p}\left|\frac{1}{n}\sum_{i=1}^{n}r_{ni}^{\pi}\psi^{k_n}_{j}(Z_{i})D_{i}\epsilon_{1,i}\right|\leq2\max_{1\leq i\leq n}r_{ni}^{\pi}\sqrt{\frac{\log k_n}{n}}\]
with probability approaching one. From Assumption \ref{assum:splitting1}, $\max_{1\leq i \leq n}(1/\hat{\pi}_{i}-1/\pi_{i})^2=O_p(1)$. Thus, take $t^{2} = \log p$, with probability at least $1-1/p$, $\sqrt{k_n}\|\E_n(   (E2:1) \Psi_{n})\|_\infty\lesssim \sqrt{k_n\log k_n/n}$. 
Equation $E3:1$ and $E4:1$ can be bounded under the rate $\sqrt{k_n\log k_n/n}$ similarly as shown in Lemma \ref{lem_S} with assumption \ref{assum:splitting1} and thus we omit their proof. 
\qed

\subsection{Proof of Theorem \ref{thminf1}}
\begin{proof}
Consider the following decomposition such that for $\widehat{\Sigma}_{\tilde{X}}:=\mathbb{E}\tilde{X}_{i}\tilde{X}_{i}^{\top}$, 
\begin{align*}
\hat T-\xi^{\top}\beta_0&=\xi^{\top}(\hat\beta-\beta_0)-\hat w^{\top}\mathbb{E}_n [(\hat S_i-X_i^\top\hat\beta-\hat f(Z_i)\tilde X_i]\\
&=\xi^{\top}(\hat\beta-\beta_0)-\hat w^{\top}\mathbb{E}_n [(S_i-X_i^\top\beta_0- f_{0}(Z_i))\tilde X_i]-\hat w^{\top}\mathbb{E}_n [\hat S_i-S_i]\\
&~~~+\hat w^{\top}\mathbb{E}_n [X_i^\top(\hat\beta-\beta_0)-\hat f(Z_i)+f_{0}(Z_i))\tilde X_i]\\
&=-\hat w^{\top}\mathbb{E}_n [S_i-X_i^\top\beta_0- f_{0}(Z_i))\tilde X_i]-\hat w^{\top}\mathbb{E}_n [(\hat S_i-S_i)\tilde X_i]\\
&~~~+(\hat w\widehat{\Sigma}_{\tilde{X}}+\xi)^{\top}(\hat\beta-\beta_0)+\hat w^{\top}\mathbb{E}_n[(\Pi_{n,X_i|Z_i}{(\hat\beta-\beta_0))+(\hat f(Z_i)-f_{0}(Z_i)})\tilde X_i]\\
&:=-I_1-I_2+I_3+I_4.
\end{align*}
Recall that $\hat S_i = \hat{\rho}_i (\Delta Y_i -(1-\hat{\pi}_i)\hat{\Phi}_1(W_i) - \hat{\pi}_i\hat{\Phi}_0(W_i))$ and $S_i = \rho_i (\Delta Y_i -(1-\pi_i)\Phi_1(W_i) - \pi_i\Phi_0(W_i))$. We now analyze the four terms in the last expression one by one. 

For the first term,
\begin{align*}
&I_1-w_0^{\top}\mathbb{E}_n [S_i-X_i^\top\beta_0- f_{0}(Z_i))\tilde X_i]=(\hat w-w_0)^{\top}\mathbb{E}_n[S_i-X_i^\top\beta_0- f_{0}(Z_i))\tilde X_i]\\
&\leq\|\hat w-w_0\|_1\|\mathbb{E}_n (S_i-X_i^\top\beta_0- f_{0}(Z_i))\tilde X_i\|_\infty\\
&=O_p(s_w\sqrt{\log p/n}) \times O_p(\sqrt{\log p/n})= O_p(s_w(\log p/n))
\end{align*}
where $\|\hat w-w_0\|_1 = O_p(s_w\sqrt{\log p/n})$ as a Dantzig selector as defined in Theorem 7.1 in \cite{Bickel2009}, and the fact that $\|\mathbb{E}_{n}\epsilon_{i}\tilde{X}_{i}\|_{\infty}=O_{p}\left(\sqrt{\log p/n}\right)$ by Lemma E.1 and E.2 of  \cite{chernozhukov2017central}. Then Assumption \ref{assum: rate_thminf1} (i) guarantees that the remainder term in $I_1$ is $o_p(n^{-1/2})$. 

For the second term, 
\[
I_2\leq \|\hat w\|_1\|\mathbb{E}_n (\hat S_i-S_i)\tilde X_i\|_\infty\leq \|w_0\|_1\|\mathbb{E}_n (\hat S_i-S_i)\tilde X_i\|_\infty,
\]
where $\|w_0\|_1\leq s_w$ and $\|\hat w\|_1 \leq \|w_0\|_1$ because of the definition of $\hat w$. Then Lemma \ref{lem_S2} implies that $s_{w}\|\mathbb{E}_{n}\left[(\hat{S}_{i}-S_{i})\tilde{X}_{i}\right]\|_{\infty}=o_p(n^{-1/2})$ so we have $\sqrt{n}I_2= o_p(1)$.

For the third term,
$$
I_3\leq \|\hat w\widehat{\Sigma}_{\tilde{X}}+\xi\|_\infty\|\hat\beta-\beta_0\|_1\leq \lambda'\|\hat\beta-\beta_0\|_1=O_p(s_0{\frac{\log p}{n}}),
$$
where the last step follows from Theorem \ref{thm1} and $\lambda'=O_p(\sqrt{\log p/n})$ so $\sqrt{n}I_3=o_p(1)$ because $s_0\log p/\sqrt{n}=o_{p}(1)$.  

For the last term, $I_4$, by construction, $\mathbb{E}_{n}\left[\Pi_{n,X_{i}|Z_i}^{\top}\tilde{X}_{i}\right]=0$ and $\mathbb{E}_{n}\left[\left\{ \hat{f}(Z_{i})-f_{n0}(Z_{i})\right\} \tilde{X}_{i}\right]=0$. Moreover, Assumption \ref{assum: rate_thminf1} (i) implies that $s_w \max_{1\leq j\leq p,1\leq i \leq n}|\tilde{X}_{ij}r_{ni}|=o(n^{-1/2})$, so we have 
\[
\sqrt{n}I_{4}\leq \|w_{0}\|_{1} \|\mathbb{E}_{n}[r_{ni}\tilde{X}_{i}]\|_{\infty}=o_p(1).
\]
Combining the above results for $I_1$-$I_4$ and Assumption \ref{assum: rate_thminf1} (ii), we obtain
\[
\hat T-\xi^{\top}\beta_0=-w_0^{\top}\mathbb{E}_n \epsilon_i\tilde{X}_i+o_p(n^{-1/2})=-w_0^{\top}\mathbb{E}_n \epsilon_i\bar{X}_i+o_p(n^{-1/2}).
\]

Next, for $\Omega_{\beta}=\mathbb{E}\left[\sigma_{i}^{2}\bar{X}_{i}\bar{X}_{i}^{\top}\right]$, note that $w_0=\Sigma_{\bar{X}}^{-1}\xi$, $\xi^{\top}V_{\beta}\xi=w_{0}^{\top}\Omega_{\beta}w_{0}$,
\[
\mathbb{E}\left[\left(w^{\top}_{0}\Omega_{\beta}w_0\right)^{-1/2}\frac{1}{\sqrt{n}}\sum_{i=1}^{n}w_{0}^{\top}\bar{X}_{i}\epsilon_{i}\right]=0\]
because $\mathbb{E}[\epsilon_i\bar{X}_i]=0$, and 
\[
\mathbb{E}\left[\left(w^{\top}_{0}\Omega_{\beta}w_0\right)^{-1/2}\frac{1}{\sqrt{n}}\sum_{i=1}^{n}w_{0}^{\top}\bar{X}_{i}\epsilon_{i}\right]^{2}=1.\]
We want to verify the Lyapunov's condition for CLT. In particular, we wish to show that
\begin{align}
\frac{1}{(w_{0}^{\top}\Omega_{\beta}w_{0})^{r_{\epsilon}/4}}\sum_{i=1}^{n}\mathbb{E}\left[w_{0}^{\top}\bar{X}_{i}\epsilon_{i}/\sqrt{n}\right]^{r_{\epsilon}/2}=o_p(1). \label{eq:Lyapunov}
\end{align}

First because $\|w_{0}\|_{1}\leq s_{w}$, for $r_\epsilon>4$,
\begin{align*}
&\sum_{i=1}^{n}\mathbb{E}\left[w_{0}^{\top}\bar{X}_{i}\epsilon_{i}/\sqrt{n}\right]^{r_{\epsilon}/2}\leq\sum_{i=1}^{n}\mathbb{E}\left[\|w_{0}\|_{1}\|\bar{X}_{i}\epsilon_{i}/\sqrt{n}\|_{\infty}\right]^{r_{\epsilon}/2}\\
 & \leq \sum_{i=1}^{n}\left(\frac{s_{w}}{\sqrt{n}}\right)^{r_{\epsilon}/2}\max_{1\leq k\leq p}\mathbb{E}\left[\left|\bar{X}_{ik}\epsilon_{i}\right|^{r_{\epsilon}/2}\right]=O_{p}\left(\frac{s_{w}^{r_{\epsilon}/2}}{n^{r_{\epsilon}/4-1}}\right),
\end{align*}
where the last equaltion follows because $\left[|\bar{X}_{ik}\epsilon_{i}|\right]^{r_{\epsilon}/2}<\infty$ by the Cauchy-Schwarz inequality.
Moreover, because 
\[
\left(\xi^{\top}V_{\beta}\xi\right)^{r_{\epsilon}/4}\geq\left[\|\xi\|_{2}^{2}\Lambda_{\min}(\Omega_{\beta})\Lambda_{\min}^{2}(\Sigma_{\bar{X}}^{-1})\right]^{r_{\epsilon}/4}=p^{r_{\epsilon}/2}\left[\Lambda_{\min}(\Omega_{\beta})\Lambda_{\min}^{2}(\Sigma_{\bar{X}}^{-1})\right]^{r_{\epsilon}/4}>0
\]
 is bounded away from zero so (\ref{eq:Lyapunov}) is satisfied with Assumption \ref{assum:error}. Therefore,
\[
\left(w^{\top}_{0}\Omega_{\beta}w_0\right)^{-1/2}\frac{1}{\sqrt{n}}\sum_{i=1}^{n}w_{0}^{\top}\bar{X}_{i}\epsilon_{i}\rightarrow_{d} N(0,1),
\]
which implies that 
\[
\sqrt{n}(\hat{T}-\xi^{\top}\beta_{0})\rightarrow_{d}  N(0,w^{\top}_{0}\Omega_{\beta}w_0)= N(0,\xi^{\top}V_{\beta} \xi).
\]

Finally, we show that $\hat{V}_{\beta}\overset{p}{\rightarrow}V_{\beta}$.
Let $\hat{\Omega}_{\beta}=\frac{1}{n}\sum^{n}_{i=1}\hat{\epsilon}^{2}_i\tilde{X_i}\tilde{X}_{i}^{\top}$ and $\tilde{\Omega}_{\beta}=\frac{1}{n}\sum^{n}_{i=1}\left(\epsilon_i+r_{ni}\right)^{2}\tilde{X_i}\tilde{X}_{i}^{\top}$. 
Note that 
$\left|\hat{w}^{\top}\hat{\Omega}_{\beta}\hat{w}-\hat{w}^{\top}\tilde{\Omega}_{\beta}\hat{w}\right|\leq\|\hat{\Omega}_{\beta}-\tilde{\Omega}_{\beta}\|_{\infty}\|\hat{w}\|_{1}^{2}$,
$\left|\hat{w}^{\top}\tilde{\Omega}_{\beta}\hat{w}-\hat{w}^{\top}\Omega_{\beta}\hat{w}\right|\leq\|\tilde{\Omega}_{\beta}-\Omega_{\beta}\|\|\hat{w}\|_{1}$
and $\left|\hat{w}^{\top}\Omega_{\beta}\hat{w}-w_{0}^{\top}\Omega_{\beta}w_{0}\right|\leq\|\Omega_{\beta}\|_{\infty}\|\hat{w}-w_{0}\|_{1}$. Thus,
\begin{align}
\left|\hat{V}_{\beta}-V_{\beta}\right|= &\left|\hat{w}^{\top}\hat{\Omega}_{\beta}\hat{w}-w_{0}^{\top}\Omega_{\beta}w_{0}\right| \nonumber\\
\leq &\|\hat{w}\|_{1}^{2}\left(\|\hat{\Omega}_{\beta}-\tilde{\Omega}_{\beta}\|_{\infty}+\|\tilde{\Omega}_{\beta}-\Omega_{\beta}\|_{\infty}\right)+\|\hat{w}-w_{0}\|_{1}\|\Omega_{\beta}\|_{\infty} \nonumber\\
\leq & s^{2}_w\left(\|\hat{\Omega}_{\beta}-\tilde{\Omega}_{\beta}\|_{\infty}+\|\tilde{\Omega}_{\beta}-\Omega_{\beta}\|_{\infty}\right)+o_p(1). \label{V_beta}
\end{align}
We next consider to bound $\|\hat{\Omega}_{\beta}-\tilde{\Omega}_{\beta}\|_{\infty}$.
First note that Lemma E.1 and E.2 of \cite{chernozhukov2017central} imply that
\[
\max_{1\leq j\leq p}\left|\frac{1}{n}\sum_{i=1}^{n}\left(\tilde{X}_{ij}^{2}\epsilon_{i}^{2}-\mathbb{E}\left[\tilde{X}_{ij}^{2}\epsilon_{i}^{2}\right]\right)\right|=O_{p}\left(\sqrt{\frac{\log p}{n}}\right)\]
so 
\begin{equation}
s^{2}_w \|\tilde{\Omega}_{\beta}-\Omega_{\beta}\|_{\infty}=o_p(1) \label{eq: Omega_tilde}
\end{equation}
because of Assumption \ref{assum: rate_thminf1}(i).
Moreover, because
\begin{align}
\|\hat{\Omega}_{\beta}-\tilde{\Omega}_{\beta}\|_{\infty}=& \|\frac{1}{n}\sum_{i=1}^{n}\left(\hat{\epsilon}_{i}^{2}-\left(\epsilon_{i}+r_{ni}\right)^{2}\right)\tilde{X}_{i}\tilde{X}_{i}\|_{\infty} \nonumber \\
\leq & \max_{1\leq j,k \leq p}\left|\frac{2}{n}\sum_{i=1}^{n}\tilde{X}_{ij}\tilde{X}_{ik}\epsilon_{ni}\left(X_{i}^{\top}\left(\hat{\beta}-\beta_{0}\right)+\hat{f}(Z_{i})-f_{0}(Z_{i})\right)\right| \nonumber\\
 & +\max_{1\leq j,k \leq p}\left| \frac{1}{n}\sum_{i=1}^{n}\tilde{X}_{ij}\tilde{X}_{ik}\left(X_{i}^{\top}\left(\hat{\beta}-\beta_{0}\right)+\hat{f}(Z_{i})-f_{0}(Z_{i})\right)^{2}\right| \label{eq:varX}
\end{align}
where the first term on the RHS of (\ref{eq:varX}) is bounded by 
\begin{align*}
&2\max_{1\leq i\leq n}\max_{1\leq j,k \leq p}\left|\tilde{X}_{ij}\tilde{X}_{ik}\right|\|\bm{\epsilon}_{n}^{\top}\left(\bm{X}\left(\hat{\beta}-\beta\right)+\hat{f}(Z)-f(Z)\right)\|_{n}\\
\leq & 2\max_{1\leq i\leq n}\max_{1\leq j,k \leq p}\left|\tilde{X}_{ij}\tilde{X}_{ik}\right|\left(\|\frac{1}{n}\sum_{i=1}^{n}\left(\epsilon_{i}+r_{ni}\right)X_{i}\|_{\infty}\|\hat{\beta}-\beta_{0}\|_{1}+\sup_{z}\left|\hat{f}(z)-f(z)\right|\sqrt{\frac{1}{n}\sum_{i=1}^{n}\epsilon_{ni}^{2}}\right)\\
= & O_p\left(\log(np)\left( \frac{s_{0} \log p}{n}+\sqrt{\frac{\xi_{0}^{2}(k_{n}) k_{n}}{n}}+\ell_{k_{n}}c_{k_{n}}\right) \right)
\end{align*}

Moreover, the second term is bounded by 
\begin{align*}
&\max_{1\leq i\leq n}\max_{1\leq j,k \leq p}\left|\tilde{X}_{ij}\tilde{X}_{ik}\right|^{2}\frac{1}{n}\sum_{i=1}^{n}\left(X_{i}^{\top}(\hat{\beta}-\beta_{0})+\hat{f}(Z_{i})-f_{0}(Z_{i})\right)^{2}\\
&=O_{p}\left(\log(np)\left( \frac{s_{0}\log p}{n}+\sqrt{\frac{k_{n}}{n}}+\ell_{k_{n}}c_{k_{n}}\right)^{2}\right)
\end{align*}
where we use the sub-Gaussian property to obtain that $\max_{1\leq i\leq n}\max_{1\leq j,k\leq p}\left|X_{ij}X_{ik}\right|^{2}=O_{p}\left(\left(\log\left(np\right)\right)^{2}\right)$. 
Thus, with the additional assumption in Theorem \ref{thminf1}, we have
\begin{equation}
s^{2}_w\|\hat{\Omega}_{\beta}-\tilde{\Omega}_{\beta}\|_\infty=o_p(1) \label{eq:Omega_beta}
\end{equation}
Then $\left|\hat{V}_{\beta}-V_{\beta} \right|=o_p(1)$ follows from (\ref{V_beta}), (\ref{eq:Omega_beta}),  (\ref{eq: Sigma_rate}), and Assumption \ref{assum: rate_thminf1}. 
\end{proof}

\subsection{Proof of Theorem $\ref{thminf2}$}
\begin{proof}
We consider the following decomposition such that
\begin{align}
\bar{f}(z)-f_{n0}(z)=&\psi^{k_n}(z)^{\top}(\hat{\gamma}_{n}-\gamma_{n0}) \notag\\
&-\psi^{k_n}(z)^{\top}\hat  \Sigma^{-1}_{f}\mathbb{E}_n\left[\left(\hat S_i-X_i^\top\hat\beta - \psi^{k_n}(Z_i)^{\top}\hat{\gamma}_{n} \right)(\psi^{k_n}(Z_i)-\hat MX_i)\right] \notag\\
=&-\psi^{k_n}(z)^{\top}\hat  \Sigma^{-1}_{f}\mathbb{E}_n\left[\left( S_i-X_i^\top\beta - \psi^{k_n}(Z_i)^{\top}\gamma_{n0} \right)(\psi^{k_n}(Z_i)-\hat MX_i)\right] \notag\\
&-\psi^{k_n}(z)^{\top}\hat  \Sigma^{-1}_{f}\mathbb{E}_n\left[\left(\hat S_i-S_i\right)(\psi^{k_n}(Z_i)-\hat MX_i)\right] \notag\\
&-\psi^{k_n}(z)^{\top}\hat  \Sigma^{-1}_{f}\mathbb{E}_n\left[(\psi^{k_n}(Z_i)-\hat MX_i)X_i^{\top}\right](\hat \beta-\beta) \notag\\
&-\psi^{k_n}(z)^{\top}\Big(\hat  \Sigma^{-1}_{f}\mathbb{E}_n\left[(\psi^{k_n}(Z_i)-\hat MX_i)\psi^{k_n}(Z_i)^{\top}\right]-I_{k_n}\Big)(\hat \gamma_{n}-\gamma_{0n})\notag\\
:=&II_{1}+II_{2}+II_{3}+II_{4}\label{eq:f4}
\end{align}

The first term $II_{1}$ can be further expand as  
\begin{align}
&-\psi^{k_n}(z)^{\top}\hat  \Sigma^{-1}_{f}\mathbb{E}_n\left\{\left( S_i-X_i^\top\beta - \psi^{k_n}(Z_i)^{\top}\gamma_{n0}\right)(\psi^{k_n}(Z_i)-\hat MX_i)\right\}	\notag\\
=& -\psi^{k_n}(z)^{\top}\hat  \Sigma^{-1}_{f}\mathbb{E}_n\left\{\left( r_{ni}+ \epsilon_{i} \right)(MX_i- \hat MX_i)\right\}\label{eq:f11}\\
&-\psi^{k_n}(z)^{\top}\hat  \Sigma^{-1}_{f}\mathbb{E}_n\left\{\left( r_{ni} +  \epsilon_{i} \right)(\psi^{k_n}(Z_i)-  MX_i)\right\}.\label{eq:f12}
\end{align}

We first consider the term (\ref{eq:f11}) such that 
\begin{align*}
&\|\psi^{k_{n}}(z)^{\top}\hat{\Sigma}_{f}^{-1}\mathbb{E}_{n}\left[(\hat{M}-M)X_{i}\left( r_{ni} +  \epsilon_{i} \right)\right]\|_{2}\\
&\leq\|\psi^{k_{n}}(z)^{\top}\hat{\Sigma}_{f}^{-1}\|_{2}\sqrt{k_{n}}\|\mathbb{E}_{n}\left[(\hat{M}-M)X_{i}\left( r_{ni} +  \epsilon_{i} \right)\right]\|_{\infty}\notag\\
&\leq\|\psi^{k_{n}}(z)^{\top}\hat{\Sigma}_{f}^{-1}\|_{2}\sqrt{k_{n}}\|\hat{M}-M\|_{1}\|\mathbb{E}_{n}\left[X_{i}\left( r_{ni} +  \epsilon_{i} \right)\right]\|_{\infty}\notag\\
&=O_{p}\left(\sqrt{k_{n}}\left(s_{m}\sqrt{(\log k_{n}+\log p)/n}\right)\right)\times o_{p}\left(\sqrt{k_{n}/n}\sqrt{\log p/n}\right),
\end{align*}
where  $\|\hat{M}-M\|_{1}=O_{p}\left(s_{m}\sqrt{(\log k_{n}+\log p)/n}\right)$ from Lemma \ref{lemmaM}, $\max_{1\leq j\leq p,1\leq i\leq n}|X_{ij}r_{ni}|=o_{p}(n^{-1/2})$ by Assumption \ref{assum: rate_thminf1}(i) and $\|\mathbb{E}_{n}X_{i}\epsilon_{i}\|_{\infty}=O_{p}\left(\sqrt{\log p/n}\right)$ so the above term is $o_p(n^{-1/2})$ because of Assumption (\ref{assum: rate_thminf2}) (i). 
Also note that
\begin{align*}
(\ref{eq:f12})=&-\psi^{k_n}(z)^{\top}\left(\hat{\Sigma}^{-1}_{f}-\Sigma^{-1}_{f}\right)\mathbb{E}_n\left\{\left( r_{ni} +  \epsilon_{i} \right)(\psi^{k_n}(Z_i)-  MX_i)\right\} \nonumber\\
&-\psi^{k_n}(z)^{\top}\Sigma^{-1}_{f}\mathbb{E}_n\left\{\left( r_{ni} +  \epsilon_{i} \right)(\psi^{k_n}(Z_i)-  MX_i)\right\},
\end{align*}
where 
\begin{align}
\|\hat{\Sigma}_{f}-\Sigma_{f}\|_{2}\leq&\left\Vert \left(\mathbb{E}_{n}-\mathbb{E}\right)\left[\psi^{k_{n}}(Z_{i})\psi^{k_{n}}(Z_{i})^{\top}\right]\right\Vert _{2} \label{eq:f_Sigma_1}\\
&+\left\Vert \hat{M}\mathbb{E}_{n}[X_{i}X_{i}^{\top}]\hat{M}^{\top}-M\mathbb{E}\left[X_{i}X_{i}^{\top}\right]M^{\top}\right\Vert _{2}. \label{eq: f_Sigma_2}
\end{align}
The first term (\ref{eq:f_Sigma_1}) is bounded by
\begin{align}
\|(\mathbb{E}_n-\mathbb{E})[\psi^{k_n}(Z_i)\psi^{k_n}(Z_i)^{\top}]\|_{2}=O_p\left(\sqrt{\xi^{2}_0(k_n)\log k_n/n}\right).\label{eq:A.20}
\end{align}
following Lemma 6.2 in \cite{BCCK2015}. The second term (\ref{eq: f_Sigma_2})  is bounded by 
\begin{align}
&\left\Vert \hat{M}\mathbb{E}_{n}[X_{i}X_{i}^{\top}]\hat{M}-M\mathbb{E}\left[X_{i}X_{i}^{\top}\right]M\right\Vert _{2} \nonumber\\
&\leq \sqrt{k_{n}}\left\Vert \hat{M}\mathbb{E}_{n}[X_{i}X_{i}^{\top}]\hat{M}-M\mathbb{E}\left[X_{i}X_{i}^{\top}\right]M\right\Vert _{\infty} \nonumber\\
&\leq \sqrt{k_{n}}\|\hat{M}-M\|_1\|M\|_1\|\mathbb{E}_n\left[X_iX_i^{\top}\right]\|_{\infty}+\sqrt{k_{n}}\|M\|^{2}_1\|(\mathbb{E}_n-\mathbb{E})\left[X_iX_i^{\top}\right]\|_{\infty} \nonumber\\ 
& =O_{p}(s_{m}^{2}\sqrt{k_{n}\log p/n}) \label{eq:A.21},
\end{align}
where the first and second inequalities follow from direct calculation and the last inequality follows because $\lambda ''=O_p(\sqrt{\log p/n})$, $\|M\|_{1}\leq s_m$. Moreover, because $X_{ij}X_{ik}-\mathbb{E}[X_{ij}X_{ik}]$ is sub-exponential so by Bernstein inequality, we have for some constant $K_X$, 
\begin{align*}
\mathbb{P}\left(\max_{1\leq j,k\leq p}\left|\left(\mathbb{E}_{n}-\mathbb{E}\right)X_{ij}X_{ik}\right|>K_{X}\sqrt{\log2p/n}\right) \leq 1/2p.
\end{align*}
Thus, (\ref{eq:A.20}) and (\ref{eq:A.21}) imply that
\begin{equation}
\|\hat{\Sigma}_{f}-\Sigma_{f}\|_{2}=O_p\left(\sqrt{\xi^{2}_{0}(k_{n}) \log k_{n}/n}+s^{2}_{m}\sqrt{k_{n} \log p/n}\right)=o_p(1), \label{eq: Sigma_rate}
\end{equation}
where the second equality follows from Assumption \ref{assum: rate_thminf2}(i). 
Because 
\[
\sigma^{-1}_{z}\psi^{k_{n}}(z)^{\top}\Sigma^{-1}_{f} \mathbb{G}_{n}\epsilon_{i}(\psi^{k_{n}}(Z_{i})-MX_{i})=O_p(1)
\] 
and
\[
\sigma^{-1}_{z}\psi^{k_{n}}(z)^{\top}\Sigma^{-1}_{f} \mathbb{G}_{n}r_{ni}(\psi^{k_{n}}(Z_{i})-MX_{i})=O_{p}(\ell_{k_n}c_{k_n}\sqrt{k_n}),
\]
with (\ref{eq: Sigma_rate}), we have 
\[
\sqrt{n}\sigma_{z}^{-1}II_{1}=-\sqrt{n}\sigma_{z}^{-1}\psi^{k_{n}}(z)^{\top}\Sigma_{f}^{-1}\mathbb{E}_{n}\left\{ \left(\epsilon_{i}+r_{ni}\right)\left(\psi^{k_{n}}(Z_{i})-MX_{i}\right)\right\} +o_{p}(1).
\]
Similar to the argument in Theorem 2 of \cite{Newey1997}, with Assumption \ref{assum:error} and \ref{approx error}, the Lindbergh-Feller central limit theorem gives us
\begin{align*}
-\sqrt{n}\sigma_{z}^{-1}\psi^{k_{n}}(z)^{\top}\Sigma_{f}^{-1}\mathbb{E}_{n}\left\{\epsilon_{i}\left(\psi^{k_{n}}(Z_{i})-MX_{i}\right)\right\} \rightarrow_{d} N(0,1).
\end{align*}
and Assumption \ref{assum: rate_thminf2}(i) implies that 
\[
-\sqrt{n}\sigma_{z}^{-1}\psi^{k_{n}}(z)^{\top}\Sigma_{f}^{-1}\mathbb{E}_{n}\left\{r_{ni}\left(\psi^{k_{n}}(Z_{i})-MX_{i}\right)\right\}=o_{p}(1).
\]
Next we consider the term $II_2$. Note that 
\begin{align*}
&\sqrt{n}\sigma_{z}^{-1/2}\psi^{k_{n}}(z)^{\top}\Sigma_{f}^{-1}\mathbb{E}_{n}\left[(\hat{S}_{i}-S_{i})(\hat{M}-M)X_{i}\right]\\
&\leq\sqrt{k_{n}n}\|\sigma_{z}^{-1/2}\psi^{k_{n}}(z)^{\top}\Sigma_{f}^{-1}\|_{2}\|\mathbb{E}_{n}\left[(\hat{S}_{i}-S_{i})X_{i}\right]\|_{\infty}\|\hat{M}-M\|_{1}\\
&=O_p(s_m\sqrt{k_n}\sqrt{\log p/n})=o_{p}(1)
\end{align*}
and further from Lemma \ref{lem_S2}.
\[
\sqrt{n}\sigma_{z}^{-1/2}\psi^{k_{n}}(z)^{\top}\Sigma_{f}^{-1}\mathbb{E}_{n}\left[\left(\hat{S}_{i}-S_{i}\right)\left(\psi^{k_{n}}(Z_{i})-MX_{i}\right)\right]=o_{p}(1)
\]
Next, consider equation $II_{3}$. Consider the following decomposition:
\begin{align*}
&\left|\sigma_{z}^{-1/2}\psi^{k_n}(z)^{\top}\hat \Sigma^{-1}_{f}\mathbb{E}_n\left[(\psi^{k_n}(Z_i)-\hat MX_i)X_i^{\top}\right](\hat \beta-\beta)\right|\\
& \leq \|- \sigma_{z}^{-1/2}\psi^{k_n}(z)^{\top}\hat  \Sigma^{-1}_{f}\|_2 \cdot \sqrt{k_n}\left\|\mathbb{E}_n\left[(\psi^{k_n}(Z_i)-\hat MX_i)X_i^{\top}\right](\hat \beta-\beta)\right\|_\infty\\
&\leq \|- \sigma_{z}^{-1/2}\psi^{k_n}(z)^{\top}\hat  \Sigma^{-1}_{f}\|_2 \cdot \sqrt{k_n}\left\|\mathbb{E}_n\left[(\psi^{k_n}(Z_i)-\hat MX_i)X_i^{\top}\right]\right\|_\infty\left\|(\hat \beta-\beta)\right\|_1.
\end{align*}
From the definition of $\hat M_j$, $\left\|\mathbb{E}_n\left\{(\psi_j^{K_n}(Z_i)-\hat M_j^{\top}X_i)X_i^{\top}\right\}\right\|_\infty \leq \lambda''$ for all $j$; and from Theorem \ref{thm1}, $\|(\hat \beta-\beta)\|_1=O_p(s_0\sqrt{\log p/n})$, thus by choosing $\lambda''=O(\sqrt{\log p/n})$, $II_{3}=o_p(1)$ because $\sqrt{n}s_{m}s_{0}\log p/n=o_p(1)$ by Assumption \ref{assum: rate_thminf2}(i).

Finally the last term $II_{4}$ is 0, since $\hat \Sigma^{-1}_{f}\mathbb{E}_n\left\{(\psi^{k_n}(Z_i)-\hat MX_i)\psi^{k_n}(Z_i)^{\top}\right\} = I_{k_n} $.

When $\sqrt{n}\sigma^{-1}l_{k_n}c_{k_n}=o_p(1)$, we have 
\[
\sqrt{n}\sigma^{-1}_z(\tilde{f}(z)-f_0(z))=\sqrt{n}\sigma^{-1}_z(\tilde{f}(z)-f_{n0}(z))+o_{p}(1) \rightarrow_{d} N(0,1). 
\]

Next, we show the consistency of the variance term. Similar to Theorem 4.6 in \cite{BCCK2015}, we have
\[
\|(\mathbb{E}_n-\mathbb{E})[\sigma^{2}_i(Z_i,X_i)\psi^{k_n}(Z_i)\psi^{k_n}(Z_i)^{\top}]\|_{2}=O_p\left(\sqrt{\xi^{2}_0(k_n)\log k_n/n}\right).
\]

For $V_f=\Sigma^{-1}_f\Omega_f\Sigma^{-1}_f$ and $\sigma^2_{z}=\psi^{k_n}(z)^{\top}V_f\psi^{k_n}(z)$,
\[
\|\hat{V}_f-V_f\|_{2}\lesssim \|(\hat{\Sigma}^{-1}_f-\Sigma^{-1}_f)\hat{\Omega}_f\hat{\Sigma}^{-1}_f\|_{2}+\|\Sigma^{-1}_f(\hat{\Omega}_f-\Omega_f)\Sigma^{-1}_f\|_{2}
+\|\Sigma^{-1}_f\Omega_f(\hat{\Sigma}^{-1}_f-\Sigma^{-1}_f)\|_{2}.
\]
Note that for $\tilde{\Omega}_{f}=\mathbb{E}_{n}\left[\sigma_{i}^{2}\psi^{k_{n}}(Z_{i})\psi^{k_{n}}(Z_{i})^{\top}\right]-\hat{M}\mathbb{E}_{n}\left[\sigma_{i}^{2}X_{i}X_{i}^{\top}\right]\hat{M}$, 
\begin{align*}
\|\hat{\Omega}_{f}-\Omega_{f}\|_{2}&\leq\|\hat{\Omega}_{f}-\tilde{\Omega}_{f}\|_{2}+\|\tilde{\Omega}_{f}-\Omega_{f}\|_{2}.
\end{align*}
To bound $\|\hat{\Omega}_{f}-\tilde{\Omega}_{f}\|_{2}$, note that
\begin{align*}
&\|\mathbb{E}_{n}\left[\left(\hat{\sigma}_{i}^{2}-\sigma_{i}^{2}\right)\psi^{k_{n}}(Z_{i})\psi^{k_{n}}(Z_{i})^{\top}\right]\|_{2}\\
&\leq \max_{1\leq i\leq n}\left|X_{i}^{\top}\left(\hat{\beta}-\beta\right)+\hat{f}(Z_{i})-f_{0n}(Z_{i})\right|^{2}\|\mathbb{E}_{n}\left[\psi^{k_{n}}(Z_{i})\psi^{k_{n}}(Z_{i})^{\top}\right]\|_{2}\\
&\ \ +2\max_{1\leq i\leq n}\left|\epsilon_{i}+r_{ni}\right|\max_{1\leq i\leq n}\left|X_{i}^{\top}\left(\hat{\beta}-\beta\right)+\hat{f}(Z_{i})-f_{0n}(Z_{i})\right|\|\mathbb{E}_{n}\left[\psi^{k_{n}}(Z_{i})\psi^{k_{n}}(Z_{i})^{\top}\right]\|_{2}\\
&\lesssim_{p}\|\widehat{Q}_{z}\|_{2}O_{p}\left(\sqrt{\log np}\left(n^{1/r_{\epsilon}}+c_{k_{n}}\ell_{k_{n}}\right)\left(s_{0}\sqrt{\log p/n}+\sqrt{\xi_{0}^{2}(k_{n})k_{n}/n}+c_{k_{n}}\ell_{k_{n}}\right)\right)=o_p(1),
\end{align*}
where the second inequality follows from the results in Theorem \ref{thm1} and the last equality follows from the condition in Theorem \ref{thminf2}. 
Similarly,
\begin{align*}
&\|\hat{M}\mathbb{E}_{n}\left[\left(\hat{\sigma}_{i}^{2}-\sigma_{i}^{2}\right)X_{i}X_{i}^{\top}\right]\hat{M}^{\top}\|_{2}\leq\sqrt{k_{n}}\|\hat{M}\mathbb{E}_{n}\left[\left(\hat{\sigma}_{i}^{2}-\sigma_{i}^{2}\right)X_{i}X_{i}^{\top}\right]\hat{M}^{\top}\|_{\infty}\\
&\leq\sqrt{k_{n}}\|\hat{M}\|_{1}^{2}\|\mathbb{E}_{n}\left[\left(\hat{\sigma}_{i}^{2}-\sigma_{i}^{2}\right)X_{i}X_{i}^{\top}\right]\|_{\infty}
=\sqrt{k_{n}}s_{m}^{2}\left(s_{0}\sqrt{\frac{\xi_{0}^{2}(k_{n})\log p}{n}}+\ell_{k_{n}}c_{k_{n}}\right)
\end{align*}
so we have $\|\hat{\Omega}_{f}-\tilde{\Omega}_{f}\|_{2}=o_p(1).$

To bound $\|\tilde{\Omega}_{f}-{\Omega}_{f}\|_{2}=o_p(1)$, note that 
\[
\|\left(\mathbb{E}_{n}-\mathbb{E}\right)\left[\sigma_{i}^{2}\psi^{k_{n}}(Z_{i})\psi^{k_{n}}(Z_{i})^{\top}\right]\|_{2}=O_{p}\left((1+\ell_{k_{n}}c_{k_{n}})\sqrt{\frac{\xi_{0}^{2}(k_{n})\log k_{n}}{n}}\right)
\]
by Theorem 4.6 in \cite{BCCK2015}. Because
\[
\max_{1\leq j\leq p}\left|\frac{1}{n}\sum_{i=1}^{n}\left(X_{ij}^{2}\sigma_{i}^{2}-\mathbb{E}\left[X_{ij}^{2}\sigma_{i}^{2}\right]\right)\right|=O_{p}\left(\sqrt{\frac{\log p}{n}}\right),
\]
then  
\begin{align*}
&\|\hat{M}\mathbb{E}_{n}\left[\sigma_{i}^{2}X_{i}X_{i}^{\top}\right]\hat{M}^{\top}-M\mathbb{E}\left[\sigma_{i}^{2}X_{i}X_{i}^{\top}\right]M^{\top}\|_{2}\\
&\leq\sqrt{k_{n}}\|\hat{M}-M\|_{1}\|M_{1}\|\|\mathbb{E}_{n}\left[\sigma_{i}^{2}X_{i}X_{i}^{\top}\right]\|_{\infty}+\|M\|_{1}^{2}\|\left(\mathbb{E}_{n}-\mathbb{E}\right)\left[\epsilon_{i}^{2}X_{i}X_{i}^{\top}\right]\|_{\infty}\\
&=O_{p}\left(\sqrt{k_{n}}s_{m}^{2}\sqrt{\log p/n}\right)=o_p(1).
\end{align*}
The conclusion follows from triangular inequality.

\end{proof}

\begin{lem}\label{lem_S2}
Suppose that conditions in Theorem \ref{thminf1} and \ref{thminf2} are satisfied,  then

\[s_w\left\|\E_n \left[(\hat S_i - S_i)\tilde X_{i}\right]\right\|_\infty = o_p(1/\sqrt{n}) \]
\[\left\|\E_n \left[(\hat S_i - S_i)X_{i}\right]\right\|_\infty = o_p(1/\sqrt{n}) \]
\[\sqrt{k_n}\left\|\E_n \left[(\hat S_i - S_i)(\psi^{k_n}(Z_i)-\hat MX_i)\right]\right\|_\infty = o_p(1/\sqrt{n}) \]
\end{lem}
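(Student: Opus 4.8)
The plan is to reuse the six-term decomposition of $\hat S_i - S_i$ derived in the proof of Lemma \ref{lem_S}, namely $\hat S_i - S_i = (E2{:}1) + (E2{:}2) + (E3{:}1) + (E3{:}2) + (E4{:}1) + (E4{:}2)$, and to establish all three displays in parallel by bounding, for a generic weight $a_i \in \{\tilde X_i,\, X_i,\, \psi^{k_n}(Z_i) - \hat M X_i\}$ and matching prefactor $\kappa \in \{s_w,\, 1,\, \sqrt{k_n}\}$, the quantity $\kappa\,\|\mathbb{E}_n[(\hat S_i - S_i)a_i]\|_\infty = o_p(n^{-1/2})$. The six terms fall into two groups handled by different mechanisms: the single-error terms (E2:1), (E2:2), (E3:1), (E3:2), each carrying one nuisance error multiplied by a conditionally mean-zero factor, and the cross terms (E4:1), (E4:2), which are genuine products of two nuisance errors.

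First I would treat the single-error terms. The role of cross-fitting (sample $S1$ for the nuisances, $S2$ for $\hat\beta,\hat f$) is that, for instance, $(E2{:}1) = r^{\pi}_{ni} D_i \epsilon_{1i}$ with $r^\pi_{ni} = 1/\hat\pi_i - 1/\pi_i$ estimated on $S1$, so that $\mathbb{E}[(E2{:}1)\,a_{ij}\mid S1, \{X,Z\}] = r^\pi_{ni}D_i a_{ij}\,\mathbb{E}[\epsilon_{1i}\mid W_i] = 0$ by Assumption \ref{assum:error}; the terms (E3:1), (E3:2) are mean-zero for the analogous reason that $1 - D_i/\pi_i$ is centered given $W_i$. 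Conditionally on the first-stage sample and the design, each display is thus a weighted average of independent mean-zero variables with finite moments of order $r_\epsilon > 4$, so a Bernstein-type maximal inequality (Lemma 14.15 of \cite{buhlmann2011statistics}, or Lemmas E.1--E.2 of \cite{chernozhukov2017central}) gives $\|\mathbb{E}_n[(E2{:}1)a_i]\|_\infty \lesssim_p \sqrt{(\log d)/n}\,\big(\mathbb{E}_n[(r^\pi_{ni})^2]\big)^{1/2}$, where $d = p$ for $a_i \in \{\tilde X_i, X_i\}$ and $d = k_n$ for the $\psi^{k_n}(Z_i)$ component of the third weight. The essential point, and the difference from Lemma \ref{lem_S}, is that one must retain the $L_2$ norm of the nuisance error rather than its crude supremum: the refined rates of Assumption \ref{assum:splitting}(i)--(iii) --- $o_p(1/(s_w\log p))$ against the $p$-type maximal factor and $o_p(1/(k_n\log k_n))$ against the $k_n$-type factor --- are calibrated so that, after multiplication by $\kappa$, each single-error contribution is $o_p(n^{-1/2})$.

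Next I would treat the cross terms. Writing $(E4{:}1) = -D_i\, r^{\Phi_1}_{ni}\, r^{\pi}_{ni}$ with $r^{\Phi_1}_{ni} = \hat\Phi_1(W_i) - \Phi_1(W_i)$, Cauchy--Schwarz gives $|\mathbb{E}_n[(E4{:}1)a_{ij}]| \le \big(\mathbb{E}_n[(r^\pi_{ni})^2 (r^{\Phi_1}_{ni})^2]\big)^{1/2}\big(\mathbb{E}_n[a_{ij}^2 D_i]\big)^{1/2}$; since $\mathbb{E}_n[a_{ij}^2] = O_p(1)$ for $a_i \in \{\tilde X_i, X_i, \psi^{k_n}(Z_i)\}$, the product-rate conditions of Assumption \ref{assum:splitting}(iv)--(v) --- whose $s_w^{-2}n^{-1}$ and $k_n^{-1}n^{-1}$ scalings are designed precisely to cancel the prefactors $s_w$ and $\sqrt{k_n}$ --- render these terms $o_p(n^{-1/2})$, and (E4:2) is symmetric. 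For the third display there is the extra wrinkle that the weight $\psi^{k_n}(Z_i) - \hat M X_i$ contains the estimated $\hat M$: I would split $\hat M = M + (\hat M - M)$, reduce the $M X_i$ piece to the $X_i$-weight bound already proved for the second display (using the row-sparsity $\|M\|_1 \lesssim s_m$), and control the $(\hat M - M)X_i$ piece through $\|\hat M - M\|_1 = O_p\big(s_m\sqrt{(\log k_n + \log p)/n}\big)$ from Lemma \ref{lemmaM} combined with $\|\mathbb{E}_n[(\hat S_i - S_i)X_i]\|_\infty$ and the rate conditions in Assumption \ref{assum: rate_thminf2}(i).

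The main obstacle is exactly this sharpening from the $O_p(\sqrt{\log p/n})$ bound of Lemma \ref{lem_S} to $o_p(n^{-1/2})$: a supremum bound on the nuisance error no longer suffices, and one must exploit the conditional mean-zero structure to trade it for the product $\sqrt{(\log d)/n}\cdot(\text{rate})^{1/2}$, then verify that every maximal-inequality log factor and every prefactor ($s_w$ or $\sqrt{k_n}$) is absorbed by the calibrated rates of Assumption \ref{assum:splitting} --- the bookkeeping here is delicate and is where the specific $s_w$, $s_w^2$, and $k_n$ powers in that assumption matter. A secondary but genuine subtlety is justifying the orthogonality (conditional mean-zero) property when the weights $\tilde X_i$ and $\psi^{k_n}(Z_i) - \hat M X_i$ themselves depend on the estimation sample $S2$ through $P_Z$ and $\hat M$; this is precisely where the independence supplied by cross-fitting, rather than by in-sample residualization, does the decisive work.
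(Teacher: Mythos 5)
Your proposal is correct and follows essentially the same route as the paper's own proof: the identical six-term decomposition of $\hat S_i - S_i$ from Lemma \ref{lem_S}, a conditional-on-$S1$ Bernstein/union-bound argument for the single-error terms that retains the empirical $L_2$ norm $\|r^{\pi}_{ni}\|_n$ (rather than the crude supremum) and is then absorbed by the calibrated rates of Assumption \ref{assum:splitting}(i)--(iii), and Cauchy--Schwarz with Assumption \ref{assum:splitting}(iv)--(v) for the cross terms. Your one genuine addition---splitting $\hat M = M + (\hat M - M)$ and invoking Lemma \ref{lemmaM} for the third display---is actually more careful than the paper, which applies the Bernstein bound directly to the estimated weight $\psi^{k_n}(Z_i) - \hat M X_i$ without addressing the dependence of $\hat M$ on the estimation sample.
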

\proof
Similar to the proof in Lemma \ref{lem_S}, we consider the interaction of function $F_1 = \tilde X_{i}$, $F_2 =  X_{i}$ and $F_3 = (\psi^{k_n}(Z_i)-\hat MX_i)$ with each component of $\hat S_i - S_i$. With Bernstein inequality and union bound,
\[\mathbb{P}\left(\max_{1 \leq j \leq p}\left|\frac{1}{n}\sum_{i = 1}^n r_{ni}^{\pi} F_{kj} D_i(\epsilon_{1,i}) \right| \geq C_{\pi}\|r_{ni}^{\pi}\|_n \sqrt{(t+\log p)/n}) \right) \leq \exp(-t), \quad \text{  for } k = 1, 2 \text{ and } 3\]
Thus, when $k = 1$, taking $t = \log p$ and from Assumption \ref{assum:splitting}, with probability at least $1-1/p$, $s_w\|\E_n (F_1 \cdot (E2:1))\|_\infty = o_p(1/\sqrt{n})$. When $k=2$, we can apply the similar argument and with probability at least $1-1/p$, $\|\E_n(F_2 \cdot (E2:1))\|_\infty = o_p(1/\sqrt{n})$. And again when $k=3$, with probability at least $1-1/k_n$, $\sqrt{k_n}\|\E_n(F_3 \cdot (E2:1))\|_\infty = o_p(1/\sqrt{n})$. The same logic in Lemma \ref{lem_S} will lead to the rest of the terms and we omitted those here.   \\
\qed

\begin{lem}\label{lemmaM}
Let $M_j^{\top} = \mathbb{E}(\psi^{k_n}_j(Z_i)X_i^{\top})\{\mathbb{E}\left[X_iX_i^{\top}\right]\}^{-1}$, suppose that conditions in Theorem \ref{thminf2} are satisfied, then 
\[\|M - \hat M\|_1 = O_p(s_m\sqrt{(\log k_n +\log p) /n}).\]	
\end{lem}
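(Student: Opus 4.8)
The plan is to treat each row of $\hat M$ as a separate Dantzig selector and then aggregate the $k_n$ row-wise bounds. Fix $j\in\{1,\dots,k_n\}$ and write $\hat\Sigma_X=\mathbb{E}_n[X_iX_i^\top]$, $\Sigma_X=\mathbb{E}[X_iX_i^\top]$, and $u_{ji}:=\psi_j^{k_n}(Z_i)-X_i^\top M_j$. By the definition $M_j^\top=\mathbb{E}[\psi_j^{k_n}(Z_i)X_i^\top]\Sigma_X^{-1}$, the residual satisfies the population orthogonality $\mathbb{E}[X_iu_{ji}]=0$, which plays the role of the noise term in an ordinary Dantzig-selector regression. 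The key first step is to show that the true $M_j$ is feasible for the program \eqref{eqhatm} with high probability. Since $\hat\Sigma_X M_j-\mathbb{E}_n[\psi_j^{k_n}(Z_i)X_i]=-\mathbb{E}_n[X_iu_{ji}]$, feasibility of $M_j$ reduces to controlling $\|\mathbb{E}_n[X_iu_{ji}]\|_\infty$; crucially, phrasing the constraint through the single residual $u_{ji}$ avoids paying a sparsity factor, since it does not split into separate fluctuations of $\hat\Sigma_X$ and of $\mathbb{E}_n[\psi_j^{k_n}(Z_i)X_i]$. Using the normalization $\mathbb{E}[\psi_j^2]=1$ from Assumption \ref{assfeigen}, the variables $X_{ik}u_{ji}$ are mean zero with controlled tails, so Bernstein's inequality together with a union bound over all $k_n p$ pairs $(j,k)$ (the same machinery used for the analogous terms in Lemmas \ref{lem_iotav} and \ref{lem_S}, or Lemma 14.15 of \cite{buhlmann2011statistics}) gives $\max_{j,k}|\mathbb{E}_n[X_{ik}u_{ji}]|=O_p(\sqrt{(\log k_n+\log p)/n})$. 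Hence choosing $\lambda''\asymp\sqrt{(\log k_n+\log p)/n}$ (consistent with $\lambda''\gtrsim\sqrt{\log p/n}$) makes every $M_j$ simultaneously feasible on an event of probability tending to one.

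On that event the standard Dantzig argument applies row by row. Since $\hat M_j$ minimizes the $\ell_1$ norm subject to the constraint while $M_j$ is feasible, $\|\hat M_j\|_1\le\|M_j\|_1$, which yields the cone inequality $\|h_{j,S_j^c}\|_1\le\|h_{j,S_j}\|_1$ for $h_j:=\hat M_j-M_j$ and $S_j:=\mathrm{supp}(M_j)$, and therefore $\|h_j\|_1\le 2\sqrt{s_j}\,\|h_{j,S_j}\|_2$ with $s_j:=\|M_j\|_0$. Because both $\hat M_j$ and $M_j$ satisfy the constraint, $\|\hat\Sigma_X h_j\|_\infty\le 2\lambda''$, and Hölder's inequality gives $h_j^\top\hat\Sigma_X h_j\le\|h_j\|_1\|\hat\Sigma_X h_j\|_\infty\le 2\lambda''\|h_j\|_1$. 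Combining a restricted-eigenvalue lower bound on $\hat\Sigma_X$ with the cone inequality then delivers $\|h_j\|_1=O_p(s_j\lambda'')$, exactly as in Theorem 7.1 of \cite{Bickel2009}.

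The remaining ingredient is the restricted-eigenvalue condition for $\hat\Sigma_X$, which I would establish just as in Lemma \ref{lem:eigenX}. Using the orthogonal decomposition $X_i=\bar X_i+\Pi_{X_i|Z_i}$ one has $\Sigma_X=\Sigma_{\bar X}+\Sigma_\Pi$, so Assumption \ref{asseigen}(iii) gives $\Lambda_{\min}(\Sigma_X)\ge\Lambda_{\min}(\Sigma_{\bar X})\wedge\Lambda_{\min}(\Sigma_\Pi)$ bounded away from zero, and the population Gram matrix satisfies a restricted eigenvalue bound with some constant $\kappa>0$. Transferring this to $\hat\Sigma_X$ uses $|h^\top(\hat\Sigma_X-\Sigma_X)h|\le\|h\|_1^2\|\hat\Sigma_X-\Sigma_X\|_{\max}$, the cone bound $\|h\|_1^2\le 4 s_j\|h_{j,S_j}\|_2^2$, and the sub-Gaussian concentration $\|\hat\Sigma_X-\Sigma_X\|_{\max}=O_p(\sqrt{\log p/n})$; the condition $s_m\sqrt{\log p/n}=o(1)$ implied by Assumption \ref{assum: rate_thminf2}(i) guarantees $h^\top\hat\Sigma_X h\ge(\kappa/2)\|h_{j,S_j}\|_2^2$ uniformly over all supports of size at most $\max_j s_j$ on an event of probability tending to one.

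Summing the row-wise bounds then finishes the proof: $\|M-\hat M\|_1=\sum_{j=1}^{k_n}\|h_j\|_1=O_p\big(\lambda''\sum_{j=1}^{k_n}s_j\big)=O_p(s_m\lambda'')=O_p\big(s_m\sqrt{(\log k_n+\log p)/n}\big)$, where $\sum_j s_j=s_m$ is the total number of nonzero entries of $M$. The main obstacle is the first step: one must control $\mathbb{E}_n[X_iu_{ji}]$ in $\ell_\infty$ uniformly over the $k_n$ rows (this is what produces the $\log k_n$ in the rate) while keeping the dependence on sparsity linear rather than quadratic, which is precisely why feasibility should be analyzed through the single projection residual $u_{ji}$ rather than through the separate sampling errors of the empirical second moments.
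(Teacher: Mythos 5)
Your proposal is correct and follows essentially the same route as the paper: feasibility of each $M_j$ via the population orthogonality of the projection residual $\upsilon_i$, a Bernstein-type bound with a union bound over the $k_n$ rows (producing the $\log k_n$ term), the Dantzig-selector analysis of Theorem 7.1 in \cite{Bickel2009} applied row by row with $\lambda''\gtrsim\sqrt{(\log p+\log k_n)/n}$, and aggregation of the row-wise $\ell_1$ bounds into $O_p(s_m\lambda'')$. Your write-up is in fact more complete than the paper's, since you explicitly verify the restricted-eigenvalue condition for $\hat\Sigma_X$ and spell out the cone argument and the summation $\sum_j s_j=s_m$, all of which the paper leaves implicit in its citation of \cite{Bickel2009}.
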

\begin{proof}
Since 
\[\psi_j^{k_n}(Z_i) = M_j^{\top}X_i + (\psi_j^{k_n}(Z_i) - \mathbb{E}(\psi^{k_n}_j(Z_i)X_i^{\top})\{\mathbb{E}X_i^{\otimes 2}\}^{-1}X_i),\]
we define $\upsilon_i := \psi_j^{k_n}(Z_i) - \mathbb{E}(\psi^{k_n}_j(Z_i)X_i^{\top})\{\mathbb{E}\left[X_iX_i^{\top}\right]\}^{-1}X_i)$ such that
\begin{align*}
v_iX_i^{\top} &= \psi_j^{k_n}(Z_i)X_i^{\top} - \mathbb{E}(\psi^{k_n}_j(Z_i)X_i^{\top})\{\mathbb{E}X_i^{\otimes 2}\}^{-1}X_iX_i^{\top} \\
& = (\psi_j^{k_n}(Z_i)X_i^{\top} - \mathbb{E}(\psi^{k_n}_j(Z_i)X_i^{\top})) + \mathbb{E}(\psi^{k_n}_j(Z_i)X_i^{\top}) (I_{p\times p} - \{\mathbb{E}X_i^{\otimes 2}\}^{-1}X_iX_i^{\top}).
\end{align*}
Thus $\mathbb{E}(\upsilon_iX_i^{\top} ) = 0$, and the event $\mathcal{B} = \|\mathbb{E}_n(\upsilon_iX_i^{\top} )\|_\infty <\lambda''$ has probability at least $1-\exp(-cn\lambda''^2)$. Equation \eqref{eqhatm} is thus a linear Dantzig selector as defined in Theorem 7.1 in \cite{Bickel2009}. Therefore
\[\mathbb{P}(\|M - \hat M\|_1 > \lambda'') \leq  \sum_{j=1}^{K_n} \mathbb{P}(\|M_j - \hat M_j\|_1 >\lambda'') \leq \exp\left(\log k_n -cn\lambda''^2\right) \]  
By choose $\lambda'' \gtrsim \sqrt{(\log p +\log k_n)/n}$, with probability at least $1 - \exp(-c_1\log(k_n) - c_2 \log(p))$,
\[\|M - \hat M\|_1 = O_p(s_m\sqrt{(\log k_n +\log p)/n}).\]	
\end{proof}

\bibliographystyle{imsart-nameyear}
\bibliography{diffindiff}
\end{document}